\documentclass{lmcs}
\pdfoutput=1

\usepackage{lastpage}
\lmcsdoi{19}{4}{24}
\lmcsheading{}{\pageref{LastPage}}{}{}%
{Sep.~12,~2022}{Dec.~15,~2023}{}


\keywords{probabilistic programs, visibly pushdown automata, temporal logic, CaRet}

\usepackage{hyperref}
\usepackage{cleveref}
\usepackage[utf8]{inputenc}
\usepackage{amssymb,amsfonts,amsmath,amsthm}
\usepackage{mathtools}
\usepackage{nicefrac}
\usepackage{booktabs}
\usepackage{adjustbox}
\usepackage{bm}
\usepackage{wrapfig}
\usepackage{soul}
\usepackage{makecell}
\usepackage{tikz}
\usetikzlibrary{cd,decorations.pathreplacing}
\usetikzlibrary{arrows,decorations.pathmorphing,positioning,fit}
\usetikzlibrary{trees,shapes}
\usetikzlibrary {automata, positioning}
\usetikzlibrary{tikzmark}


\newenvironment{highlight}{%
    \begin{center}%
    \medskip%
    \begin{minipage}{0.865\textwidth}%
    \emph%
    \bgroup%
}{%
    \egroup%
    \end{minipage}%
    \medskip%
    \end{center}%
}

\newcommand{\Runs}{Runs}

\newcommand{\V}[2]{V_{#1}^{(#2)}}

\newcommand{\qeddef}{\hfill$\triangle$}

\tikzstyle{callState} = [state,rectangle,minimum size=8mm]
\tikzstyle{retState} = [state,diamond,minimum size=9mm]
\tikzstyle{myautomaton} = [shorten >=1pt, >=stealth',auto,initial text=] 
\tikzstyle{myBox} = [state,rectangle]

\newcommand{\transStackThree}[3]{#1\\[-2pt]#2\\[-2pt]#3}


\newcommand{\Nats}{\mathbb{N}_0}
\newcommand{\Rats}{\mathbb{Q}}
\newcommand{\Dist}{\mathit{Dist}}
\newcommand{\condProb}[2]{\PP(#1 \mid #2)}
\renewcommand{\prob}{p} 

\newcommand{\stack}{\gamma}
\newcommand{\existsInfty}{\overset{\infty}{\exists}}
\newcommand{\partition}[3]{#1 \uplus #2 \uplus #3}
\newcommand{\set}[1]{\{#1\}}
\newcommand{\Set}[2]{\{\,#1 \,\mid\,#2\,\}}


\newcommand{\automProd}[2]{#1 \times #2}
\newcommand{\pstate}[2]{(#1,#2)} 
\newcommand{\psymb}[2]{\langle#1,#2\rangle} 

\newcommand{\npda}{\mathcal{A}}
\newcommand{\npdaStates}{S} 
\newcommand{\npdaTrans}{\delta}
\newcommand{\abStack}{\Gamma}
\newcommand{\abStackNoBot}{\abStack_{\text{-}\botStack}}

\newcommand{\sInit}{s_0}
\newcommand{\sInitNpda}{s_0}
\newcommand{\botStack}{\bot}
\newcommand{\npdaInitCustom}[6]{(#1, \, #2, \, #3, \, #4, \, #5, \, #6)}
\newcommand{\npdaInit}{\npdaInitCustom{\npdaStates}{\sInit}{\abStack}{\botStack}{\npdaTrans}{\ab}}
\newcommand{\run}{\rho}

\newcommand{\dpda}{\mathcal{D}} 
\newcommand{\dvpa}{\mathcal{D}} 
\newcommand{\transCall}[4]{#1 \xrightarrow{#2} #3#4}
\newcommand{\transCallAutom}[5]{#1 \xrightarrow{#2}_{#3} #4#5}
\newcommand{\transInt}[3]{#1 \xrightarrow{#2} #3}
\newcommand{\transIntAutom}[4]{#1 \xrightarrow{#2}_{#3} #4}
\newcommand{\transRet}[4]{#1#2 \xrightarrow{#3} #4}
\newcommand{\transRetAutom}[5]{#1#2 \xrightarrow{#3}_{#4} #5}

\newcommand{\buchiset}{F}
\newcommand{\prioFun}{\Omega}
\newcommand{\prioval}{k} 
\newcommand{\accSet}[1]{\mathit{Parity}_{#1}} 

\newcommand{\ppda}{\Delta}
\newcommand{\ppdaInit}{(\ppdaStates, \, \sInitPpda, \, \abStack, \, \botStack, \, \ppdaTransFun, \, \ab, \, \labeling)}
\newcommand{\ppdaStates}{Q}
\newcommand{\ppdaStatesCall}{\ppdaStates_{\callSubscript}}
\newcommand{\ppdaStatesRet}{\ppdaStates_{\retSubscript}}
\newcommand{\ppdaStatesInt}{\ppdaStates_{\intSubscript}}
\newcommand{\ppdaTransFun}{P}
\newcommand{\Pint}{\ppdaTransFun_{\intSubscript}}

\newcommand{\Pret}{\ppdaTransFun_{\retSubscript}}
\newcommand{\ppdaTrans}[3]{#1 \xrightarrow{#2} #3}
\newcommand{\ppdaTransCall}[4]{\ppdaTrans{#1}{#2}{#3#4}}
\newcommand{\ppdaTransInt}[3]{\ppdaTrans{#1}{#2}{#3}}
\newcommand{\ppdaTransRet}[4]{\ppdaTrans{#1#2}{#3}{#4}}
\newcommand{\sInitPpda}{q_0}

\newcommand{\labeling}{\lambda}

\newcommand{\AP}{{AP}}

\newcommand{\PP}{\mathbb{P}}

\newcommand{\nonTermProb}[1]{[#1{\uparrow}]}
\newcommand{\termProbFromTo}[2]{[#1{\downarrow}#2]}
\newcommand{\ppdaMcSemantic}[1]{\mathcal{D}_{#1}}

\newcommand{\ab}{\Sigma}
\newcommand{\callSubscript}{\mathsf{call}}
\newcommand{\retSubscript}{\mathsf{ret}}
\newcommand{\intSubscript}{\mathsf{int}}
\newcommand{\visTypeMetaSubscript}{\mathsf{type}}
\newcommand{\abCall}{\ab_{\callSubscript}}
\newcommand{\abRet}{\ab_{\retSubscript}}
\newcommand{\abInt}{\ab_{\intSubscript}}

\newcommand{\word}{w}

\newcommand{\sh}[1]{sh(#1)} 

\newcommand{\lang}[1]{\mathcal{L}(#1)} 
\newcommand{\langConcat}[2]{#1.#2}

\newcommand{\stepi}[2]{\mathit{step}_{#1}(#2)} 

\newcommand{\wordRestr}[2]{#1{\downarrow}_{#2}}
\newcommand{\stepRestr}[1]{\wordRestr{#1}{\mathit{Steps}}}
\newcommand{\footprint}[1]{#1 {\Downarrow} _{\mathit{Steps}}}

\newcommand{\stateAtPos}[2]{#1@#2}
\newcommand{\stepAtPos}[1]{step@#1}
\newcommand{\stackAt}[1]{sh@#1}
\newcommand{\stepMC}[1]{\mathcal{M}_{#1}}

\newcommand{\globally}{\square}
\newcommand{\finally}{\lozenge}
\renewcommand{\next}{\bigcirc}
\newcommand{\gnext}{\next^g}
\newcommand{\anext}{\next^a}
\newcommand{\caller}{\next^-}
\newcommand{\until}{\mathcal{U}}
\newcommand{\symboluntil}[1]{\mathcal{U}^{#1}}
\newcommand{\guntil}{\symboluntil{g}}
\newcommand{\auntil}{\symboluntil{a}}
\newcommand{\calleruntil}{\symboluntil{-}}
\newcommand{\symbolsucc}[3]{\mathit{succ}^{#1}_{#2}(#3)}
\newcommand{\asucc}[2]{\symbolsucc{a}{#1}{#2}}
\newcommand{\callersucc}[2]{\symbolsucc{-}{#1}{#2}}
\newcommand{\matchingReturn}[2]{\mathit{MR}_{#1}(#2)}
\newcommand{\undefined}{\ensuremath{\mathit{undef}}}
\newcommand{\typeset}{\set{\callSubscript, \intSubscript, \retSubscript}}


\newcommand{\PSPACE}{\mathsf{PSPACE}}
\newcommand{\PTIME}{\mathsf{P}}
\newcommand{\EXPTIME}{\mathsf{EXPTIME}}
\newcommand{\TwoEXPTIME}{\mathsf{2EXPTIME}}
\newcommand{\EXPSPACE}{\mathsf{EXPSPACE}}
\newcommand{\TwoEXPSPACE}{\mathsf{2EXPSPACE}}

\renewcommand{\O}{\mathcal{O}}

\newcommand{\bitmorespace}[1]{\,#1\,}
\newcommand{\morespace}[1]{~#1~}
\newcommand{\waymorespace}[1]{\quad #1 \quad}

\newcommand{\lland}{\morespace{\land}}
\newcommand{\qland}{\waymorespace{\land}}

\newcommand{\llor}{\morespace{\lor}}

\newcommand{\bmid}{\bitmorespace{\mid}}

\newcommand{\ccoloneqq}{\morespace{\coloneqq}}

\newcommand{\eeq}{\morespace{=}}
\newcommand{\qeq}{\waymorespace{=}}


\newenvironment{theorem}{\begin{thm}}{\end{thm}}
\crefname{thm}{theorem}{theorems}
\newenvironment{theoremC}{\begin{thmC}}{\end{thmC}} 
\crefname{thmC}{theorem}{theorems}
\newenvironment{corollary}{\begin{cor}}{\end{cor}}
\crefname{cor}{corollary}{corollaries}
\newenvironment{lemma}{\begin{lem}}{\end{lem}}
\crefname{lem}{lemma}{lemmas}

\crefname{prop}{proposition}{propositions}
\newenvironment{remark}{\begin{rem}}{\end{rem}}
\crefname{rem}{remark}{remarks}
\newenvironment{example}{\begin{exa}}{\end{exa}}
\crefname{exa}{example}{examples}
\newenvironment{definition}{\begin{defi}}{\end{defi}}
\crefname{defi}{definition}{definitions}

\begin{document}

\title[Model Checking Recursive Probabilistic Programs]{Model Checking Temporal Properties of\texorpdfstring{\\}{} Recursive Probabilistic Programs}
\titlecomment{{\lsuper*}This is an extended version of a conference paper with the same title published at FoSSaCS 2022~\cite{fossacs}.}
\thanks{This work is supported by the DFG Research Training Group 2236 UnRAVeL and the ERC Advanced Grant 787914 FRAPPANT}

\author[T.~Winkler]{Tobias Winkler\lmcsorcid{0000-0003-1084-6408}}
\author[C.~Gehnen]{Christina Gehnen\lmcsorcid{0000-0002-6548-3432}}
\author[J.-P.~Katoen]{Joost-Pieter Katoen\lmcsorcid{0000-0002-6143-1926}}

\address{RWTH Aachen University, Aachen, Germany}	
\email{\{tobias.winkler,christina.gehnen,katoen\}@cs.rwth-aachen.de}  

\begin{abstract}
    Probabilistic pushdown automata (pPDA) are a standard operational model for programming languages involving discrete random choices and recursive procedures.
    Temporal properties are useful for specifying the chronological order of events during program execution.
    Existing approaches for model checking pPDA against temporal properties have focused mostly on $\omega$-regular and LTL properties.
    In this paper, we give decidability and complexity results for the model checking problem of pPDA against $\omega$-visibly pushdown languages that can be described by specification logics such as CaRet.
    These logical formulae allow specifying properties that explicitly take the structured computations arising from procedural programs into account.
    For example, CaRet is able to match procedure calls with their corresponding future returns, and thus allows to express fundamental program properties such as total and partial correctness.
\end{abstract}

\keywords{Probabilistic Recursive Programs  \and Model Checking \and Probabilistic Pushdown Automata \and Visibly Pushdown Languages \and CaRet.}

\maketitle


\section{Introduction}
\label{sec:intro}

\emph{Probabilistic programs} extend traditional programs with the ability to flip coins or, more generally, sample values from probability distributions.
These programs can be used to encode randomized algorithms and mechanisms in security~\cite{DBLP:journals/toplas/BartheKOB13} in a natural way.
The interest in probabilistic programs has significantly increased in recent years.
To a large extent, this is due to the search in AI for more expressive and succinct languages than probabilistic graphical models for Bayesian inference~\cite{DBLP:conf/icse/GordonHNR14}.
Probabilistic programs have many applications~\cite{DBLP:journals/corr/abs-1809-10756}.
They are used in, among other areas, machine learning, systems biology, security, planning and control, quantum computing, and software--defined networks.
Probabilistic variants of many programming languages exist.

\paragraph{Recursion}
\emph{Procedural programs} allow for the declaration of procedures---small independent code blocks---and the ability to \emph{call} procedures from one another, possibly in a recursive fashion.
Most common programming languages such as C, Python, or Java support procedures.
It is thus not surprising that recursion is also present in many modern probabilistic programming languages (PPL) such as WebPPL~\cite{dippl} or Church~\cite{DBLP:conf/starai/StuhlmullerG12}.
In fact, there have been numerous approaches to extend Bayesian networks with recursion even before PPL became popular~\cite{DBLP:conf/aaai/PfefferK00,DBLP:journals/amai/Jaeger01,Cassini2008}.
Randomized algorithms such as Hoare's quicksort (see, e.g.,~\cite{DBLP:journals/dam/Karp91}) with random pivot selection can be readily implemented using recursion.
Finally, recursion in form of branching processes is an important tool to model reproduction of cells or molecules in systems biology~\cite{axelrod2015branching}.

\begin{figure}[t]
    \centering
    \begin{minipage}{1\linewidth}
        \begin{minipage}[t]{0.45\linewidth}
            \begin{align*}
            &\texttt{proc} ~ \texttt{void} ~ \mathit{infectYoung}(): \\
            &\qquad y \coloneqq \texttt{uniform}(0,3) \\
            &\qquad \texttt{repeat} ~ y ~ \texttt{times}: \\
            &\qquad \qquad \mathit{infectYoung}() \\
            &\qquad e \coloneqq \texttt{uniform}(0,2) \\
            &\qquad \texttt{repeat} ~ e ~ \texttt{times}: \\
            &\qquad \qquad f \coloneqq \mathit{infectElder}() \\
            &\qquad \texttt{return}
            \end{align*}
        \end{minipage}
        \hspace{1cm}%
        \begin{minipage}[t]{0.45\linewidth}
            \begin{align*}
            &\texttt{proc} ~ \texttt{bool} ~ \mathit{infectElder}(): \\
            &\qquad y \coloneqq \texttt{uniform}(0,1) \\
            &\qquad \texttt{repeat} ~ y ~ \texttt{times}: \\
            &\qquad \qquad \mathit{infectYoung}()  \\
            &\qquad e \coloneqq \texttt{uniform}(0,4) \\
            &\qquad \texttt{repeat} ~ e ~ \texttt{times}: \\
            &\qquad \qquad \mathit{infectElder}() \\
            &\qquad f \coloneqq \texttt{bernoulli}(0.01) \\ 
            &\qquad \texttt{return} ~ f
            \end{align*}
        \end{minipage}
    \end{minipage}
    \caption{
        Recursive probabilistic program modeling the outbreak of an infectious disease.
        $\texttt{uniform}(a,b)$ stands for the \emph{discrete} uniform distribution on $[a,b]$.
    }
    \label{fig:exampleProg}
\end{figure}

\begin{wrapfigure}[8]{r}{0.37\textwidth}
    \centering
    \setlength{\tabcolsep}{1em}
    \renewcommand{\arraystretch}{1.2}
    \begin{tabular}{c | c c}
         & Y & E \\
        \hline
        Y & 1.5 & 1 \\
        E & 0.5 & 2
    \end{tabular}
    \caption{Example infection rates by age groups.}
    \label{fig:tableRates}
\end{wrapfigure}

\paragraph{Motivating example}
This paper studies the automated verification of \emph{probabilistic pushdown automata} (pPDA)~\cite{esparzaMCPPDA-lics} as an explicit-state operational model of procedural probabilistic programs against temporal specifications.
As a motivating example we consider a simple epidemiological model for the outbreak of an infectious disease in a large population where the number of susceptible individuals can be assumed to be infinite.
Our example model distinguishes young and elderly persons.
Each affected individual infects a uniformly distributed number of others, with varying rates (expected values) according to the age groups (\Cref{fig:tableRates}).
The fatality rate for infected elderly and young persons is 1\% and 0\%, respectively.
Procedure $\mathit{infectElder()}$ returns a Boolean in order to signal to its callers whether the infection has led to fatality.
Initially, we assume there is a single infected young person, i.e., the overall program is started by calling $\mathit{infectYoung()}$.
It is an easy exercise to specify this model as a discrete probabilistic program with mutually recursive procedures (\Cref{fig:exampleProg}).
Note that this program can be easily amended to more realistic scenarios involving, e.g., more age or gender groups, hospitalization rate, etc.

The behavior of such recursive probabilistic programs can be naturally described by pPDA.
Roughly, the \emph{local} states of the procedures---the values of the variables in the procedure's scope and the position of the program counter---constitute both the state space and the stack alphabet of the automaton.
Procedure calls correspond to push transitions in the pPDA in such a way that the program's \emph{procedure stack} is simulated by the automaton's pushdown stack, i.e., the caller's current local state is saved on top of the stack.
Accordingly, \emph{returning} from a procedure corresponds to taking a pop transition in order to restore the local state of the caller.
Returning a value can be handled similarly.
Clearly, if the reachable local state spaces of the involved procedures are finite, then the resulting automaton will be finite as well.
We refer to~\cite{DBLP:reference/mc/AlurBE18} for more details.

A number of natural questions such as ``Will the virus eventually become extinct?'' (termination probability) or ``What is the expected number of fatalities?'' (expected costs) are decidable for finite pPDA (see~\cite{brazdilSurvey} for a survey).
In this work, we focus on \emph{temporal properties}, i.e., questions that involve reasoning about the chronological order of events during the epidemic.
An example are chains of infection:
For instance, we might ask
\begin{highlight}%
    What's the probability that eventually a young person with only young persons in their chain of infection passes the virus on to an elderly person who then dies?
\end{highlight}
On the level of the program in \Cref{fig:exampleProg}, this corresponds to the probability of reaching a \emph{global} program configuration where the call stack only contains $\mathit{infectYoung}()$ invocations and during execution of the current $\mathit{infectYoung}()$, the local variable $f$ is eventually set to true.
This requires reasoning about the nestings of calls and returns of a computation.
In fact, in order to decide if $f = \mathit{true}$ in the current procedure, we must ``skip'' over all calls within it and only consider their local return values.
This requirement (and many others) can be naturally expressed in the logic \emph{CaRet}~\cite{caret}, an extension of LTL:
\[
    \finally^g\, (\, \globally^- p_Y ~\land~ p_Y ~\land~ \finally^a f \,) ~.
\]
Here, $p_Y$ is an atomic proposition that holds at states which correspond to being in procedure $\mathit{infectYoung}$, and $f$ indicates that $f = \mathit{true}$.
Intuitively, the above formula states that eventually (outer $\finally^g$), the computation reaches a (global) state where only $\mathit{infectYoung}$ is on the call stack and the current procedure is $\mathit{infectYoung}$ as well ($\globally^- p_Y ~\land~ p_Y$), and moreover the local---aka \emph{abstract}---path \emph{within} in the current procedure reaches a state where $f$ is true ($\finally^a f$).
Such properties are in general context-free but not always regular and thus cannot be expressed in LTL~\cite{caret}.

\paragraph{Contributions}
The contribution of this paper is a solution to the following problem:
\begin{highlight}%
    Given a (finite) pPDA $\ppda$ and an $\omega$-visibly pushdown language (VPL) $\mathcal L$ in terms of either a CaRet formula or an automaton, determine the probability that a random trajectory of $\ppda$ is in $\mathcal L$.
\end{highlight}
The complexity results for the associated decision problems are summarized in~\Cref{fig:resTable}.
As common in the literature, we consider the special case of qualitative, i.e., \emph{almost-sure} model checking separately.
To the best of our knowledge, none of the problems in \Cref{fig:resTable} was known to be decidable before.
The work of~\cite{dubslaff} proved decidability of model checking against deterministic Muller \emph{visibly pushdown automata} (VPA) which capture a strict subset of the CaRet-definable languages~\cite{vpl}.
The most important technical insight of this paper is that two existing (but independently developed) constructions from the literature can be combined to enable effective model checking of pPDA against $\omega$-VPL:
The \emph{deterministic stair-parity VPA} introduced in \cite{vpGames}, and a certain \emph{finite Markov chain} associated with a pPDA~\cite{esparzaMCPPDA-lics}.
We provide some more details in the next paragraph.

\begin{table}[t]
    \caption{
        Complexity results established in this paper.
    }
    \label{fig:resTable}
    \centering
    \setlength{\tabcolsep}{3pt} 
    \begin{tabular}{l l l}
        \toprule
        \emph{$\omega$-VPL given in terms of ...} & \emph{qualitative} & \emph{quantitative} \\
        \midrule
        Deterministic stair-parity VPA~[\Cref{thm:resStPaDVPA}] \quad \quad & in $\PSPACE$ &  in $\PSPACE$ \\[5pt]
        Non-deterministic Büchi VPA~[\Cref{thm:resNVPAQual}] & $\EXPTIME$-compl. \quad\quad& in $\EXPSPACE$ \\[5pt]
        CaRet formula~[\Cref{thm:resCaRet}] & in $\TwoEXPTIME$ & in $\TwoEXPSPACE$ \\
        \bottomrule
    \end{tabular}
\end{table}

\paragraph{Techniques and tools}
We briefly outline our approach which is built on a number of existing constructions and results from the literature.
In order for the model checking problems to be decidable~\cite{dubslaff}, we need to impose a mild \emph{visibility} restriction on $\ppda$, yielding a probabilistic \emph{visibly} pushdown automaton (pVPA).
Just like several previous works on model checking pPDA against $\omega$-regular specifications~\cite{esparzaMCPPDA-lics,BKS05,esparzaMCPPDA}, we follow an automata-based approach (see~\Cref{fig:reductionsOverview}).
More specifically, we first translate $\varphi$ into an equivalent non-deterministic \emph{Büchi} VPA~\cite{vpl} $\npda$ and then determinize it using a procedure introduced by Löding et al.~\cite{vpGames}.
The resulting DVPA $\dvpa$ uses a so-called \emph{stair-parity}~\cite{vpGames} acceptance condition that is strictly more expressive than standard parity or Muller DVPA~\cite{vpl}.
Stair-parity differs from usual parity in that it only considers certain positions---called \emph{steps}~\cite{vpGames}---of an infinite word where the stack height never decreases again.
We then construct a standard product $\automProd{\ppda}{\dvpa}$.
Here, the visibility conditions ensure that the automata synchronize their stack actions, yielding a product automaton that uses \emph{a single stack} instead of two independent stacks, which would lead to undecidability~\cite{dubslaff}.
Finally, we are left with computing a stair-parity acceptance probability in the product.
This is achieved by constructing a specific finite Markov chain associated to $\automProd{\ppda}{\dvpa}$, called \emph{step chain} in this paper.
Intuitively, the step chain ``jumps'' from one step of a run to the next, hence we only need to evaluate \emph{standard parity} on the step chain.
The idea of step chains was introduced by Esparza et al.~\cite{esparzaMCPPDA-lics} who used them to show decidability of the model checking problem against deterministic (non-pushdown) Büchi automata. 
For constructing the step chain, certain \emph{reachability probabilities} in the given pPDA need to be computed.
These probabilities are algebraic numbers (i.e., solutions of polynomial equations) that may be irrational in general.
However, the relevant problems are still decidable via an encoding in the existential fragment of the FO-theory of the reals (ETR)~\cite{esparzaMCPPDA}.

\begin{figure}[t]
    \begin{tikzpicture}[node distance=-3mm and 5mm,shorten >=1pt]
    \tikzstyle{mybox} = [rectangle,draw=black,align=center,rounded corners,inner sep=4pt]
    \tikzstyle{mytrans} = [->]
    \node[mybox] (caret) {CaRet\\formula $\varphi$\\(Def.~\labelcref{def:caret})};
    \node[mybox,right=of caret] (nvpa) {Büchi\\NVPA $\npda$\\(Def.~\labelcref{def:vpa})};
    \node[mybox,right=of nvpa] (dvpa) {Stair-parity\\DVPA $\dpda$\\(Def.~\labelcref{def:stairParity})};
    \node[mybox,above right=of dvpa] (prod) {Product\\$\automProd{\ppda}{\dpda}$\\(Def.~\labelcref{def:prod})};
    \node[mybox,above left=of prod] (ppda) {pVPA $\ppda$\\(Def.~\labelcref{def:pVPA})};
    \node[mybox,left=of ppda,dashed,minimum height=11mm] (program) {Program};
    \node[mybox,right=of prod] (stepchain) {Step chain\\$\stepMC{\automProd{\ppda}{\dpda}}$\\(Def.~\labelcref{def:stepMC})};
    \node[mybox,right=of stepchain] (etr) {ETR};
    
    \draw[mytrans] (caret) -- (nvpa);
    \draw[mytrans] (nvpa) -- (dvpa);
    \draw[mytrans] (dvpa) -- (prod);
    \draw[mytrans] (ppda) -- (prod);
    \draw[mytrans] (program) edge[dashed] (ppda);
    \draw[mytrans] (prod) -- (stepchain);
    \draw[mytrans] (stepchain) -- (etr);
    \end{tikzpicture}
    \caption{Chain of reductions used in this paper.
        ETR stands for \emph{existential theory of the reals}, i.e., the existentially quantified fragment of the FO-theory over $(\mathbb{R}, +, \cdot, \leq)$.}
    \label{fig:reductionsOverview}
\end{figure}

\paragraph{Related work.}
We have already mentioned various works on recursion in probabilistic graphical models and PPL as well as on verifying pPDA and the equivalent model of recursive Markov chains~\cite{DBLP:journals/jacm/EtessamiY09}.
The analysis of these models focuses on reachability probabilities, $\omega$-regular properties or (fragments of) probabilistic CTL, expected costs, and termination probabilities. 
The computation of termination probabilities in recursive Markov chains and variations thereof with non-determinism is supported by the software tool PReMo~\cite{DBLP:conf/tacas/WojtczakE07}. 
Our paper can be seen as a natural extension from checking pPDA against $\omega$-regular properties to $\omega$-visibly pushdown languages.
In contrast to these algorithmic approaches, various deductive reasoning methods have been developed for recursive probabilistic programs. 
Proof rules for recursion were first provided in~\cite{DBLP:phd/ethos/Jones90}, and later extended to proof rules in a weakest-precondition reasoning style~\cite{DBLP:journals/tcs/McIverM01a,DBLP:conf/lics/OlmedoKKM16}.  
The authors of~\cite{DBLP:conf/lics/OlmedoKKM16} also address the connection to pPDA and provide proof rules for expected run-time analysis.
A mechanized method for proving properties of randomized algorithms, including recursive ones, for the Coq proof assistant is presented in~\cite{DBLP:journals/scp/AudebaudP09}.
The Coq approach is based on higher-order logic using a monadic interpretation of programs as probabilistic distributions.

\paragraph{Conference version}
A preliminary version of this article was published at FoSSaCS 2022~\cite{fossacs}.
The present journal version extends the conference paper by the full proofs as well as further examples and explanations.

\paragraph{Paper structure}
This paper is structured as follows.
We review the basics about VPA and CaRet in \Cref{sec:prelims}.
\Cref{sec:pvpa} introduces probabilistic \emph{visibly} pushdown automata (pVPA).
The stair-parity DVPA model checking procedure is presented in \Cref{sec:modelCheckingDVPA}, and the results for Büchi VPA and CaRet in \Cref{sec:modelCheckingNVPACaRet}.
We conclude the paper in \Cref{sec:conclusion}.


\section{Visibly Pushdown  Languages}
\label{sec:prelims}

In this section, we summarize some preliminary results on visibly pushdown languages and their corresponding automata models, and we recall the syntax and semantics of CaRet.

We use the following notation for words.
Given a non-empty alphabet $\ab$, let $\ab^*$ be the set of all finite words (including the empty word $\epsilon$), and let $\ab^\omega$ be the set of all infinite words over $\ab$.
For $i \geq 0$, the $i$-th symbol of a word $\word \in \ab^* \cup \ab^\omega$ is denoted $w(i)$ if it exists.
$|\word|$ denotes the length of $\word$.
For $n\in\Nats$ we write $\ab^n = \{\word \in \ab^* \mid |\word| = n\}$.
For sets of words $A \subseteq \ab^*$ and $B \subseteq \ab^* \cup \ab^\omega$, the concatenation of all words from $A$ with those from $B$ is denoted $\langConcat{A}{B}$.
We also use $\langConcat{a}{B}$ and $\langConcat{A}{b}$ as shorthands for $\langConcat{\set{a}}{B}$ and $\langConcat{A}{\set{b}}$, respectively.

\subsection{Visibly Pushdown Automata}
\label{sec:vpa}

A finite alphabet $\ab$ is called \emph{pushdown alphabet} if it is equipped with a partition $\ab = \partition{\abCall}{\abInt}{\abRet}$ into three---possibly empty---subsets of \emph{call}, \emph{internal}, and \emph{return} symbols.
A \emph{visibly pushdown automaton} (VPA) over $\ab$ is like a standard pushdown automaton with the additional syntactic restriction that reading a call, internal, or return symbol triggers a push, internal, or pop transition, respectively (an internal transition is one that does not change the stack height).
Formally:

\begin{definition}[{VPA~\cite{vpl}}]
    \label{def:vpa}
    Let $\ab$ be a pushdown alphabet.
    A \emph{visibly pushdown automaton (VPA)} over $\ab$ is a tuple
    $\npda = \npdaInit$
    with $\npdaStates$ a finite set of states,
    $\sInit \in \npdaStates$ an initial state,
    $\abStack$ a finite stack alphabet,
    $\botStack \in \abStack$ a special bottom-of-stack symbol, and
    $\npdaTrans = (\npdaTrans_{\callSubscript}, \npdaTrans_{\intSubscript}, \npdaTrans_{\retSubscript})$ a triple of relations such that
    \[
        \npdaTrans_{\callSubscript} \subseteq  (\npdaStates \times \abCall) \times (\npdaStates \times \abStackNoBot) ~,
        \quad
        \npdaTrans_{\intSubscript} \subseteq  (\npdaStates \times \abInt) \times \npdaStates ~,
        \quad
        \npdaTrans_{\retSubscript} \subseteq  (\npdaStates \times \abRet \times \abStack) \times \npdaStates
    \]
    where $\abStackNoBot = \abStack \setminus \set{\botStack}$.
    \qeddef
\end{definition}

For $s, t \in \npdaStates$, $Z \in \abStack$, and $a \in \ab$, we write
$\transCall{s}{a}{t}{Z}$, $\transInt{s}{a}{t}$, $\transRet{s}{Z}{a}{t}$ to indicate that there exist transitions
$(s,a,t,Z) \in \npdaTrans_{\callSubscript}$, $(s,a,t) \in \npdaTrans_{\intSubscript}$, $(s,a,Z,t) \in \npdaTrans_{\retSubscript}$, respectively.

The semantics of a VPA is defined as usual via configurations and runs.
A \emph{configuration} of VPA $\npda$ is a tuple $(s,\stack) \in \npdaStates \times \langConcat{\abStackNoBot^*}{\botStack}$, written more succinctly as $s\stack$ in the sequel.
Intuitively, being in configuration $s\stack$ means that the automaton is currently in state $s$ and has the word $\stack$ on its stack.
If $s \stack = s Z \stack'$ for some $Z \in \abStack$, then $sZ$ is called the \emph{head} of $s \stack$.
A \emph{bottom configuration} is a configuration with head $s \botStack$ for some $s \in \npdaStates$.
Let $\word \in \ab^\omega$ be an infinite input word.
An infinite sequence $\run = s_0\stack_0,s_1\stack_1 \ldots$ of configurations is called a \emph{run} of $\npda$ on $\word$ if $s_0\stack_0 = \sInitNpda\botStack$ and for all $i\geq 0$, exactly one of the following cases applies:
\begin{itemize}
    \item $\word(i) \in \abCall$ and $\stack_{i+1} = Z \stack_i$ for some $Z \in \abStackNoBot$ such that $\transCall{s_i}{\word(i)}{s_{i+1}}{Z}$; or
    \item $\word(i) \in \abInt$ and $\stack_{i+1} = \stack_i$ and $\transInt{s_i}{\word(i)}{s_{i+1}}$; or
    \item $\word(i) \in \abRet$ and $Z \stack_{i+1} = \stack_i$ for some $Z \in \abStackNoBot$ such that $\transRet{s_i}{Z}{\word(i)}{s_{i+1}}$; or
    \item $\word(i) \in \abRet$ and  $\gamma_i=\gamma_{i+1} = \botStack$ and $\transRet{s_i}{\botStack}{\word(i)}{s_{i+1}}$.
\end{itemize}
In other words, if $\npda$ reads a call (or internal) symbol $a$ while being in configuration $s\stack$ and there exists a suitable call transition $\transCall{s}{a}{t}{Z}$ (or internal transition $\transInt{s}{a}{t}$), then a run of $\npda$ may evolve from configuration $s \stack$ to $t Z\stack$ (or $t \stack$, respectively).
Similarly, if $\npda$ reads a return symbol $a$ in configuration $s Z \stack$ where $Z \neq \botStack$ and there is a transition $\transRet{s}{Z}{a}{t}$, then a run can move from $sZ \stack$ to $t \stack$.
Note that invoking a return transition in a bottom configuration $s\botStack$ does not remove the topmost symbol $\botStack$ from the stack.

A \emph{Büchi acceptance condition} for $\npda$ is a subset $\buchiset \subseteq \npdaStates$.
A VPA equipped with a Büchi condition is called a \emph{Büchi VPA}.
An infinite word $\word \in \ab^\omega$ is accepted by a Büchi VPA if there exists a run $s_0\stack_0,s_1\stack_1,\ldots$ of $\npda$ on $\word$ such that $s_i \in F$ for infinitely many $i \geq 0$.
The $\omega$-language of words accepted by a Büchi VPA $\npda$ is denoted $\lang{\npda} \subseteq \ab^\omega$.

\begin{definition}[{$\omega$-VPL~\cite{vpl}}]
    Let $\ab$ be a pushdown alphabet.
    $L \subseteq \ab^\omega$ is an \linebreak \emph{$\omega$-visibly pushdown language ($\omega$-VPL)} if $L = \lang{\npda}$ for a Büchi VPA $\npda$ over $\ab$.
    \qeddef
\end{definition}

A VPA is \emph{deterministic} (DVPA) if the relations $\npdaTrans_{\callSubscript}, \npdaTrans_{\intSubscript}$, and $\npdaTrans_{\retSubscript}$ are total functions, i.e.,
$\npdaTrans_{\callSubscript} \colon  (\npdaStates \times \abCall) \to (\npdaStates \times \abStackNoBot),
\npdaTrans_{\intSubscript} \colon  (\npdaStates \times \abInt) \to \npdaStates$,
and
$
\npdaTrans_{\retSubscript} \colon  (\npdaStates \times \abRet \times \abStack) \to \npdaStates$.
Note that DVPA have exactly one run on each input word.
As for standard NBA, the class of languages recognized by Büchi DVPA is a strict subset of the languages recognized by non-deterministic Büchi VPA.
Unlike in the non-pushdown case, DVPA with Muller or parity conditions are also strictly less expressive than non-deterministic Büchi VPA~\cite{vpl}.
A deterministic automaton model for $\omega$-VPL was given in~\cite{vpGames}.
It uses a so-called \emph{stair-parity} acceptance condition which we explain in the upcoming~\Cref{sec:stair-parity}.

\subsection{Steps and Stair-parity Conditions}
\label{sec:stair-parity}

In the remainder of this section, $\ab$ denotes a pushdown alphabet and $\npda$ a VPA over $\ab$.
Consider a run $\run = s_0\stack_0,s_1\stack_1,\ldots$ of $\npda$ on an infinite word $\word \in \ab^\omega$.
We define the \emph{stack height} of the $i$-th configuration as $\sh{\run(i)} = |\stack_i| - 1$ (i.e., the bottom symbol $\botStack$ does not count to the stack height).
The stair-parity condition relies on the notion of \emph{steps}:
\begin{definition}[Step]
    \label{def:step}
    Let $\run$ be a run of $\npda$.
    Position $i \geq 0$ is a \emph{step} of $\run$ if 
    \[
        \forall n \geq i  \colon\quad  \sh{\run(n)} ~\geq~ \sh{\run(i)}~. \tag*{\qeddef}
    \]
\end{definition}
Intuitively, a step is a position of a run such that there is no future position where the stack height is strictly smaller.
Slightly abusing terminology, we also say that a \emph{configuration} $s_i\stack_i$ of a given run $\run = s_0\stack_0,s_1\stack_1,\ldots$ is a step if position $i$ is a step.

\begin{example}
    \Cref{fig:stepsExample} depicts a DVPA and the initial fragment of its unique run $\run$ on the input word $\tau \, r \, c \, \tau \, \tau \, c \, r \, c^2 \, r^2 \, c^3 \, r^3 \ldots$.
    The step positions are underlined, i.e., positions 0-5, 7, 11, and 17 are steps.
    Note that if $\run(i) = s \botStack$ for some $s \in \npdaStates$ then $i$ is a step, i.e., bottom configurations are always steps.
\end{example}

\begin{figure}[t]
    \centering
    \begin{minipage}{0.36\textwidth}
        \begin{tikzpicture}[myautomaton]
            \node[state,initial,initial text=,initial where=above] (s0) {$s_0$};
            \node[state,right=of s0] (s1) {$s_1$};
            \draw[->] (s0) edge[loop left] node[above,yshift=1mm] {$c,Z$} (s0);
            \draw[->] (s1) edge[loop right] node[align=center,above,yshift=1mm] {\transStackThree{$\botStack,r$}{$Z,r$}{$\tau$}} (s1);
            \draw[->] (s0) edge[bend left] node[align=center] {\transStackThree{$\botStack,r$}{$Z,r$}{$\tau$}} (s1);
            \draw[->] (s1) edge[bend left] node {$c,Z$} (s0);
        \end{tikzpicture}
    \end{minipage}
    \hfill
    \begin{minipage}{0.62\textwidth}
        \setlength{\tabcolsep}{2pt}
        \begin{tabular}{c c c c c c c c c c c c c c c c c c c}
            $\tau$ & $r$ & $c$ & $\tau$ & $\tau$ & $c$ & $r$ & $c$ & $c$ & $r$ & $r$ & $c$ & $c$ & $c$ & $r$ & $r$ & $r$ & ... & \\[-2pt]
            \tiny 0 & \tiny 1 & \tiny 2 & \tiny 3 & \tiny 4 & \tiny 5 & \tiny 6 & \tiny 7 & \tiny 8 & \tiny 9 & \tiny 10 & \tiny 11 & \tiny 12 & \tiny 13 & \tiny 14 & \tiny 15 & \tiny 16 &  \tiny 17 & ...\\[2pt]
            & & &     &     &     &     &     &     &     &     &     &     &     & $Z$ &     &     &    &\\[-2pt]
            & & &     &     &     &     &     &     & $Z$ &     &     &     & $Z$ & $Z$ & $Z$ &     &    &\\[-2pt]
            & & &     &     &     & $Z$ &     & $Z$ & $Z$ & $Z$ &     & $Z$ & $Z$ & $Z$ & $Z$ & $Z$ &    & ... \\[-2pt]
            & & & $Z$ & $Z$ & $Z$ & $Z$ & $Z$ & $Z$ & $Z$ & $Z$ & $Z$ & $Z$ & $Z$ & $Z$ & $Z$ & $Z$ & $Z$ & ...\\[-2pt]        
            $\botStack$ & $\botStack$ & $\botStack$ & $\botStack$ & $\botStack$ & $\botStack$ & $\botStack$ & $\botStack$ & $\botStack$ & $\botStack$ & $\botStack$ & $\botStack$ & $\botStack$ & $\botStack$ & $\botStack$ & $\botStack$ & $\botStack$ & $\botStack$ & ...\\[3pt] 
            $\underline{s_0}$ & $\underline{s_1}$ & $\underline{s_1}$ & $\underline{s_0}$ & $\underline{s_1}$ & $\underline{s_1}$ & $s_0$ & $\underline{s_1}$ & $s_0$ & $s_0$ & $s_1$ & $\underline{s_1}$ & $s_0$ & $s_0$ & $s_0$ & $s_1$ & $s_1$ & $\underline{s_1}$ & ...
        \end{tabular}
    \end{minipage}
    
    \caption{
        Left: An example VPA (in fact, a DVPA) with $\abStack = \set{Z, \botStack}$ over input alphabet $\ab = \partition{\set{c}}{\set{\tau}}{\set{r}}$.
        Transitions labeled $c,Z$ are \emph{call transitions} that push $Z$ on the stack.
        The transitions labeled $\tau$ are \emph{internal transitions}; they ignore the stack completely.
        Transitions labeled $Z,r$ and $\bot, r$ are \emph{return transitions} that are only enabled if $Z$ ($\bot$, respectively) is on top of the stack.
        When executing $Z,r$, the symbol $Z$ is popped from the stack.
        However, the special bottom-of-stack symbol $\bot$ can never be popped (e.g., position 1 in the run).        
        Right: The \emph{unique} run of the DVPA on input word $\tau \, r \, c \, \tau \, \tau \, c \, r \, c^2 \, r^2 \, c^3 \, r^3 \ldots$.
        Steps are underlined.
    }
    \label{fig:stepsExample}
\end{figure}

Steps play a central role in the rest of the paper.
We therefore explain some of their fundamental properties.
Suppose that $\rho$ is a run of $\npda$ on an infinite word $\word \in \ab^*$.
\begin{itemize}
    \item 
    If positions $i < j$ are \emph{adjacent} steps, i.e., there is no step $k$ strictly in between $i$ and $j$, then $\sh{\run(j)} - \sh{\run(i)} \in \set{0,1}$, i.e., the stack height from one step to the next increases by either zero or one.
    \item Each step $i$ has a \emph{next} step $j > i$:
    If the symbol at step $i$ is internal (e.g., $i = 0,3,4$ in \Cref{fig:stepsExample}) or a return (e.g., $i=1$) then the next step is simply the next position $j = i+1$ and the stack height does not increase.
    If the symbol at position $i$ is a call, then one of two cases occurs:
    Either the call has no matching future return (e.g., $i=2$); 
    in this case, the next step is the next position $j = i+1$.
    Otherwise the call is eventually matched (e.g., $i =5, 7, 11$) and the next step $j > i+1$ occurs after the corresponding matching return is read and has the same stack height.
    \item As a consequence, each infinite run has infinitely many steps.
    Notice though that the difference between two adjacent step positions may grow unboundedly as in \Cref{fig:stepsExample}.
    \item The stack height at the steps either grows unboundedly or eventually stabilizes (the latter occurs in \Cref{fig:stepsExample}; the stack heights at the steps induce the sequence $0,0,0,1^\omega$).
\end{itemize}

\begin{remark}
    \label{rem:shOfWords}
    We can also define the steps of a \emph{word} $\word \in \ab^\omega$ as the positions where a run of \emph{any arbitrary} VPA on $\word$ has a step.
    Due to the visibility restriction, the actual behavior of the VPA does not influence the step positions~\cite{vpGames}.
    In other words, the step positions are predetermined by the input word.
    Thus, we can also speak of the stack height $\sh{w(i)}$ of word $w$ at position $i$.
\end{remark}

We need one last notion before defining stair-parity.
The \emph{footprint} of an infinite run $\run = s_0\stack_0,s_1\stack_1,\ldots$ is the infinite sequence $\stepRestr{\run} = s_{n_0} s_{n_1} \ldots \in S^\omega$ where for all $i \geq 0$ the position $n_i$ is the $i$-th step of $\run$.
In words, $\stepRestr{\run}$ is the projection of the run $\run$ onto the \emph{states} occurring at its steps.
For the example run in \Cref{fig:stepsExample}, $\stepRestr{\run} = s_0s_1s_1s_0s_1^\omega$.

\begin{definition}[Stair-parity~\cite{vpGames}]
    \label{def:stairParity}
    Let $\npda$ be a VPA over pushdown alphabet $\ab$.
    A \emph{stair-parity} acceptance condition for $\npda$ is defined in terms of a priority function $\prioFun \colon \npdaStates \to \Nats$.
    A word $\word \in \ab^\omega$ is accepted if $\npda$ has a run $\run$ on $\word$ such that
    \[
        \min\, \Set{ \prioval }{ \existsInfty i \geq 0 \colon~ \prioFun(\, \stepRestr{\run}(i)\,) \,=\, \prioval } \quad \text{is even.}
    \]
   The language accepted by $\npda$ is denoted $\lang{\npda}$.
   \qeddef
\end{definition}
Intuitively, $\lang{\npda}$ contains all words that have a run $\run$ on $\npda$ such that the minimum priority occurring infinitely often at states in the \emph{footprint} $\stepRestr{\run}$ is even.

\begin{example}
    \label{ex:repbdd}
    The DVPA in \Cref{fig:stepsExample} with $\prioFun(s_0) = 1$ and $\prioFun(s_1) = 2$ accepts 
    \[
        \mathcal{L}_{\mathit{repbdd}} ~=~ \Set{\word \in \ab^\omega}{\exists B \geq 0 ,\, \existsInfty i \geq 0 \colon ~ \sh{w(i)} \,=\, B} ~,
    \]
    the language of \emph{repeatedly bounded} words~\cite{vpGames}, i.e., words whose stack height (cf.\ \Cref{rem:shOfWords}) is infinitely often equal to some constant $B$.
    The example word from~\Cref{fig:stepsExample} satisfies this property with $B = 1$.
    To see why the automaton accepts $\mathcal{L}_{\mathit{repbdd}}$, note that a word is repeatedly bounded iff the stack height at the steps stabilizes eventually.
    The latter occurs iff in just finitely many cases, the transition before reaching a step was a call.
    The DVPA in \Cref{fig:stepsExample} detects this behavior; when reading a call symbol, it always moves to state $s_0$ which has odd priority, and it accepts iff $s_0$ is visited finitely often \emph{at call positions}.
    It is known that $\mathcal{L}_{\mathit{repbdd}}$ is not expressible through DVPA with usual parity conditions~\cite{vpl}.
\end{example}

\begin{theoremC}[{\cite[Theorem 1]{vpGames}}]
    \label{thm:StPaDVPA}
    For every non-deterministic Büchi VPA $\npda$ there exists a deterministic stair-parity DVPA $\dpda$ with $2^{\O(|\npdaStates|^2)}$ states such that $\lang{\npda} = \lang{\dpda}$.
    Moreover, $\dpda$ can be constructed in exponential time in the size of $\npda$.
\end{theoremC}
It was also shown in \cite{vpGames} that stair-parity DVPA characterize exactly the class of $\omega$-VPL (and are thus not more expressive than non-deterministic Büchi VPA).

\subsection{CaRet, a Temporal Logic of Calls and Returns}

Specifying requirements directly in terms of automata is tedious in practice.
\emph{CaRet}~\cite{caret} is an extension of Linear Temporal Logic (LTL)~\cite{DBLP:conf/focs/Pnueli77} that can be used to describe certain $\omega$-VPL.

\begin{definition}[{Syntax of CaRet~\cite{caret}}]
    \label{def:caret}
    Let $\AP$ be a finite set of atomic propositions.
    The logic CaRet adheres to the grammar
    \[
    \varphi \ccoloneqq 
    p \bmid 
    \varphi \lor \varphi \bmid
    \neg \varphi \bmid
    \gnext \varphi \bmid
    \varphi \, \guntil \varphi \bmid
    \anext \varphi \bmid
    \varphi \, \auntil \varphi \bmid
    \caller \varphi \bmid
    \varphi \, \calleruntil \varphi ~,
    \]
    where $p \in \AP \cup \typeset$.
    \qeddef
\end{definition}
Other common modalities such as $\finally^b$ and $\globally^b$ for $b \in \set{g, a, -}$ are defined as usual via $\finally^b \varphi = \mathit{true} \, \until^b \, \varphi$, and $\globally^b \varphi = \neg \finally^b \neg \varphi$.
We now explain the intuitive semantics of CaRet, the formal definition is stated further below in \Cref{def:caretSem}.
We assume familiarity with LTL (see, e.g., \cite[Ch.~5]{MCbible} for an introduction).
CaRet formulae are interpreted over infinite words from the pushdown alphabet $\ab = 2^{\AP} \times \typeset$.
$\gnext$ and $\guntil$ are the standard next and until modalities from LTL (called \emph{global} next and until in CaRet).
CaRet extends LTL by two key operators, the \emph{caller} modality $\caller$ and the \emph{abstract successor} $\anext$.
The semantics of these operators is visually explained in \Cref{fig:caretNexts}.
The caller $\caller$ is a \emph{past} modality that points to the position of the last pending call (if such a call exists).
For internal and return symbols, the abstract successor $\anext$ behaves like $\gnext$ unless the latter is a return, in which case $\anext$ is undefined (e.g., position 3 in \Cref{fig:caretNexts}).
On the other hand, the abstract successor of a call symbol is its \emph{matching return} if it exists, or undefined otherwise.
The caller and abstract successor modalities induce sequences of positions which we call \emph{caller path} and \emph{abstract path}, respectively.
The caller path is always finite and the abstract path can be either finite or infinite.
The until modalities $\calleruntil$ and $\auntil$ are then defined analogously to the standard until $\guntil$ with the difference that they are interpreted over the caller and abstract path, respectively.

A prime application of CaRet is to express \emph{total correctness} of a procedure $F$~\cite{caret}:
\[
    \varphi_{\mathit{total}} \qeq \globally^g \, (\, \callSubscript \,\land\, p \,\land\, p_F ~\rightarrow~ \anext q \,)
\]
where $p$ and $q$ are atomic propositions that hold at the states where the pre- and post-condition is satisfied, respectively, and $p_F$ is an atomic proposition marking the calls to $F$.
Another example is the language of repeatedly bounded words from \Cref{ex:repbdd}; it is described by the formula $\varphi_{\mathit{repbdd}} = \finally^g \globally^g (\callSubscript \rightarrow \anext \retSubscript)$ which states that all but finitely many calls have a matching return.
Further examples are given in~\cite{caret}.

\begin{figure}[t]
    \centering
    \begin{tikzpicture}[myautomaton,node distance=4mm and 13mm,arrowlabel/.style={sloped, above}]
    \node[callState] (0) {$0$};
    \node[state,above right=of 0] (1) {$1$};
    \node[callState, right=of 1] (2) {$2$};
    \node[state,above right=of 2] (3) {$3$};
    \node[retState, right=of 3] (4) {$4$};
    \node[state,below right=of 4] (5) {$5$};
    \node[right= 0.5cm of 5] (...) {$...$};
    
    \draw[->] (0) edge [bend left=45,dotted] node[ fill=white, anchor=center, pos=0.5] {$\bigcirc^g$} (1);
    \draw[->] (0) edge [] node [arrowlabel] {$\callSubscript$} (1);
    
    \draw[->] (1) edge [bend left=50,dotted] node[ fill=white, anchor=center, pos=0.5] {$\bigcirc^{g/a}$} (2);
    \draw[->] (1) edge [] node [arrowlabel] {$\intSubscript$} (2);
    \draw[->] (1) edge [bend left=35,dotted] node[ fill=white, anchor=center, pos=0.5] {$\bigcirc^-$} (0);
    
    \draw[->] (2) edge [bend left=45,dotted] node[ fill=white, anchor=center, pos=0.5] {$\bigcirc^g$} (3);
    \draw[->] (2) edge [] node[arrowlabel] {$\callSubscript$} (3);
    \draw[->] (2) edge [bend left,dotted] node[ fill=white, anchor=center, pos=0.5] {$\bigcirc^-$} (0);
    \draw[->] (2) edge [bend right = 35,dotted] node[ fill=white, anchor=center, pos=0.5] {$\bigcirc^a$} (4);
    
    \draw[->] (3) edge [bend left=45,dotted] node[ fill=white, anchor=center, pos=0.5] {$\bigcirc^g$} (4);
    \draw[->] (3) edge [] node [arrowlabel] {$\intSubscript$} (4);
    \draw[->] (3) edge [bend left=35,dotted] node[ fill=white, anchor=center, pos=0.5] {$\bigcirc^-$} (2);
    
    \draw[->] (4) edge [bend left=60,dotted] node[ fill=white, anchor=center, pos=0.5] {$\bigcirc^{g/a}$} (5);
    \draw[->] (4) edge [] node [arrowlabel] {$\retSubscript$} (5);
    \draw[->] (4) edge [bend left=60,dotted] node[ fill=white, anchor=center, pos=0.5] {$\bigcirc^-$} (2);
    
    \draw[->] (5) edge [in=-20,out=200,dotted] node[ fill=white, anchor=center, pos=0.5] {$\bigcirc^-$} (0);
    \end{tikzpicture}
    \vspace{-1em}
    \caption{
        CaRet's various next modalities applied to the initial fragment of an example word.
        Call, internal, and return positions are depicted as boxes, circles, and rhombs, respectively.
        Note that $\anext$ of position 3 is undefined because $\gnext$ is a return.
        Whether or not $\anext$ of position 0 is defined depends on how the words continues after position 5; more specifically, it is defined iff there occurs a return position on the same height as position 5. In this case, $\anext$ of position 0 will point to the first such occurrence.
    }
    \label{fig:caretNexts}
\end{figure}

We now define the semantics of CaRet formally.
Let $\ab = 2^\AP \times \typeset$ be the pushdown alphabet and $w \in \ab^\omega$.
If $w(i) \in \abCall$, then let $\matchingReturn{\word}{i}$ be the position of the matching return of $w(i)$, or $\undefined$ if there is no matching return.
The \emph{abstract successor} $\asucc{\word}{i}$ of position $i$ in word $\word$ is defined as follows:
\begin{align*}
\asucc{\word}{i}
\eeq
\begin{cases}
    \matchingReturn{\word}{i} & \text{ if } \word(i) \in \abCall \\
    i+1 & \text{ if } \word(i) \notin \abCall \lland w(i{+}1) \notin \abRet\\
    \undefined & \text{ if } \word(i) \notin \abCall \lland w(i{+}1) \in \abRet\\
\end{cases}
\end{align*}
The \emph{caller} $\callersucc{\word}{i}$ of position $i$ in word $\word$ is defined as the innermost (``closest'') unmatched call position $j < i$, or $\undefined$ if there was no previous open call.
\begin{definition}[Semantics of CaRet]
    \label{def:caretSem}
    Let $\ab = 2^\AP \times \typeset$ and $w \in \ab^\omega$.
    For all CaRet formulae $\varphi$ and $i \in \Nats$, we define $\word(i) \models \varphi$ by induction over the structure of $\varphi$ as follows:
    \begin{itemize}
        \item $\word(i) \models p$ iff $\word(i) = (X, \mathsf{type})$ and either $p \in X$ or $p = \mathsf{type}$
        \item $\word(i) \models \varphi_1 \lor \varphi_2$ iff $w(i) \models \varphi_1$ or $w(i) \models \varphi_2$
        \item $\word(i) \models \neg \varphi$ iff $w(i) \nvDash \varphi$
        \item $\word(i) \models \gnext \varphi_1$ iff $w(i+1) \models \varphi_1$
        \item $\word(i) \models \anext \varphi_1$ iff $\asucc{\word}{i} = j \in \Nats$ is defined and $\word(j) \models \varphi_1$
        \item $\word(i) \models \caller \varphi_1$ iff $\callersucc{\word}{i} = j \in \Nats$ is defined and $\word(j) \models \varphi_1$
        \item $\word(i) \models \varphi_1 \, \symboluntil{b} \varphi_2$ for $b \in \set{a,g,-}$ if there exist positions $i = i_0, i_1, \ldots, i_k$, $k \in \Nats$, such that
        \begin{enumerate}
            \item $w(i_k) \models \varphi_2$, and
            \item for all $0 \leq j < k$, $w(i_j) \models \varphi_1$ and $i_{j+1} = \symbolsucc{b}{w}{i_j}$.
        \end{enumerate}
    \end{itemize}
    Furthermore, we write $\word \models \varphi$ if $\word(0) \models \varphi$.
    The language of all words satisfying a CaRet formula $\varphi$ is denoted $\lang{\varphi} = \{\word \in \ab^\omega \mid \word \models \varphi\}$.
    \qeddef
\end{definition}

\begin{theoremC}[{\cite[Theorem 5.1]{DBLP:conf/lics/AlurABEIL07}}]
    \label{thm:caretToNBVPA}
    CaRet-definable languages are $\omega$-VPL:
    For each CaRet formula $\varphi$ there exists a (non-deterministic) Büchi VPA $\npda$ such that $\lang{\varphi} = \lang{\npda}$, and $\npda$ can be constructed in time $2^{\O(|\varphi|)}$.
\end{theoremC}
The above theorem is well-known in the literature~\cite{DBLP:conf/lics/AlurABEIL07,DBLP:reference/mc/AlurBE18} even though it is usually stated for \emph{Nested Word Automata} (NWA) which are equivalent to VPA, and it is more common to state a space bound on $\npda$ rather than a time bound for the construction.
The theorem even applies to the logic NWTL$^+$, an FO-complete extension of CaRet~\cite{DBLP:conf/lics/AlurABEIL07} which we do not consider here for the sake of simplicity.

\Cref{thm:caretToNBVPA,thm:StPaDVPA} together imply that each CaRet formula can be translated to a \emph{deterministic} stair-parity VPA of doubly-exponential size.


\section{Probabilistic Visibly Pushdown Automata}
\label{sec:pvpa}

As explained in the introduction, we employ \emph{probabilistic pushdown automata}~\cite{esparzaMCPPDA-lics} (pPDA) as an operational model for procedural probabilistic programs.
pPDA thus play a fundamentally different role in this paper than VPA (cf.~\Cref{def:vpa}):
pPDA are used to model the \emph{system}, while VPA encode the \emph{specification}.
Consequently, our pPDA do \emph{not} read an input word like VPA do, but instead take their transitions randomly, according to fixed probability distributions.
In this way, they define a probability space over their possible traces, i.e., runs projected on their labeling sequence.
These traces constitute the input words of the VPA.
In order for the model checking problems to be decidable~\cite{dubslaff}, a syntactic visibility restriction needs to be imposed on pPDA.
In a nutshell, the condition is that each state only has outgoing transitions of one \emph{type}, i.e., push, internal, or pop.
This means that the stack operation is \emph{visible in the states} (recall that for VPA, the stack operation is visible in the input symbol).
This restriction is not severe in the context of modeling programs (see~\Cref{rem:pVPA} further below)
and leads to our notion of probabilistic \emph{visibly} pushdown automata (pVPA) which we now define formally.

Given a finite set $X$, we write $\Dist(X) = \Set{ f \colon X \to [0,1] }{ \sum_{a \in X} f(a) = 1 }$ for the set of probability distributions over $X$.

\begin{definition}[pVPA]
    \label{def:pVPA}
    A \emph{probabilistic visibly pushdown automaton} (pVPA) is a tuple $\ppda = \ppdaInit$ where
    $\ppdaStates$ is a finite set of states partitioned into $\ppdaStates = \partition{\ppdaStatesCall}{\ppdaStatesInt}{\ppdaStatesRet}$,
    $\sInitPpda \in \ppdaStates$ is an initial state,
    $\abStack$ is a finite stack alphabet,
    $\botStack \in \abStack$ is a special bottom-of-stack symbol,
    $\ppdaTransFun = (\ppdaTransFun_{\callSubscript}, \ppdaTransFun_{\intSubscript}, \ppdaTransFun_{\retSubscript})$ is a triple of functions
    \[
        \ppdaTransFun_{\callSubscript} \colon \ppdaStatesCall \to \Dist(\ppdaStates \times \abStackNoBot) ~,
        \quad
        \ppdaTransFun_{\intSubscript} \colon \ppdaStatesInt \to \Dist(\ppdaStates) ~,
        \quad
        \ppdaTransFun_{\retSubscript} \colon \ppdaStatesRet \times \abStack \to \Dist( \ppdaStates) ~,
    \]
    $\ab = \partition{\abCall}{\abInt}{\abRet}$ is a pushdown alphabet,
    and $\labeling \colon \ppdaStates \to \ab$ is a state labeling function consistent with the visibility condition, i.e., for all $\visTypeMetaSubscript \in \{\callSubscript, \intSubscript, \retSubscript\}$ and all $q \in \ppdaStates$, it is required that 
    $q \in \ppdaStates_{\visTypeMetaSubscript}$ iff $\labeling(q) \in \ab_{\visTypeMetaSubscript}$.
    \qeddef
\end{definition}

Similar to VPA, we use the notation $\transCall{q}{\prob}{r}{Z}$, $\transInt{q}{\prob}{r}$, and $\transRet{q}{Z}{\prob}{r}$ to indicate that $\ppdaTransFun_{\callSubscript}(q)(r,Z) = \prob$, $\ppdaTransFun_{\intSubscript}(q)(r) = \prob$, and $\ppdaTransFun_{\retSubscript}(q,Z)(r) = \prob$, respectively.

Intuitively, the behavior of a pVPA $\ppda$ is as follows.
If the current state $q$ is a call state, then the probability distribution $\ppdaTransFun_{\callSubscript}(q)$ determines a random successor state and stack symbol to be pushed on the stack ($\botStack$ cannot be pushed).
Similarly, if the current state $q$ is internal, then $\ppdaTransFun_{\intSubscript}(q)$ is the distribution over possible successor states and no stack operation is performed.
Lastly, if the current state $q$ is a return state and symbol $Z \in \abStackNoBot$ is on top of the stack, then $Z$ is popped and $\ppda$ moves to a successor state with probability according to $\ppdaTransFun_{\retSubscript}(q, Z)$.
As in VPA, the special symbol $\botStack$ is \emph{not} popped from the stack if a return transition occurs in a bottom configuration.

The formal semantics of a pVPA is defined in terms of a countably infinite discrete-time \emph{Markov chain}.
A (labeled) Markov chain is essentially the special case of a pVPA where $\ppdaStates = \ppdaStatesInt$, with the only difference that we allow for countably infinite $Q$ and do not impose the restriction on the labeling function $\labeling$.
A Markov chain can thus be specified as a 5-tuple $(\ppdaStates, q_0, P, \ab, \labeling)$, i.e., we omit $\botStack$ and $\abStack$ from the definition of pVPA because a Markov chain does not use a stack.
A \emph{run} of a Markov chain is an infinite sequence of states, i.e., an element from $Q^\omega$.
Note that in our definition, runs do not necessarily start in $q_0$; this is just for technical convenience---impossible runs starting in a state other than $q_0$ will simply have probability $0$.
We extend the labeling function $\labeling$ from states to runs in the natural way.

We define the \emph{Markov chain generated by a pVPA} $\ppda = \ppdaInit$ as
\[
    \ppdaMcSemantic{\ppda} \eeq (\, \ppdaStates \times \langConcat{\abStackNoBot^*}{\botStack},\, q_0\botStack,\, P_{\ppda},\, \ab,\, \labeling_{\ppda} \,) ~,
\]
i.e., the state space of $\ppdaMcSemantic{\ppda}$ is the set of configurations of $\ppda$, and the transition probability function $P_{\ppda}$ is defined as follows.
$P_{\ppda}(q\stack)(r\stack') = p > 0$ iff exactly one of the following cases applies:
\begin{itemize}
    \item $q \in \ppdaStatesCall$ and $\stack' = Z \stack$ for some $Z \in \abStackNoBot$ and $\ppdaTransCall{q}{p}{r}{Z}$; or
    \item $q \in \ppdaStatesInt$ and $\stack' = \stack $ and $\ppdaTransInt{q}{p}{r}$; or
    \item $q \in \ppdaStatesRet$ and $Z \stack' = \stack$ for some $Z \in \abStackNoBot$ and $\ppdaTransRet{q}{Z}{p}{q}$; or
    \item $q \in \ppdaStatesRet$ and
    $\stack' = \stack = \botStack$ and $\ppdaTransRet{q}{\botStack}{p}{r}$.
\end{itemize}
Moreover, the labeling function of $\ppdaMcSemantic{\ppda}$ is $\labeling_{\ppda}(q\stack) = \labeling(q)$ for all $q \stack \in \ppdaStates \times  \langConcat{(\abStackNoBot)^*}{\botStack}$.

\begin{example}
    \Cref{fig:inducedMCexample} depicts a pVPA $\ppda$ and a fragment of its generated Markov chain $\ppdaMcSemantic{\ppda}$.
    Even though $\ppdaMcSemantic{\ppda}$ is infinite, many problems remain decidable, including in particular questions about reachability probabilities which can be characterized as the least solution of a system of polynomial equations~\cite{esparzaMCPPDA-lics}.
    We will use this extensively in \Cref{sec:modelCheckingDVPA}.
\end{example}

\begin{figure}[t]
    \centering
        \begin{tikzpicture}[myautomaton,bend angle=10,node distance=20mm,on grid,initial where=above]
            \node[callState,initial] (q0) {$q_0$};
            \node[retState,below=of q0] (q1) {$q_1$};
            \draw[->] (q0) edge[loop left] node[auto] {$\nicefrac{2}{3}, Z$} (q0);
            \draw[->] (q1) edge[loop left] node[auto,align=center] {$Z, \nicefrac{1}{2}$ \\[-2pt] $\botStack, 1$} (q1);
            \draw[->] (q0) edge[bend right] node[left] {$\nicefrac{1}{3}, Z$} (q1);
            \draw[->] (q1) edge[bend right] node[right] {$Z, \nicefrac{1}{2}$} (q0);
        \end{tikzpicture}
        \hspace{12mm}
        \begin{tikzpicture}[myautomaton,bend angle=10,node distance=20mm,on grid,initial where=above]
            \node[initial] (q0bot) {$q_0\botStack$};
            \node[below=of q0bot] (q1bot) {$q_1\botStack$};
            \node[right=of q0bot] (q0Zbot) {$q_0Z\botStack$};
            \node[right=of q0Zbot] (q0ZZbot) {$q_0ZZ\botStack$};
            \node[right=of q0ZZbot] (q0ZZZbot) {$\cdots$};
            \node[right=of q1bot] (q1Zbot) {$q_1Z\botStack$};
            \node[right=of q1Zbot] (q1ZZbot) {$q_1ZZ\botStack$};
            \node[right=of q1ZZbot] (q1ZZZbot) {$\cdots$};
            
            \draw[->] (q0bot) edge node[auto] {$\nicefrac 2 3$} (q0Zbot);
            \draw[->] (q0bot) edge[bend left] node[auto] {$\nicefrac 1 3$} (q1Zbot);
            \draw[->] (q0Zbot) edge node[auto] {$\nicefrac 2 3$} (q0ZZbot);
            \draw[->] (q0Zbot) edge[bend left] node[auto] {$\nicefrac 1 3$} (q1ZZbot);
            \draw[->] (q0ZZbot) edge node[auto] {$\nicefrac 2 3$} (q0ZZZbot);
            \draw[->] (q0ZZbot) edge[bend left] node[auto] {$\nicefrac 1 3$} (q1ZZZbot);
            
            \draw[->] (q1Zbot) edge node[auto] {$\nicefrac{1}{2}$} (q1bot);
            \draw[->] (q1Zbot) edge[bend left] node[auto] {$\nicefrac{1}{2}$} (q0bot);
            \draw[->] (q1ZZbot) edge node[auto] {$\nicefrac{1}{2}$} (q1Zbot);
            \draw[->] (q1ZZbot) edge[bend left] node[auto] {$\nicefrac{1}{2}$} (q0Zbot);
            
            \draw[->] (q1bot) edge[loop left] node[auto] {$1$} (q1bot);
        \end{tikzpicture}
    \caption{
        Left: A pVPA $\ppda$ (labeling function omitted) with $\abStack = \{\botStack, Z\}$.
        Call and return states are drawn as squares and rhombs, respectively.
        Transitions labeled $p, Z$ for $p \in [0,1]$ mean that with probability $p$ symbol $Z$ is pushed.
        Transitions labeled $X, p$, for $X \in \abStack$ and $p \in [0,1]$, indicate that \emph{if} $X$ is on top of the stack, then the transition is taken with probability $p$ (and $X$ is popped if $X \neq \botStack$).
        Note that the visibility restriction on pVPA enforces that a state may not have both outgoing push and outgoing pop transitions.
        Right: The infinite-state Markov chain $\ppdaMcSemantic{\ppda}$ generated by $\ppda$.
    }
    \label{fig:inducedMCexample}
\end{figure}

We define the set $\Runs_\ppda$ of a pVPA $\ppda$ as the runs of the Markov chain $\ppdaMcSemantic{\ppda}$, i.e., $\Runs_{\ppda} = (\ppdaStates \times \langConcat{\abStackNoBot^*}{\botStack})^\omega$.
\emph{Steps} of pVPA runs are defined as in \Cref{def:step}.
A further example pVPA and its possible runs are depicted in \Cref{fig:step-mc-simple} on page~\pageref{fig:step-mc-simple}.

We associate a probability space with $\Runs_\ppda$ in the usual way (see, e.g.,~\cite[Ch.~10]{MCbible}).
To this end, we define the $\sigma$-algebra $\mathcal{F} \subseteq 2^{\Runs_\ppda}$ as the smallest set that contains all the \emph{cylinder sets} $\langConcat{\run}{\Runs_\ppda}$, where $\run$ is an arbitrary \emph{finite} prefix $\run \in (\ppdaStates \times \langConcat{\abStackNoBot^*}{\botStack})^*$ of a run, and that is closed under complement and countable union.
The sets in $\mathcal{F}$ are called \emph{measurable} and there is a unique probability measure $\PP_{\ppda} \colon \mathcal{F} \to [0,1]$ satisfying
\[
    \PP_{\ppda}(\langConcat{\run}{\Runs_\ppda})
    ~=~
    \begin{cases}
        \prod_{i = 0}^{|\run|-2} P_{\ppda}(\run(i), \run(i{+}1)) & \text{ if } |\run| = 0 \text{ or } (|\run| > 0 \text{ and } \run(0) = q_0\botStack), \\
        0 & \text{ otherwise},
    \end{cases}
\]
where an empty product (which occurs if $|\run| \leq 1$) is defined to be equal to $1$.
We omit the subscript in $\PP_{\ppda}$ whenever $\ppda$ is given by the context.

In the following two remarks, we summarize the technical differences between our pVPA model and existing models in the literature.

\begin{remark}
    Unlike the pPDA from~\cite{esparzaMCPPDA-lics}, our pVPA only generate \emph{infinite} runs, i.e., they do not ``terminate'' when reaching the empty stack.
    Indeed, in our pVPA, the stack can never be empty because the special bottom symbol $\botStack$ cannot be popped.
    We have chosen this semantics for compatibility with CaRet which describes $\omega$-languages by definition.
    Nonetheless, terminating behavior can be easily simulated in our framework by moving to a dedicated sink state once the pVPA attempts to pop $\botStack$ for the first time.
    Another technical difference between our pVPA and the pPDA introduced in~\cite{esparzaMCPPDA-lics} is that in pVPA, only pop transitions can read the stack, whereas in pPDA, all types of transitions can read, and possibly exchange, the current topmost stack symbol.
    We have chosen this definition (which is not a true restriction) for compatibility with VPA as defined in~\cite{vpl}.
\end{remark}

\begin{remark}
    \label{rem:pVPA}
    The visibility restriction of our pVPA is slightly different from the definition given in~\cite{dubslaff} which requires all \emph{incoming} transitions to a state to be of the same type, i.e., call, internal, or return.
    Our definition, on the other hand, imposes the same requirement on the states' \emph{outgoing} transitions.
    We believe that our condition is more natural for pVPA obtained from procedural programs, such as the one in \Cref{fig:exampleProg}.
    In fact, programs where randomness is restricted to \emph{internal} statements such as $x := \texttt{bernoulli}(0.5)$ or $x := \texttt{uniform}(0,3)$ naturally comply with our visibility condition because all call and return states of such programs are deterministic and thus cannot violate visibility.
    However, the alternative condition of~\cite{dubslaff} is not necessarily fulfilled for such programs.
\end{remark}

We can now formally state our main problem of interest:
\begin{definition}[Probabilistic CaRet Model Checking]
    \label{def:caretDecisionProblems}
    Let $\AP$ be a finite set of atomic propositions,
    $\varphi$ be a CaRet formula over $\AP$,
    $\ppda$ be a pVPA with labels from the pushdown alphabet $\ab = 2^\AP \times \typeset$,
    and $\theta \in [0,1] \cap \Rats$.
    The \emph{quantitative CaRet Model Checking problem} is to decide whether
    \[
        \PP(\Set{\run \in \Runs_{\ppda}}{\labeling(\run) \in \lang{\varphi}}) ~\geq_{?}~ \theta ~.
    \]
    The \emph{qualitative} CaRet Model Checking problem is the special case where $\theta = 1$.
    \qeddef
\end{definition}
The probabilities in \Cref{def:caretDecisionProblems} are well-defined as $\omega$-VPL are measurable~\cite{vpGames}.


\section{Model Checking pVPA against Stair-parity DVPA}
\label{sec:modelCheckingDVPA}

In this section, we show that model checking pVPA (\Cref{def:pVPA}) against VPL given in terms of a stair-parity DVPA (\Cref{def:stairParity}) is decidable.
This is achieved by first computing an automata-theoretic product of the pVPA and the DVPA and then evaluating the acceptance condition in the product automaton.

\subsection{Products of Visibly Pushdown Automata}

In general, pushdown automata are not closed under taking products as this would require \emph{two} independent stacks.
However, the visibility conditions on VPA and pVPA ensure that their product is again an automaton with just a single stack because the stack operations (push, internal, or pop) are forced to synchronize.

We now define the product formally.
An \emph{unlabeled} pVPA is a pVPA where the labeling function $\labeling$ and alphabet $\ab$ are omitted.

\begin{definition}[Product $\automProd{\ppda}{\dvpa}$]
    \label{def:prod}
    Let $\ppda = \ppdaInit$ be a pVPA, and $\dvpa = \npdaInitCustom{\npdaStates}{\sInit}{\abStack'}{\botStack}{\npdaTrans}{\ab}$ be a DVPA over pushdown alphabet $\ab$.
    The product of $\ppda$ and $\dvpa$ is the unlabeled pVPA
    \begin{align*}
        \automProd{\ppda}{\dvpa}
        \eeq
        (\, \ppdaStates \times \npdaStates, ~ \pstate{\sInitPpda}{\sInitNpda}, ~ \abStack \times \abStack', ~ \psymb{\botStack}{\botStack}, ~ \ppdaTransFun_{\ppda\times \dvpa} \,) ~,
    \end{align*}
    where $\ppdaTransFun_{\ppda\times \dvpa}$ is the smallest set of transitions satisfying the following rules for all $q, r \in \ppdaStates$, $Z \in \abStack$, $s,t \in \npdaStates$, and $Y \in \abStack'$:
    \begin{gather*}
        {\small(\text{call})}~\frac{
            \transCallAutom{q}{p}{\ppda}{r}{Z} ~\land~ \transCallAutom{s}{\labeling(q)}{\dvpa}{t}{Y}
        }{
            \transCallAutom{\pstate{q}{s}}{p}{\automProd{\ppda}{\dvpa}}{\pstate{r}{t}}{\psymb{Z}{Y}}
        }
        \quad
        {\small(\text{return})}~\frac{
            \transRetAutom{q}{Z}{p}{\ppda}{r} ~\land~ \transRetAutom{s}{Y}{\labeling(q)}{\dvpa}{t}
        }{
            \transRetAutom{\pstate{q}{s}}{\psymb{Z}{Y}}{p}{\automProd{\ppda}{\dvpa}}{\pstate{r}{t}}
        }
        \\
        {\small(\text{internal})}~\frac{
            \transIntAutom{q}{p}{\ppda}{r} ~\land~ \transIntAutom{s}{\labeling(q)}{\dvpa}{t}
        }{
            \transIntAutom{\pstate{q}{s}}{p}{\automProd{\ppda}{\dvpa}}{\pstate{r}{t}}
        }
        ~.      
    \end{gather*}
    If the DVPA $\dvpa$ is equipped with a priority function $\prioFun \colon \npdaStates \to \Nats$, then we extend $\prioFun$ to $\prioFun' \colon \ppdaStates \times \npdaStates \to \Nats$ via $\prioFun'(q, s) = \prioFun(s)$.
    \qeddef
\end{definition}
It is not difficult to show that $\ppda\times\dvpa$ is indeed a well-defined pVPA and moreover satisfies the following property (the proof is standard, see~\cite[Appendix B.1]{arxivV2}):

\begin{restatable}[Soundness of $\automProd{\ppda}{\dvpa}$]{lemma}{restateProdCorrect}
    \label{thm:product}
    Let $\ppda$ be a pVPA and $\dvpa$ be a stair-parity DVPA with priority function $\prioFun$, both over pushdown alphabet $\ab$.
    Then the product pVPA $\ppda \times \dvpa$ with priority function $\prioFun'$ as in \Cref{def:prod} satisfies
    \begin{align*}
        \PP(\Set{\run \in \Runs_\ppda }{ \lambda(\run)\in \lang{\dvpa}})
        \eeq
        \PP(\Set{\run \in \Runs_{\automProd{\ppda}{\dvpa}}}{ \stepRestr{\run} \in \accSet{\prioFun'}}),
    \end{align*}
    where $\accSet{\prioFun'}$ denotes the set of words in $(\ppdaStates \times \npdaStates)^\omega$ satisfying the \emph{standard} parity condition induced by $\prioFun'$.
    Moreover, $\ppda \times \dvpa$ can be constructed in polynomial time.
\end{restatable}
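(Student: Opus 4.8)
The plan is to establish a bijective, probability-preserving correspondence between the runs of $\ppdaMcSemantic{\ppda}$ and the runs of $\ppdaMcSemantic{\automProd{\ppda}{\dvpa}}$, and then argue that this correspondence matches the two acceptance events. First I would set up the correspondence at the level of configurations: a configuration $q\stack$ of $\ppda$ together with the (deterministic) behaviour of $\dvpa$ on the trace $\labeling(q\stack)\ldots$ uniquely determines a configuration $\pstate{q}{s}\stack'$ of the product, where $\stack'$ is obtained from $\stack$ by pairing each stack symbol with the $\dvpa$-state recorded when it was pushed. Concretely, given a run $\run = q_0\stack_0, q_1\stack_1,\ldots$ of $\ppda$, let $s_0, s_1, \ldots$ be the unique run of $\dvpa$ on the word $\labeling(q_0)\labeling(q_1)\ldots$ (which exists and is unique since $\dvpa$ is a DVPA, and the word respects the call/internal/return typing by the visibility conditions on both $\ppda$ and $\dvpa$); then map $\run$ to $\run' = \pstate{q_0}{s_0}\stack_0', \pstate{q_1}{s_1}\stack_1', \ldots$ with the paired stack contents. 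I would verify by case analysis on the transition type (call, internal, return, return-at-bottom) that $\run$ is a valid path in $\ppdaMcSemantic{\ppda}$ iff $\run'$ is a valid path in $\ppdaMcSemantic{\automProd{\ppda}{\dvpa}}$, using the three product rules of \Cref{def:prod}; the $\dvpa$ component is forced at every step, so no choices are added or removed.

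Next I would check that this map is a measure-isomorphism. Since $\dvpa$ is deterministic, the map $\run \mapsto \run'$ is a bijection between $\Runs_\ppda$ (runs starting in $q_0\botStack$) and $\Runs_{\automProd{\ppda}{\dvpa}}$ (runs starting in $\pstate{q_0}{s_0}\psymb{\botStack}{\botStack}$), with the inverse given by projecting the product run onto its $\ppda$-component and forgetting the $\dvpa$-annotations in the stack. Crucially, the product transition probability $\ppdaTransFun_{\ppda\times\dvpa}$ copies the $\ppda$-probability $p$ verbatim in each rule, so for every finite prefix the cylinder probabilities agree: $\PP_{\automProd{\ppda}{\dvpa}}(\langConcat{\run'}{\Runs}) = \PP_\ppda(\langConcat{\run}{\Runs})$. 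By the standard extension argument (cylinders generate the $\sigma$-algebra and determine the measure uniquely), the pushforward of $\PP_\ppda$ under the bijection is exactly $\PP_{\automProd{\ppda}{\dvpa}}$, so measurable events have equal probability on the two sides.

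It then remains to identify the two acceptance events under the bijection. The trace $\labeling(\run)$ is by construction the input word on which $\dvpa$ has the run $s_0 s_1 \ldots$, so $\labeling(\run) \in \lang{\dvpa}$ iff the minimum priority $\prioFun$ occurring infinitely often along the \emph{footprint} $\stepRestr{(s_0 s_1 \ldots)}$ is even (\Cref{def:stairParity}). The key observation is that the steps of $\run$, of the product run $\run'$, and of the word $\labeling(\run)$ all coincide: step positions depend only on the stack-height sequence (\Cref{rem:shOfWords}), and the paired stack $\stack_i'$ has the same length as $\stack_i$, so $\sh{\run'(i)} = \sh{\run(i)}$ for all $i$. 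Hence the footprint $\stepRestr{\run'}$ is precisely the sequence of product states $\pstate{q_{n_i}}{s_{n_i}}$ at the step positions $n_i$, and since $\prioFun'(\pstate{q}{s}) = \prioFun(s)$, the standard parity condition on $\stepRestr{\run'}$ holds iff the stair-parity condition of $\dvpa$ holds on $\labeling(\run)$. This gives the claimed equality of probabilities. The polynomial-time bound is immediate: $\automProd{\ppda}{\dvpa}$ has $|\ppdaStates|\cdot|\npdaStates|$ states, stack alphabet of size $|\abStack|\cdot|\abStack'|$, and its transitions are enumerated by one pass over the matching pairs of $\ppda$- and $\dvpa$-transitions.

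I expect the only real subtlety to be the well-definedness of the $\dvpa$-annotated stack and the verification that pushes and pops stay synchronized — in particular handling the bottom-of-stack case, where a return transition in a bottom configuration of $\ppda$ must correspond to a $\dvpa$ return transition reading $\botStack$ and likewise not popping, so that the paired symbol $\psymb{\botStack}{\botStack}$ is preserved on both sides. Everything else (the cylinder-probability computation, the step-position argument) is routine bookkeeping, so I would present the case analysis carefully and treat the rest briefly, pointing to the arxiv version \cite{arxiv} for the full details.
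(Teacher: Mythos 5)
Your proposal is correct and follows essentially the same route as the paper, which treats this as the standard product-soundness argument (deferred to the extended version): the deterministic observer induces a probability-preserving bijection of runs via cylinder sets, visibility guarantees that step positions (being determined by the stack-height sequence alone) coincide for the pVPA run, the product run, and the trace, and $\prioFun'(q,s)=\prioFun(s)$ then identifies the stair-parity event of $\dvpa$ with the parity-on-footprint event in the product. The only detail worth stating explicitly in a write-up is that the measure-zero sequences in $\Runs_\ppda$ that are not genuine paths from $q_0\botStack$ can be ignored, which you implicitly do.
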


\begin{remark}
    It is \emph{not} actually important that the product pVPA again satisfies the visibility condition, even though this happens to be the case.
    All techniques we apply to the product also work for general pPDA.
\end{remark}

\subsection{Stair-parity Acceptance Probabilities in pVPA}

\Cref{thm:product} effectively reduces model checking pVPA against stair-parity DVPA to computing stair-parity acceptance in the product, which is again an (unlabeled) pVPA.
We therefore focus on pVPA in this section and do not consider DVPA.

Throughout the rest of this section, let $\ppda = (\, \ppdaStates, \, q_0, \, \abStack, \, \botStack, \, \ppdaTransFun \,)$ be an unlabeled pVPA.
On the next pages we describe the construction of a \emph{finite} Markov chain $\stepMC{\ppda}$ that we call the \emph{step chain} of $\ppda$.
Loosely speaking, $\stepMC{\ppda}$ simulates jumping from one \emph{step} (see \Cref{def:step}) of a run of $\ppda$ to the next.

\begin{remark}
    \label{sec:literatureRemarks}
    The idea of the step Markov chain $\stepMC{\ppda}$ first appeared in~\cite{esparzaMCPPDA-lics}.
    However, the step chain as presented here differs from the original definition in~\cite{esparzaMCPPDA-lics} in at least two important aspects.
    First, we have to take the semantics of our special bottom symbol $\botStack$ into account.
    This is why our chain uses a subset of $\ppdaStates \cup \ppdaStates\botStack$ as states---it must distinguish whether a step occurs at a bottom configuration.
    The pPDA in~\cite{esparzaMCPPDA-lics}, on the other hand, may have both finite and infinite runs, and this needs to be handled differently in the step chain.
    Second, we use step chains for a different purpose than~\cite{esparzaMCPPDA-lics}, namely to show that general measurable properties defined on steps---this includes stair-parity---can be evaluated on pVPA (\Cref{thm:stepChainSound}).
\end{remark}

\subsubsection{Steps as events.}

For all $n \in \Nats$, we define a \emph{random} variable $\V{}{n}$ on $\Runs_{\ppda}$ whose value is either the state $q$ of $\ppda$ at the $n$-th step, or the \emph{extended state} $q\botStack$ in the special case where the $n$-th step occurs at a bottom configuration $q\botStack$, for some $q \in \ppdaStates$.
We denote the set of all such extended states with $\ppdaStates \botStack = \Set{q \botStack}{q \in \ppdaStates}$.
Formally, $\V{}{n} \colon \Runs_{\ppda} \to \ppdaStates \cup \ppdaStates\botStack$ is defined as
\[
    \V{}{n}(\run) \qeq \begin{cases}
        q & \text{ if } \stepi{n}{\run} = q \stack \text{ and } \stack \neq \botStack\\
        q\botStack & \text{ if } \stepi{n}{\run} = q\botStack ~,\\
    \end{cases}
\]
where $\stepi{n}{\run}$ denotes the configuration at the $n$-th step of $\run$.

\begin{restatable}{lemma}{restateVwellDef}
    \label{thm:VwellDef}
    For all $n \in \Nats$ and $v \in \ppdaStates \cup \ppdaStates\botStack$, the event $\V{}{n} = v$ is measurable, and thus $\V{}{n}$ is a well-defined random variable.
\end{restatable}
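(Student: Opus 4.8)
The plan is to show that the event $\V{}{n} = v$ can be written as a countable union of cylinder sets, which are measurable by definition of the $\sigma$-algebra $\mathcal{F}$. The key observation (cf.\ Remark~\ref{rem:shOfWords} and the discussion of steps in \Cref{sec:stair-parity}) is that whether a position $i$ of a run $\run$ is the $n$-th step, and which configuration sits there, depends only on a \emph{finite} prefix of $\run$ \emph{together with} a guarantee about the infinite future; the future constraint is precisely that the stack height never again drops below $\sh{\run(i)}$. The subtlety is that this future constraint is itself not a cylinder condition, so the main work is to re-express it in a countable way.

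First I would fix $n$ and $v$ and decompose the event according to \emph{which} position $i$ carries the $n$-th step: $\{\V{}{n}=v\} = \biguplus_{i \geq 0} A_i$, where $A_i$ is the event ``position $i$ is the $n$-th step and the configuration there is compatible with $v$.'' It suffices to show each $A_i$ is measurable. Next I would split $A_i$ further by the \emph{list of step positions} up to $i$: there is a unique increasing tuple $0 = n_0 < n_1 < \dots < n_{n} = i$ of the first $n{+}1$ step positions, so $A_i$ is a countable union over such tuples. For a fixed tuple, the condition ``$n_0,\dots,n_n$ are exactly the first $n{+}1$ steps, the configuration at $n_n = i$ has state component matching $v$, and (if $v \in \ppdaStates\botStack$) the stack at position $i$ equals $\botStack$ while (if $v \in \ppdaStates$) it does not'' can be analysed using the structural facts about steps: between two adjacent steps the stack height returns to (at most one above) the earlier value, and a position $j$ fails to be a step iff the stack strictly dips below $\sh{\run(j)}$ at some later point.

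The heart of the argument is handling the one genuinely infinitary clause, namely that $i$ itself \emph{is} a step, i.e.\ $\forall m \geq i\colon \sh{\run(m)} \geq \sh{\run(i)}$. Here I would use that the stack height process of a pVPA run is determined by the sequence of states visited (call/internal/return types), and that ``$\sh{\run(m)} \geq h$ for all $m \geq i$'' is the complement of the countable union over $m > i$ of the cylinder-like event ``$\sh{\run(m)} < h$,'' each of which is itself a countable union of cylinder sets (it only constrains the finite prefix up to position $m$, since stack height of a prefix is determined by that prefix). Hence the step condition is a countable intersection of complements of measurable sets, so measurable. The remaining clauses constraining the prefix up to $i$ (the intermediate positions being non-steps, the state at $i$, the stack at $i$ being or not being $\botStack$) are all finite-prefix conditions and hence countable unions of cylinder sets. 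Intersecting these finitely many measurable sets and taking the countable union over $i$ and over tuples yields measurability of $\{\V{}{n}=v\}$.

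The main obstacle I anticipate is purely bookkeeping: making precise, in terms of finite prefixes only, the statement ``$n_0 < \dots < n_n$ are \emph{exactly} the first $n{+}1$ steps.'' The ``at least these are steps'' part needs the infinitary argument above applied at each $n_j$ (or more efficiently just at $n_n$, since a step's being a step forces all earlier positions of no-greater stack height to be steps too); the ``no other step in between'' part is equivalent to saying each intermediate position $j \notin \{n_0,\dots,n_n\}$ has a later position of strictly smaller stack height, which again is a countable union of finite-prefix events. Once this is organized, no real calculation remains — every piece is transparently a Borel combination of cylinder sets — so I would keep the write-up at the level of exhibiting the decomposition and invoking closure of $\mathcal{F}$ under countable unions, intersections, and complements. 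Since $\ppdaStates \cup \ppdaStates\botStack$ is finite, the well-definedness of $\V{}{n}$ as a random variable follows immediately once each fiber is shown measurable.
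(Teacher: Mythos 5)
Your proposal is correct, but it takes a genuinely different route from the paper. You prove measurability from first principles: decompose $\{\V{}{n} = v\}$ over the position $i$ of the $n$-th step and over the tuple of earlier step positions, observe that the only infinitary clause ``$i$ is a step'' is a countable intersection over $m > i$ of complements of coordinate-wise events $\sh{\run(m)} < h$, and conclude that the whole event is a countable Boolean combination of cylinder sets. The paper instead argues automata-theoretically: it views $\ppdaStates \cup \ppdaStates\botStack$ as a pushdown alphabet, builds a non-deterministic B\"uchi VPA (of size depending on $n$) that guesses the first $n$ step positions and verifies the guess by pushing markers that must never be popped, and then invokes the measurability of $\omega$-VPL from~\cite{vpGames} together with the measurability of the map sending a run to its sequence of extended states. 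Your argument is elementary and self-contained (closer in spirit to the original proof in~\cite{esparzaMCPPDA-lics}, which the paper cites as the general source), and it generalizes directly to any event specified by finitely many step positions; the paper's argument is shorter given the cited machinery, matches the paper's toolkit, and yields the slightly stronger structural fact that the event is itself an $\omega$-VPL over the state alphabet.

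Two small points to fix in the write-up, neither of which undermines the approach. First, $\Runs_{\ppda}$ consists of \emph{arbitrary} configuration sequences (the measure merely assigns probability zero to those not starting in $q_0\botStack$), so you cannot force $n_0 = 0$ in your tuples; just allow an arbitrary first step position. Second, the parenthetical claim that ``$i$ being a step forces all earlier positions of no-greater stack height to be steps'' is false as stated (an earlier position may dip below its own height \emph{before} $i$); the efficient variant works only if you add the finite-prefix condition that $\sh{\run(m)} \geq \sh{\run(n_j)}$ for all $n_j \leq m \leq i$, after which ``$i$ is a step'' indeed subsumes the infinitary part for every $n_j$. With that proviso, or simply by applying your infinitary argument at each $n_j$ as you first suggest, the proof goes through.
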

\begin{proof}
    This was proved more generally in~\cite{esparzaMCPPDA-lics}.
    Here we give an alternative proof using the fact that all $\omega$-VPL are measurable~\cite{vpGames}.
    We view $\ppdaStates \cup \ppdaStates\bot$ as a pushdown alphabet (the partition is induced by the partition on $\ppdaStates$).
    We can construct a non-deterministic Büchi VPA that accepts a word $\word \in (\ppdaStates \cup \ppdaStates\bot)^\omega$ iff the $n$-th step of $\word$ is $v$ (the size of this automaton depends on $n$).
    To this end, the VPA guesses the first $n$ positions that are steps and goes to an accepting state $s$ if the $n$-th step was $v$.
    The automaton can verify that it guessed correctly as follows.
    If it believes a call symbol is a step, it pushes a special marker on the stack; if this marker is ever popped, then the call was no step and the guess was wrong.
    If it detects a wrong guess in this way, then it leaves the accepting state $s$, otherwise it loops there forever.
    The claim follows because the function $f \colon \Runs_\ppda \to (\ppdaStates \cup \ppdaStates\bot)^\omega$ that maps runs to the sequence of their (extended) states is measurable (indeed, the preimage $f^{-1}(\word)$ of every $\word \in (\ppdaStates \cup \ppdaStates\bot)^\omega$ is
    \[
        f^{-1}(\word)
        ~=~
        \bigcap_{i \geq 0} ~
        \bigcup_{\substack{\stack_0 \ldots \stack_i \,\in \, \langConcat{\abStackNoBot^*}{\botStack}} } \langConcat{\word(0)\stack_0 \,\ldots\, \word(i)\stack_i}{\Runs_{\ppda}}
    \]
    which is a countable intersection of countable unions of basic cylinder sets).
\end{proof}

We can view the sequence $\V{}{0},\V{}{1} \ldots$ of random variables as a \emph{stochastic process}.
It is intuitively clear that for all $n \in \Nats$, the value of $\V{}{n+1}$ depends only on $\V{}{n}$, but not on $\V{}{i}$ for $i < n$.
This is due to the more general observation that only the \emph{state} $q$ at any step configuration $q\stack$ (with $\stack \neq \botStack$) \emph{fully determines the future of the run} because being a step already implies that no symbol in $\stack$ can ever be read, as reading it implies popping it from the stack.
In particular, $q$ determines the probability distribution over possible next steps.
A similar observation applies to bottom configurations of the form $q \botStack$.
Phrased in the language of probability theory, the process $\V{}{0},\V{}{1} \ldots$ has the \emph{Markov property}, i.e.,
\begin{equation}
    \label{eq:markov_prop}
    \PP(\V{}{n} = v_n \mid \V{}{n-1} = v_{n-1} \,\wedge\, \ldots \,\wedge\, \V{}{0} = v_0)
    ~=~
    \PP(\V{}{n} = v_n \mid \V{}{n-1} = v_{n-1})
\end{equation}
holds for all values of $v_0,\ldots,v_n$ such that the conditional probabilities are well-defined\footnote{
    A conditional probability is well-defined if the \emph{condition}, i.e., the event on the right hand side of the vertical bar, has positive probability.
    Expressions like the one in \eqref{eq:markov_prop} are thus not necessarily well-defined because the probability that $\V{}{n-1} = v_{n-1}$ might be zero for certain values of $n$ and $v_{n-1}$.
}.
This was proved in detail in~\cite{esparzaMCPPDA-lics}.
It is also clear that the Markov process is time-homogeneous in the sense that
\begin{align}
    \label{eq:timehom}
    \condProb{\V{}{n+1} = v'}{\V{}{n} = v} \qeq \condProb{\V{}{m+1} = v'}{\V{}{m} = v}
\end{align}
holds for all $n, m \in \Nats$ for which the two conditional probabilities are well-defined.
The following example provides some intuition on these facts.
\begin{example}
    \label{ex:stepChainSimple}
    Consider the pVPA in \Cref{fig:step-mc-simple} (left).
    The initial fragments of its two equiprobable runs are depicted in the middle.
    In this example, it is easy to read off the next-step probabilities $\PP(\V{}{n} = v_n \mid \V{}{n-1} = v_{n-1})$ for all $n \in \Nats$ and $v_n, v_{n-1} \in \ppdaStates \cup \ppdaStates\botStack$.
    They are summarized in the Markov chain on the right.
    For example, $\V{}{0} = \sInitPpda\botStack$ holds with probability 1, and $\V{}{1} = q_1$ and $\V{}{1} = q_3\botStack$ hold with probability $\nicefrac 1 2$ each because the second step occurs either at position $1$ with configuration $q_1 \botStack Z$ or at position $3$ with configuration $q_3 \botStack$, and both options are equally likely.
    The case $\PP(\V{}{2} = q_2 \mid \V{}{1} = q_1) = 1$ is slightly more interesting:
    Given that a configuration $q_1 \gamma$ with $\gamma \neq \botStack$ is a step, we know that the next state must be $q_2$ (which is then also a step).
    Even though there is a transition from $q_1$ to $q_3$ in $\ppda$, the next state cannot be $q_3$ because the latter is a return state which would immediately decrease the stack height of $\gamma$. 
    This shows that, intuitively speaking, \emph{conditioning on being a step influences the probabilities of a state's outgoing transitions}.
\end{example}

\begin{figure}[t]
    \centering
    \begin{minipage}{0.44\linewidth}    
        \begin{tikzpicture}[myautomaton, initial text=, initial where=above,node distance=8mm and 18mm,on grid]
        \node[callState,initial] (q0) {$q_0$};
        \node[state,right=of q0] (q1) {$q_1$};
        \node[callState,above right=of q1,rectangle] (q2) {$q_2$};
        \node[retState,below right=of q1] (q3) {$q_3$};
        \draw[->] (q0) -- node[above] {$1, Z$} (q1);
        \draw[->] (q1) -- node[auto] {$\nicefrac 1 2$} (q2);
        \draw[->] (q1) -- node[below left] {$\nicefrac 1 2$} (q3);
        \draw[->] (q2) edge[loop right] node[right] {$1, Z$} (q2);
        \draw[->] (q3) edge[loop right] node[right,align=right] {$Z, 1$ \\[-2pt] $\botStack, 1$} (q3);
        \end{tikzpicture}
    \end{minipage}
    \hfill
    \begin{minipage}{0.21\linewidth}
        {\renewcommand{\arraystretch}{0.7}
         \setlength{\tabcolsep}{2pt}
            \begin{tabular}{c c c c c c}
                \tiny 0 &   \tiny 1 &  \tiny 2 & \tiny  3&   \tiny 4& \tiny 5 \\
                &   &   &   &   & ...\\
                &   &   &   & $Z$ & ...\\
                &   &   & $Z$ & $Z$ & ...\\
                & $Z$ & $Z$ & $Z$ & $Z$ & ...\\
                $\botStack$ & $\botStack$ & $\botStack$ & $\botStack$ & $\botStack$ & ...\\[3pt]
                $\underline{q_0}$ & $\underline{q_1}$ & $\underline{q_2}$ & $\underline{q_2}$ & $\underline{q_2}$ & ... \\
            \end{tabular}
            \vspace{5mm}\\
            \begin{tabular}{c c c c c c}
                & $Z$ & $Z$ &  &  & \\
                $\botStack$ & $\botStack$ & $\botStack$ & $\botStack$ & $\botStack$ & ...\\[3pt]
                $\underline{q_0}$ & $q_1$ & $q_3$ & $\underline{q_3}$ & $\underline{q_3}$ & ... \\
            \end{tabular}
        }
    \end{minipage}
    \hfill
    \begin{minipage}{0.23\linewidth}
        \begin{tikzpicture}[myautomaton, initial text=, initial where=left,on grid,node distance=15mm and 14mm]
        \tikzstyle{myState} = [state,inner sep=1pt]
        \node[myState,initial] (q0bot) {$q_0\botStack$};
        \node[myState,below=of q0bot] (q3bot) {$q_3\botStack$};
        \node[myState,above=of q0bot] (q1) {$q_1$};
        \node[myState,right=of q1] (q2) {$q_2$};
        \draw[->] (q0bot) -- node[right] {$\nicefrac 1 2$} (q3bot);
        \draw[->] (q0bot) -- node[right] {$\nicefrac 1 2$} (q1);
        \draw[->] (q1) -- node[above] {$1$} (q2);
        \draw[->] (q2) edge[loop below] node[below] {$1$} (q2);
        \draw[->] (q3bot) edge[loop right] node[right] {$1$} (q3bot);
        \end{tikzpicture}
    \end{minipage}
    \caption{
        Left: An example (unlabeled) pVPA $\ppda$.
        Recall that call and return states are drawn as squares and rhombs, respectively,  whereas internal states are depicted as circles.
        Middle: Initial fragments of the two possible runs of $\ppda$.
        Steps are underlined.
        Right: The step Markov chain $\stepMC{\ppda}$ (\Cref{def:stepMC}, page~\pageref{def:stepMC}).
    }
    \label{fig:step-mc-simple}
\end{figure}

\subsubsection{Probabilities of next steps, returns, and diverges.}

Our next goal is to provide expressions for the next-step probabilities $\condProb{\V{}{n+1} = v'}{\V{}{n} = v}$ as we did in \Cref{ex:stepChainSimple}.
It turns out that those can be stated in terms of the \emph{return} and \emph{diverge} probabilities of $\ppda$.

\begin{definition}
    \label{def:returnDiverge}
    Let $q,r \in Q$ be states, and $Z \in \abStackNoBot$ be a stack symbol of pVPA $\ppda$.
    We define the following probabilities:
    \begin{itemize}
        \item The \emph{return} probability $\termProbFromTo{qZ}{r}$ is the probability to reach configuration $r \botStack$ from $q Z \botStack $ without visiting another bottom configuration $t \botStack$ for some $t \neq r$ in between.
        Formally,
        \[
            \termProbFromTo{qZ}{r}
            ~=~
            \PP_{q Z \botStack}(\, \Set{q' \stack \botStack}{q' \in Q, \stack \in \abStackNoBot^+ } ~\until~ \{r\botStack\} \,)
        \]
         where $\PP_{q Z \botStack}$ is the probability measure associated with the infinite Markov chain $\ppdaMcSemantic{\ppda}$ assuming initial state $q Z \botStack$, and $\until$ is the standard until operator from LTL.
        \item The \emph{diverge} probability $\nonTermProb{q} = 1 - \sum_{t\in Q} \termProbFromTo{qZ}{t}$, i.e., the probability to never pop $Z$ from the stack when starting in $q Z \botStack$.
        Note that $\nonTermProb{q}$ is indeed independent of $Z$ because the only way to read $Z$ is by popping it from the stack. Recall that this is due to our specific definition of pVPA (\Cref{def:pVPA}) in which only pop transitions can read from the stack just like in VPA (\Cref{def:vpa}); we remark that in traditional (p)PDA, all types of transition can read ---and possibly replace--- the topmost stack symbol~\cite{esparzaMCPPDA-lics}.
        \qeddef
    \end{itemize}
\end{definition}

The diverge probabilities are closely related to steps.
In fact, the probability that a non-bottom configuration with head $q Z$ is a step is equal to $\nonTermProb{q}$.
For example, in the pVPA in \Cref{fig:step-mc-simple} the configuration $q_1 Z \bot $ is a step with probability $\nonTermProb{q_1} = \nicefrac{1}{2}$.

\begin{example}
    \label{ex:nonRationalProbs}
    It is well known that the return and diverge probabilities are not necessarily rational.
    We give a minimal example to illustrate this fact.
    Consider the following pVPA:
    \begin{center}
        \begin{tikzpicture}[myautomaton, initial text=, initial where=above,node distance=8mm and 22mm,on grid]
            \node[state,initial] (q0) {$q_0$};
            \node[callState,right=of q0] (q1) {$q_1$};
            \node[callState,right=of q1] (q2) {$q_2$};
            \node[retState,left=of q0] (q3) {$q_3$};
            \draw[->] (q0) -- node[above] {$\nicefrac 1 2$} (q1);
            \draw[->] (q1) -- node[auto] {$1, Z$} (q2);
            \draw[->] (q0) edge[bend right=0] node[above] {$\nicefrac 1 2$} (q3);
            \draw[->] (q2) edge[bend left=25] node[auto] {$1, Z$} (q0);
            \draw[->] (q3) edge[bend right=45] node[below] {$Z, 1 \mid \botStack, 1$} (q0);
        \end{tikzpicture}
    \end{center}
    Intuitively, this pVPA either pops the topmost symbol with probability $\nicefrac 1 2$, or it pushes two times the symbol $Z$.
    Note that all return probabilities of the form $\termProbFromTo{\ldots}{q_i}$ for $i \neq 0$ are equal to zero.
    In~\cite{esparzaMCPPDA-lics} it was shown that the remaining return probabilities are the \emph{component-wise least non-negative} solution of the \emph{polynomial system}:
    \begin{align*}
        \termProbFromTo{q_0Z}{q_0} &~=~ \frac{1}{2} \cdot \termProbFromTo{q_1Z}{q_0} + \frac{1}{2} \cdot \termProbFromTo{q_3Z}{q_0} \\
        \termProbFromTo{q_1Z}{q_0} &~=~ \termProbFromTo{q_2Z}{q_0} \cdot \termProbFromTo{q_0Z}{q_0} \\
        \termProbFromTo{q_2Z}{q_0} &~=~ \termProbFromTo{q_0Z}{q_0} \cdot \termProbFromTo{q_0Z}{q_0} \\
        \termProbFromTo{q_3Z}{q_0} &~=~ 1 ~.
    \end{align*}
    It follows that $\termProbFromTo{q_0Z}{q_0}$ must be the least non-negative solution of
    \begin{align*}
        \termProbFromTo{q_0Z}{q_0} ~=~ \frac{1}{2} \cdot \termProbFromTo{q_0Z}{q_0}^3 + \frac{1}{2}
    \end{align*}
    which is $\termProbFromTo{q_0Z}{q_0} = \tfrac{\sqrt{5} - 1}{2} \approx 0.618$, the reciprocal of the golden ratio.
    \qeddef
\end{example}

The probabilities in \Cref{ex:nonRationalProbs} can still be expressed by radicals (square roots, cubic roots, and so on) which allows for certain effective computations.
However, in general, the probabilities cannot even be expressed in this way.
For example, consider a pVPA that repeats the following steps until emptying its stack or getting stuck in a sink state: (i) It pushes four symbols with probability $\tfrac 1 6$, or (ii) pops one symbol with probability $\tfrac 1 2$, or (iii) gets stuck otherwise.
The resulting return probability is the least $x \geq 0$ with $x = \tfrac 1 6 x^5 + \tfrac 1 2$, which is an algebraic number not solvable by radicals~\cite[Theorem 3.2(1)]{DBLP:journals/jacm/EtessamiY09}.

\begin{remark}
    The probabilities $\termProbFromTo{qZ}{r}$ from \Cref{def:returnDiverge} were called \emph{termination probabilities} in previous work~\cite{brazdilSurvey}.
    However, we believe that \emph{return} probability is more appropriate.
    When modeling procedural probabilistic programs as pVPA, $\termProbFromTo{qZ}{r}$ is just the probability to eventually \emph{return} from local state $q$ of the current procedure to local state $r$ of the calling procedure (the return address is stored on the stack in $Z$).
    We believe that the term \emph{termination} probability is more adequate for referring to the quantity $\sum_{r \in \ppdaStates} \termProbFromTo{q_0Z_0}{r}$, where $Z_0$ is some initial stack symbol, i.e., the probability that some initial procedure indentified by $Z_0$ returns at all when started in local state $q_0$.
\end{remark}

We now state the technical key lemma of this section, the characterization of the next step probabilities $\condProb{\V{}{n+1} = v'}{\V{}{n} = v}$ as given in  \Cref{fig:transProbs}.
The upcoming section is devoted to proving that the entries in the table are correct.

\begin{restatable}{lemma}{restateTransProbs}
    \label{thm:transProbs}
    The conditional next-step probabilities in \Cref{fig:transProbs} are correct in the sense that if $\condProb{\V{}{n+1} = v'}{\V{}{n} = v}$ is defined for $n \in \Nats$ and $v, v' \in Q \cup Q\bot$, then it is equal to the probability in the respective column $v \to v'$.
\end{restatable}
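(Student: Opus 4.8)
The plan is to prove \Cref{thm:transProbs} by a case distinction on the \emph{type} of the state underlying $v$ (call, internal, or return, according to the partition $\ppdaStates = \ppdaStatesCall \uplus \ppdaStatesInt \uplus \ppdaStatesRet$) and on whether $v$ is an extended bottom state of the form $q\botStack$ or an ordinary state $q \in \ppdaStates$. Since the process $\V{}{0}, \V{}{1}, \ldots$ is already known to be a time-homogeneous Markov chain by \eqref{eq:markov_prop} and \eqref{eq:timehom} (both established in~\cite{esparzaMCPPDA-lics}), I may fix an arbitrary $n$ for which $\condProb{\V{}{n+1} = v'}{\V{}{n} = v}$ is defined and analyse the behaviour of $\ppda$ ``from the $n$-th step onwards'', disregarding the run prefix that leads up to it.

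The first and crucial preliminary step is to make precise the interplay between steps and the diverge probabilities of \Cref{def:returnDiverge}. Concretely, I would establish: \textbf{(a)}~a non-bottom configuration with head $qZ$ is a step exactly with probability $\nonTermProb{q}$, equivalently $\nonTermProb{q}$ is the probability that a fresh run of $\ppdaMcSemantic{\ppda}$ started in $qZ\botStack$ never pops $Z$; and \textbf{(b)}~conditioned on $\V{}{n} = q$ for $q \in \ppdaStates$, the remainder of the run is distributed like a fresh run of $\ppda$ started in $qZ\botStack$ (for an arbitrary $Z \in \abStackNoBot$) \emph{conditioned on the event ``$Z$ is never popped''}. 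Fact~(b) is a conditioning on a tail event (``being a step''), so it amounts to an $h$-transform of $\ppdaMcSemantic{\ppda}$ by the diverge probabilities $\nonTermProb{\cdot}$; in the internal and return cases below, this reweights a transition into a state $r$ by the factor $\nonTermProb{r}/\nonTermProb{q}$. Moreover the resulting distribution is independent of the concrete $Z$ and of the stack below it, since a step configuration can never read any of those symbols. When instead $\V{}{n} = q\botStack$, no reweighting occurs, because bottom configurations are always steps and the future from $q\botStack$ is simply a fresh run. These two facts are exactly what makes the seemingly paradoxical behaviour of \Cref{ex:stepChainSimple} (e.g.\ $\condProb{\V{}{n+1}=q_2}{\V{}{n}=q_1}=1$ despite the raw transition probability being $\nicefrac{1}{2}$) fall out automatically.

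Given facts~(a) and~(b), the internal and return cases are short. If $v = q \in \ppdaStatesInt$ is non-bottom, then position $n{+}1$ is again a non-bottom step reached by the single internal move $\transInt{q}{p}{r}$, so by fact~(b) the entry is $\nonTermProb{r}/\nonTermProb{q} \cdot p$. If $v = q\botStack$ with $q$ internal, or with $q$ a return state, the single move $\transInt{q}{p}{r}$ resp.\ $\transRet{q}{\botStack}{p}{r}$ keeps $\botStack$ on top, so the next step is again a bottom configuration and the entry is just $p$. Finally, a non-bottom configuration headed by a return state is \emph{never} a step, since the very next move pops its topmost (non-$\botStack$) symbol; hence $\PP(\V{}{n}=q)=0$ for non-bottom return $q$ and the corresponding conditional probabilities are undefined, so they need not appear in \Cref{fig:transProbs}.

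The main work, and the main obstacle, is the \emph{call} case. Take $v = q \in \ppdaStatesCall$; the sub-case $v = q\botStack$ is handled identically, the only difference being that an unmatched call then leads to a non-bottom configuration whereas a matched call returns to a bottom one, which swaps certain entries between the $\ppdaStates$- and $\ppdaStates\botStack$-columns. I would condition on the first move $\transCall{q}{p}{s}{Y}$ and split according to whether the freshly pushed $Y$ is ever popped: if $Y$ is never popped (probability $\nonTermProb{s}$ from the configuration reached after the push), then position $n{+}1$ is the next step and $\V{}{n+1}=s$; if $Y$ is eventually popped, the run returns to state $r'$ at the stack level of the $n$-th step with probability $\termProbFromTo{sY}{r'}$, and that position is again a step, so $\V{}{n+1}=r'$. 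Summing over the call transitions out of $q$ and applying the $h$-transform of fact~(b) in the non-bottom case then yields exactly the table entries. Two technical points must be nailed down: \textbf{(i)}~that $\termProbFromTo{sY}{r'}$, defined in \Cref{def:returnDiverge} via a run started in $sY\botStack$, indeed equals the probability of the corresponding ``excursion'' above an arbitrary stack level — true because the segment pushed on top of $Y$ never interacts with the stack below; and \textbf{(ii)}~that the position reached immediately after $Y$ is popped is precisely the \emph{next} step after $n$, which is exactly the matched-call property of steps discussed right after \Cref{def:step}. A concluding bookkeeping check — that the listed contributions are mutually exclusive and exhaustive, so each row of \Cref{fig:transProbs} sums to $1$ using $\nonTermProb{s}+\sum_{t\in\ppdaStates}\termProbFromTo{sY}{t}=1$ — is also what will later certify that the step chain $\stepMC{\ppda}$ is a bona fide Markov chain. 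I expect the $h$-transform argument behind fact~(b), together with the excursion/shift-invariance claim~(i), to be the most delicate parts of the formal proof.
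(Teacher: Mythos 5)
Your proposal is correct and takes essentially the same route as the paper's proof: the trivial table entries are dispatched by inspection, and the non-trivial ones rest on the identity that a non-bottom configuration with head $qZ$ is a step with probability $\nonTermProb{q}$, the reweighting factor $\nonTermProb{r}/\nonTermProb{q}$, and the split of the call case into unmatched first call versus an excursion whose probability is identified with $\termProbFromTo{r'Z}{r}$. The only difference is presentational: you package the conditioning on the step event as an $h$-transform (your fact~(b)), whereas the paper unfolds exactly this into explicit conditional-probability identities about step and stack-height events together with the corresponding conditional-independence claims.
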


\newcommand{\transProbTable}{
    {\renewcommand{\arraystretch}{1.1}
        \begin{adjustbox}{max width=\textwidth}
            \begin{tabular}{l  c  c  c  c }
                \toprule
                & $q \to r$& $q \bot \to r $ & $q \bot \to r \bot$ & $q \to r \bot$\\
                \midrule
                $q \in \ppdaStatesCall ~$
                & $\displaystyle \frac{ \nonTermProb{r} } {\nonTermProb{q}} \bigg( \sum_{r', Z} \ppdaTransFun_{\callSubscript}(q, r'Z)\termProbFromTo{r'Z}{r} + \sum_{Z} \ppdaTransFun_{\callSubscript}(q,rZ) \bigg)$
                & $\displaystyle \sum_{Z} \ppdaTransFun_{\callSubscript}(q,rZ) \nonTermProb{r} \quad$
                & $\displaystyle\sum_{r', Z} \ppdaTransFun_{\callSubscript}(q, r'Z) \termProbFromTo{r'Z}{r}$
                & 0\\
                \midrule
                $q \in \ppdaStatesInt$
                & $\displaystyle \frac{\nonTermProb{r}}{\nonTermProb{q}} \ppdaTransFun_{\intSubscript}(q, r) $
                & $0$
                & $\ppdaTransFun_{\intSubscript}(q,r)$
                & 0\\
                \midrule
                $q \in \ppdaStatesRet$
                & undef.
                & 0
                & $\ppdaTransFun_{\retSubscript}(q\botStack, r)$
                & undef.\\\bottomrule
            \end{tabular}
        \end{adjustbox}
    }
}
\begin{table}[t]   
    \centering
    \caption{
        Next-step probabilities of the step Markov chain.
        $\ppdaTransFun_{\mathsf{type}}$ for $\mathsf{type} \in \typeset$ are the probabilities of the pVPA's call, internal, and return transitions, respectively.
        The values $\termProbFromTo{r'Z}{r}$ and $\nonTermProb{q}$ are the return and diverge probabilities from \Cref{def:returnDiverge}.
    }
    \label{fig:transProbs}
    \transProbTable
\end{table}

\subsubsection{Proof of \Cref{thm:transProbs}}

We first explain the trivial entries in \Cref{fig:transProbs}.
Further below, we give a self-contained proof of the two non-trivial expressions in the left-most column of \Cref{fig:transProbs}.
Throughout the whole proof we fix an (unlabed) pVPA $\ppda = ( \ppdaStates, \, q_0, \, \abStack, \, \botStack, \, \ppdaTransFun)$, with $\ppdaTransFun = (\ppdaTransFun_{\callSubscript}, \ppdaTransFun_{\intSubscript}, \ppdaTransFun_{\retSubscript})$ the call, internal, and return transition functions, respectively.
The following items correspond to the trivial entries in \Cref{fig:transProbs} and are ordered column-by-column, from left to right:

\begin{itemize}
    \item The probability $\condProb{\V{}{n+1} = r}{\V{}{n} = q}$ with $q \in \ppdaStatesRet$ is never well-defined because it is impossible that steps occur at non-bottom configurations with a return state.
    \item The probability $\condProb{\V{}{n+1} = r}{\V{}{n} = q\bot}$ with $q \in \ppdaStatesInt$ is trivially zero because if $q$ is internal then the next step after a bottom configuration $q\bot$ is necessarily also a bottom configuration.
    \item The probability $\condProb{\V{}{n+1} = r}{\V{}{n} = q\bot}$ with $q \in \ppdaStatesRet$ is trivially zero because if $q$ is a return state at a bottom configuration then the next step occurs at the immediate successor configuration which is a bottom configuration as well.
    \item The probability $\condProb{\V{}{n+1} = r\bot}{\V{}{n} = q\bot}$ with $q \in \ppdaStatesInt$ is straightforward because $q\bot$ and $r\bot$ are both steps and the probability that the next state after $q\bot$ is $r\bot$ is $\ppdaTransFun_{\intSubscript}(q,r)$.
    \item The probability $\condProb{\V{}{n+1} = r\bot}{\V{}{n} = q\bot}$ with $q \in \ppdaStatesRet$ is simply $\Pret(q\bot, r)$ for the same reason as in the previous case (recall that a return state at a bottom-configuration behaves exactly like an internal one).
    \item All the remaining probabilities in the rightmost column ``\,$q \to r\bot$\,'' are trivially zero or ill-defined because if a step occurs at non-bottom configuration, then the next step can never occur at a bottom configuration.
\end{itemize}

We now focus on the following non-trivial cases.
Let $r \in \ppdaStates$ and $n \in \Nats$ be arbitrary.
\begin{enumerate}
    \item If $q \in \ppdaStatesInt$ then,
    \[
        \condProb{\V{}{n+1} = r}{\V{}{n} = q}
        \eeq
        \frac{\nonTermProb{r}}{\nonTermProb{q}} \ppdaTransFun_{\intSubscript}(q,r)
        ~.
    \]
    \label{it:eq1}
    \item  If $q \in \ppdaStatesCall$ then,
    \[
        \condProb{\V{}{n+1} = r}{\V{}{n} = q}
        \eeq
        \frac{ \nonTermProb{r} } {\nonTermProb{q}} \left( \sum_{r', Z} \ppdaTransFun_{\callSubscript}(q, r'Z)\termProbFromTo{r'Z}{r} + \sum_{Z} \ppdaTransFun_{\callSubscript}(q,rZ)\right) ~.
    \]
    \label{it:eq2}
\end{enumerate}
The other two non-trivial cases are easier variants of case (\ref{it:eq2}), hence we omit them here (see~\cite[p. 30]{arxivV2} for details).
Next we provide some intuition about cases (\ref{it:eq1}) and (\ref{it:eq2}):
\begin{itemize}
    \item 
    For (\ref{it:eq1}), suppose that the $n$-th step is at position $i$ of the run.
    Since the $n$-th step occurs at an \emph{internal} state $q \in \ppdaStatesInt$, the $n{+}1$-st step must necessarily occur immediately at position $i{+}1$.
    The factor $\Pint(q,r) \nonTermProb{r}$ is proportional to the probability to take an (internal) transition from $q$ to $r$ and then diverge in $r$, which is necessary in order for the next configuration to be a step.
    However, the values $\Set{\Pint(q,r) \nonTermProb{r}}{r \in Q}$ do not form a probability distribution in general.
    Therefore we divide by the normalizing constant $\nonTermProb{q} = \sum_{r \in Q} \Pint(q,r) \nonTermProb{r}$.
    \item 
    In (\ref{it:eq2}), the two summands correspond to the following case distinction:
    Since the $n$-th step occurs at the \emph{call} state $q \in \ppdaStatesCall$, the $n{+}1$-st step either (i) occurs at the same stack height as the current step $n$ (which means that the current call has a matching return), or (ii) the stack height at the next step $n{+}1$ is incremented by 1 compared to the stack height at step $n$.
    In case (ii), the next step occurs immediately at the next position, which is why the second summand is just the 1-step probability to go from $q$ to $r$.
    In case (i), the symbols pushed by the outgoing transitions of $q$ must be eventually popped.
    For instance, if we assume that $q$ takes a transition to an immediate successor $r'$ and pushes $Z$ on the stack, then the probability that the next step occurs at $r$ is precisely the \emph{return probability} $\termProbFromTo{r'Z}{r}$ (see~\Cref{def:returnDiverge}).
    The factor $\tfrac{\nonTermProb{r}}{\nonTermProb{q}}$ is needed for similar reasons as in (\ref{it:eq1}).
\end{itemize}

We now give the formal proofs for \eqref{it:eq1} and \eqref{it:eq2}.
In the following, we often use equations involving conditional probabilities such as $\condProb{A}{B} = \condProb{C}{D}$.
These conditional probabilities are not necessarily well-defined in all cases.
Therefore, the meaning of our equations is that they hold only if all probabilities involved are well-defined.
We need some definitions and (simple) lemmas first.

\begin{definition}
    Let $i \in \Nats$ and $q \in \ppdaStates$.
    We introduce the following events:
    \begin{itemize}
        \item $\stateAtPos{q}{i}$ is the set of all runs $\run \in \Runs_{\ppda}$ such that $\run(i) = q \stack$ with $\stack \neq \botStack$, i.e., the runs whose $i$-th configuration has state $q$ and stack unequal to $\bot$.
        \item Similarly, $\stateAtPos{q\botStack}{i}$ denotes the set of all runs $\run \in \Runs_{\ppda}$ such that $\run(i) = q \bot$, i.e., the runs whose $i$-th configuration is a \emph{bottom} configuration with state $q$.
        \item $\stepAtPos{i}$ denotes the set of all runs such that the $i$-th configuration is a step.
        \item We define $\stackAt{i} = \sh{\run(i)} \in \Nats$, i.e., the stack height of the $i$-th configuration. Strictly speaking, $\stackAt{i}$ is a random variable, not an event.
        Note that $\stepAtPos{i}$ is by definition equivalent to $\forall j > i \colon \stackAt{j} \geq \stackAt{i}$.
    \end{itemize}
    These events are all measurable.
    \qeddef
\end{definition}

\begin{lemma}
    For all $i \in \Nats$, and $q \in \ppdaStates$, the following identities hold:
    \begin{align}
        \condProb{\stepAtPos{i}}{\stateAtPos{q}{i}} &\eeq \nonTermProb{q}
        \label{eq:stepProbIsNonTermProb} \\
        \condProb{\stepAtPos{i}}{\stateAtPos{q\botStack}{i}} &\eeq 1
        \label{eq:botIsTriviallyStep}
    \end{align}
    Further, for $q \in \ppdaStatesInt$ and $r \in \ppdaStates$,
    \begin{align}
        \condProb{\stateAtPos{r}{(i{+}1)}}{\stateAtPos{q}{i}} &\eeq \ppdaTransFun_{\intSubscript}(q,r) \label{eq:nextPosInt} \\
        \condProb{ \stepAtPos{(i{+}1)} \land \stepAtPos{i}}{\stateAtPos{r}{(i{+}1)} \land \stateAtPos{q}{i}} & \eeq \condProb{ \stepAtPos{(i{+}1)} }{\stateAtPos{r}{(i{+}1)} } \label{eq:internalSimplification}
    \end{align}
\end{lemma}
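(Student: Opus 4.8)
The plan is to prove all four identities by one and the same device: decompose each conditioning event into the \emph{disjoint} union of the basic cylinder sets $\langConcat{\pi}{\Runs_\ppda}$ over all finite prefixes $\pi$ of the appropriate length that are consistent with it, evaluate the conditional probability on each such cylinder separately, and then re-assemble. Concretely, if $A$ is a ``future'' event and the conditioning event is $B = \biguplus_\pi \langConcat{\pi}{\Runs_\ppda}$, then $\PP(A \cap B) = \sum_\pi \PP(\langConcat{\pi}{\Runs_\ppda}) \cdot \condProb{A}{\langConcat{\pi}{\Runs_\ppda}}$, so whenever $\condProb{A}{\langConcat{\pi}{\Runs_\ppda}}$ equals the same value $c$ for every relevant $\pi$, dividing by $\PP(B) = \sum_\pi \PP(\langConcat{\pi}{\Runs_\ppda})$ gives $\condProb{A}{B} = c$. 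The one non-routine ingredient is a \emph{stack-irrelevance} fact (established in \cite{esparzaMCPPDA-lics}, and ultimately a consequence of \Cref{def:pVPA}, in which only pop transitions inspect the stack): for a prefix $\pi$ ending in a non-bottom configuration $qZ\stack$ with $Z \neq \botStack$, the conditional probability of any event that only talks about the behaviour up to the moment $Z$ is first popped equals the corresponding probability of a run started in $qZ\botStack$, since the pVPA can neither read nor alter the part of its stack strictly below the current top until it has popped down to it. In particular, $\condProb{\stepAtPos{i}}{\langConcat{\pi}{\Runs_\ppda}}$ for such a $\pi$ equals the probability that a run from $qZ\botStack$ never pops $Z$, which is exactly $\nonTermProb{q}$ (independent of $Z$, cf.\ \Cref{def:returnDiverge}).

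This immediately yields \eqref{eq:stepProbIsNonTermProb}: $\stateAtPos{q}{i}$ is the disjoint union of the cylinders $\langConcat{\pi}{\Runs_\ppda}$ over length-$(i{+}1)$ prefixes $\pi$ ending in some $q\stack$ with $\stack \neq \botStack$, and each such cylinder contributes the \emph{constant} $\nonTermProb{q}$ to $\stepAtPos{i}$ by the previous paragraph, so summing and normalizing gives $\condProb{\stepAtPos{i}}{\stateAtPos{q}{i}} = \nonTermProb{q}$. Identity \eqref{eq:botIsTriviallyStep} requires no decomposition at all: if $\run(i) = q\botStack$ then $\sh{\run(i)} = 0$, and since all stack heights are $\geq 0$ we have $\sh{\run(j)} \geq \sh{\run(i)}$ for every $j$, i.e.\ position $i$ is a step on \emph{every} run in $\stateAtPos{q\botStack}{i}$, so the conditional probability is $1$.

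For \eqref{eq:nextPosInt}, decompose $\stateAtPos{q}{i}$ as above into cylinders $\langConcat{\pi}{\Runs_\ppda}$ with $\pi$ ending in $q\stack$, $\stack \neq \botStack$. Since $q \in \ppdaStatesInt$, the successor configuration is $r\stack$ with probability $P_\ppda(q\stack, r\stack) = \Pint(q,r)$, and $r\stack$ is again non-bottom, so $\condProb{\stateAtPos{r}{(i{+}1)}}{\langConcat{\pi}{\Runs_\ppda}} = \Pint(q,r)$ independently of $\pi$; summing and normalizing gives the claim. For \eqref{eq:internalSimplification}, first observe that on the event $\stateAtPos{q}{i} \land \stateAtPos{r}{(i{+}1)}$ with $q \in \ppdaStatesInt$ the position-$i$ transition is internal and hence stack-height preserving, so $\sh{\run(i)} = \sh{\run(i{+}1)}$; consequently $\stepAtPos{i}$ and $\stepAtPos{i{+}1}$ coincide on this event, and the left-hand side collapses to $\condProb{\stepAtPos{(i{+}1)}}{\stateAtPos{r}{(i{+}1)} \land \stateAtPos{q}{i}}$. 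Decomposing this conditioning event into cylinders ending in $r\stack$ (with $\stack \neq \botStack$ and an internal $q$-predecessor at position $i$), each again contributes the constant $\nonTermProb{r}$ to $\stepAtPos{i{+}1}$ by stack-irrelevance, so the value is $\nonTermProb{r}$; on the other hand the right-hand side $\condProb{\stepAtPos{(i{+}1)}}{\stateAtPos{r}{(i{+}1)}}$ equals $\nonTermProb{r}$ by \eqref{eq:stepProbIsNonTermProb} applied at index $i{+}1$. Hence both sides agree.

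The only genuinely delicate point is making the stack-irrelevance claim precise at the level of the measure space: for each prefix $\pi$ ending in $qZ\stack$ one has to exhibit a measure-preserving bijection between the runs in $\langConcat{\pi}{\Runs_\ppda}$ and the runs of $\ppdaMcSemantic{\ppda}$ started in $qZ\botStack$, matching them step-by-step up to (and excluding) the first time $Z$ is popped, and to check that the event ``$Z$ is never popped'' (equivalently, ``position $i$ is a step'') is transported correctly under this bijection. This is exactly the argument carried out in \cite{esparzaMCPPDA-lics}; I expect it to be the main obstacle, while the remaining work — the disjoint cylinder decompositions and the cancellation of normalizing factors — should be routine.
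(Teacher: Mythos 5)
Your proof is correct and takes essentially the same route as the paper, which declares the first three identities immediate from the definitions and, for \eqref{eq:internalSimplification}, uses exactly your two observations: since the transition at position $i$ is internal, $\stepAtPos{(i{+}1)}$ already yields $\stepAtPos{i}$, and the probability of $\stepAtPos{(i{+}1)}$ depends only on the state at position $i{+}1$, not on the state at position $i$. Your cylinder-set decomposition together with the stack-irrelevance property (that the probability of never popping the current top symbol from a non-bottom configuration with state $q$ is $\nonTermProb{q}$, independently of the stack contents below) merely makes explicit the measure-theoretic details the paper leaves to the reader.
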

\begin{proof}
    The first three equations follow immediately from the definitions.
    For \eqref{eq:internalSimplification} we have:
    \begin{align*}
        & \condProb{ \stepAtPos{(i{+}1)} \land \stepAtPos{i}}{\stateAtPos{r}{(i{+}1)} \land \stateAtPos{q}{i}} \\
        \qeq & \condProb{ \stepAtPos{(i{+}1)} }{\stateAtPos{r}{(i{+}1)}\land \stateAtPos{q}{i} } \\
        \qeq & \condProb{ \stepAtPos{(i{+}1)} }{\stateAtPos{r}{(i{+}1)} }
    \end{align*}
    The first equation holds because if $q \in \ppdaStatesInt$ and $\stateAtPos{q}{i}$, then $\stepAtPos{(i{+}1)} \land \stepAtPos{i} $ is already implied by $\stepAtPos{(i{+}1)}$, and the second equation holds because the probability that $\stepAtPos{(i{+}1)}$ depends only on the state at position $i{+}1$, not on the state at position $i$.
\end{proof}

To prove equation for case (\ref{it:eq1}) we argue as follows.
By time-homogeneity (see \eqref{eq:timehom}) and the definition of $\V{}{n}$, we have for all $i,n \in \Nats$, $q \in \ppdaStatesInt$ and $r \in \ppdaStates$ that
\begin{align}
    \condProb{\V{}{n+1} = r}{\V{}{n} = q} &\eeq \condProb{\stateAtPos{r}{(i{+}1)} \land \stepAtPos{(i{+}1)}}{\stateAtPos{q}{i} \land \stepAtPos{i}} \label{eq:intCol1Ident}
\end{align}
Now:

\begin{align*}
    &\condProb{\V{}{n+1} = r}{\V{}{n} = q} \\
    =\quad & \condProb{\stateAtPos{r}{(i{+}1)} \land \stepAtPos{(i{+}1)}}{\stateAtPos{q}{i} \land \stepAtPos{i}}
    \tag{by \eqref{eq:intCol1Ident}}\\
    =\quad &\frac{\PP(\stateAtPos{r}{(i{+}1)} \land \stepAtPos{(i{+}1)} \land \stateAtPos{q}{i} \land \stepAtPos{i})}{\PP(\stateAtPos{q}{i} \land \stepAtPos{i})}
    \tag{cond.\ probability}\\
    =\quad &\frac{\condProb{\stateAtPos{r}{(i{+}1)} \land \stepAtPos{(i{+}1)} \land \stepAtPos{i}}{\stateAtPos{q}{i}} \cdot \PP(\stateAtPos{q}{i})}  {\condProb{\stepAtPos{i}}{\stateAtPos{q}{i}} \cdot \PP(\stateAtPos{q}{i})}
    \tag{cond.\ probability}\\
    =\quad &\frac{\condProb{\stateAtPos{r}{(i{+}1)} \land \stepAtPos{(i{+}1)} \land \stepAtPos{i}}{\stateAtPos{q}{i}}}{\condProb{\stepAtPos{i}}{\stateAtPos{q}{i}}} 
    \tag{simplification}\\
    =\quad &\frac{\condProb{ \stepAtPos{(i{+}1)} \land \stepAtPos{i}}{\stateAtPos{r}{(i{+}1)} \land \stateAtPos{q}{i}} \cdot \condProb{\stateAtPos{r}{(i{+}1)}}{\stateAtPos{q}{i}}}{\condProb{\stepAtPos{i}}{\stateAtPos{q}{i}}}
    \tag{cond.\ probability}\\
    =\quad &\frac{\condProb{ \stepAtPos{(i{+}1)} }{\stateAtPos{r}{(i{+}1)} } \cdot \condProb{\stateAtPos{r}{(i{+}1)}}{\stateAtPos{q}{i}}}{\condProb{\stepAtPos{i}}{\stateAtPos{q}{i}}}
    \tag{by \eqref{eq:internalSimplification}}\\
    =\quad &\frac{\nonTermProb{r} \cdot \ppdaTransFun_{\intSubscript}(q,r)}{\nonTermProb{q}} 
    \tag{by \eqref{eq:stepProbIsNonTermProb}, \eqref{eq:nextPosInt}}~.
\end{align*}
This concludes the proof for case \eqref{it:eq1}.

We now turn to case (\ref{it:eq2}).
For all $i,n \in \Nats$, $q \in \ppdaStatesCall$ and $r \in \ppdaStates$, it holds that
\begin{align}
    &\condProb{\V{}{n+1} = r}{\V{}{n} = q} \notag\\
    \qeq & \condProb{\exists k>i: \stepAtPos{k} \land \stateAtPos{r}{k} \land \forall i<j<k: \neg \stepAtPos{j}}{\stepAtPos{i} \land \stateAtPos{q}{i}} \tag{by time homogeneity and definition of $\V{}{n}$}\\
    \begin{split}
        \qeq & \condProb{\exists k>i{+}1: \stepAtPos{k} \land \stateAtPos{r}{k} \land \forall i<j<k: \neg \stepAtPos{j}}{\stepAtPos{i} \land \stateAtPos{q}{i}} \\
        & +~ \condProb{\stateAtPos{r}{(i{+}1)} \land \stepAtPos{(i{+}1)}}{\stateAtPos{q}{i} \land \stepAtPos{i}} 
    \end{split}
    \label{eq:twosummands}
\end{align}
The last equality results from a split in two disjoint events.
For the second summand in \eqref{eq:twosummands} it can be shown that
\[
    \condProb{\stateAtPos{r}{(i{+}1)} \land \stepAtPos{(i{+}1)}}{\stateAtPos{q}{i} \land \stepAtPos{i}}
    ~=~
    \frac{\nonTermProb{r}}{\nonTermProb{q}} \sum_{Z}   \ppdaTransFun_{\callSubscript}(q,rZ)
\]
by a similar derivation as in case (\ref{it:eq1}) (the sum over all stack symbols $Z$ is because $q \in \ppdaStatesCall$, so that there may be multiple ---up to $|\abStackNoBot|$ many--- direct transitions from $q$ to $r$).

We need a couple of lemmas before deriving an equation for the first summand in \eqref{eq:twosummands}.

\begin{lemma}
    \label{thm:threeEquations}
    For all $q \in \ppdaStatesCall$, $r \in \ppdaStates$, and $i \in \Nats$ it holds that:
    \begin{align}
        \begin{split}
            &\sum_{r', Z} \ppdaTransFun_{\callSubscript}(q, r'Z)\termProbFromTo{r'Z}{r} 
            \eeq  \sum_{k>i+1} \condProb{\stateAtPos{r}{k} \land \forall i{<}j{<}k: \stackAt j > \stackAt k = \stackAt i }{\stateAtPos{q}{i}}
        \end{split}
        \label{eq:returnProbsStackHeights}
    \end{align}
    Moreoever, for all $q \in \ppdaStatesCall$, $i \in \Nats$, and $k \in \Nats$, we have:
    \begin{align}
        \begin{split}
            &\condProb{\stepAtPos{k} \land \forall i{<}j{<}k: \neg \stepAtPos{j} \land \stepAtPos{i} } {\stateAtPos{q}{i}} \\
            \eeq &\condProb{\stepAtPos{k} \land \forall i{<}j{<}k: \stackAt j > \stackAt i= \stackAt k } {\stateAtPos{q}{i}}
        \end{split}
        \label{eq:stepStackHeights}
    \end{align}
    and
    \begin{align}
        \begin{split}
            &\condProb{\forall i{<}j{<}k: \stackAt j > \stackAt i= \stackAt k } {\stateAtPos{q}{i}\land \stateAtPos{r}{k} \land \stepAtPos{k}} \\
             \eeq & \condProb{\forall i{<}j{<}k: \stackAt j > \stackAt i= \stackAt k } {\stateAtPos{q}{i}\land \stateAtPos{r}{k}}
        \end{split}        
        \label{eq:stepIndep}
    \end{align}
\end{lemma}
\begin{proof}
    For \eqref{eq:returnProbsStackHeights}, note that $\sum_{r', Z} \ppdaTransFun_{\callSubscript}(q, r'Z)\termProbFromTo{r'Z}{r}$ is the probability to go from $q$ (with empty stack) to a successor state $r'$, push $Z \in \abStack$ and then later reach $r$ with empty stack within finitely many steps.
    If we assume that $\stateAtPos{q}{i}$, then this is the same as summing over all positions $k>i+1$ (we can exclude $k=i+1$ because is not possible because to push and pop $Z$ within one transition) such that $\stateAtPos{r}{k}$ and for all $i<j<k$ the stack height is greater than at position $i$ and $k$.
    Positions $i$ and $k$ have the same stack height because in the transition from $i$ the symbol $Z$ is pushed, and $k$ is the position directly after $Z$ is popped.
    In between those two transitions, the stack below $Z$ cannot change, so the stack is the same at both positions.
    
    For \eqref{eq:stepStackHeights} we argue as follows:
    \begin{align*}
        & \condProb{\stepAtPos{k} \land \forall i{<}j{<}k: \neg \stepAtPos{j} \land \stepAtPos{i} } {\stateAtPos{q}{i}}\\
        \eeq & \condProb{\stepAtPos{k} \land \forall i{<}j{<}k: \stackAt j > \stackAt i= \stackAt k \land \stepAtPos{i}} {\stateAtPos{q}{i}}\\
        \eeq & \condProb{\stepAtPos{k} \land \forall i{<}j{<}k: \stackAt j{>}\stackAt i= \stackAt k } {\stateAtPos{q}{i}}
    \end{align*}
    The first equality holds because if no position between $i$ and $k$ is a step, then the stack at those positions must be higher than at position $k$.
    Furthermore, since $i$ is a step, we have $\stackAt i \leq \stackAt k$; and moreover, since $i{+}1$ is not a step and $q$ is a call state, we even have $\stackAt i = \stackAt k$.
    The second equality holds because if $i$ and $k$ have the same stack height and all positions between them have a higher stack, then $i$ is a step if and only of $k$ is a step.
    
    Equation \eqref{eq:stepIndep} is somewhat counter-intuitive because conditioning on $\stepAtPos{k}$ is like ``conditioning on the future'':
    The stack height \emph{after} position $k$ should never be smaller than at position $k$.
    Knowing that $\stepAtPos{k}$ gives information about the (extended) state at position $k$.
    However, in \eqref{eq:stepIndep} we also condition on the fact that $\stateAtPos{r}{k}$, i.e., at position $k$, the run is in step $r$ and the topmost stack symbol is not $\botStack$.
    Hence, in the context of \eqref{eq:stepIndep}, we can drop $\stepAtPos{k}$ from the condition.
\end{proof}

We conclude the proof of case \eqref{it:eq2} and thus of the whole \Cref{thm:transProbs} by deriving the desired equation for the first summand in \eqref{eq:twosummands}:
{\allowdisplaybreaks\begin{align*}
    & \condProb{\exists k {>} i{+}1: \stepAtPos{k} \land \stateAtPos{r}{k} \land \forall i{<}j{<}k: \neg \stepAtPos{j}}{\stepAtPos{i} \land \stateAtPos{q}{i}} \\
    = \quad & \frac{ \condProb{\exists k {>} i{+}1: \stepAtPos{k} \land \stateAtPos{r}{k} \land \forall i{<}j{<}k: \neg \stepAtPos{j} \land \stepAtPos{i} } {\stateAtPos{q}{i}} \cdot \PP(\stateAtPos{q}{i})}{\condProb{\stepAtPos{i}}{ \stateAtPos{q}{i}} \cdot \PP(\stateAtPos{q}{i})}
    \tag{cond.\ probability twice}\\
    = \quad & \frac{ \condProb{\exists k {>} i{+}1: \stepAtPos{k} \land \stateAtPos{r}{k} \land \forall i{<}j{<}k: \neg \stepAtPos{j} \land \stepAtPos{i} } {\stateAtPos{q}{i}}}{\condProb{\stepAtPos{i}}{ \stateAtPos{q}{i}}}
    \tag{simplification}\\
    = \quad & \sum_{k > i{+}1} \frac{ \condProb{\stepAtPos{k} \land \stateAtPos{r}{k} \land \forall i{<}j{<}k: \neg \stepAtPos{j} \land \stepAtPos{i} } {\stateAtPos{q}{i}}}{\condProb{\stepAtPos{i}}{ \stateAtPos{q}{i}}}
    \tag{split disjoint events}\\
    = \quad & \sum_{k > i{+}1} \frac{ \condProb{\stepAtPos{k} \land \stateAtPos{r}{k} \land \forall i{<}j{<}k: \stackAt j > \stackAt i= \stackAt k } {\stateAtPos{q}{i}}}{\condProb{\stepAtPos{i}}{ \stateAtPos{q}{i}}}
    \tag{by \eqref{eq:stepStackHeights} }\\
    \begin{split}
    = \quad & \sum_{k > i{+}1} \bigg( \condProb{\stepAtPos{k} \land \stateAtPos{r}{k}} {\stateAtPos{q}{i}} \\
    & \qquad \cdot \frac{  \condProb{\forall i{<}j{<}k: \stackAt j > \stackAt i= \stackAt k } {\stateAtPos{q}{i} \land \stepAtPos{k} \land \stateAtPos{r}{k}}}{\condProb{\stepAtPos{i}}{ \stateAtPos{q}{i}}} \bigg)
    \end{split}
    \tag{cond.\ probability} \\
    = \quad & \sum_{k > i{+}1} \condProb{\stepAtPos{k} \land \stateAtPos{r}{k}} {\stateAtPos{q}{i}} \cdot \frac{  \condProb{\forall i{<}j{<}k: \stackAt j > \stackAt i= \stackAt k } {\stateAtPos{q}{i}\land \stateAtPos{r}{k}}}{\condProb{\stepAtPos{i}}{ \stateAtPos{q}{i}}}
    \tag{by \eqref{eq:stepIndep} }\\
    = \quad &\sum_{k > i{+}1} \frac{ \condProb{\stepAtPos{k}} {\stateAtPos{q}{i} \land \stateAtPos{r}{k}}}{\condProb{\stepAtPos{i}}{ \stateAtPos{q}{i}}} \cdot \condProb{\forall i{<}j{<}k: \stackAt j > \stackAt i= \stackAt k \land \stateAtPos{r}{k}} {\stateAtPos{q}{i}}  \tag{rewriting}\\
    = \quad & \frac{ \nonTermProb{r} } {\nonTermProb{q}} \sum_{k > i{+}1} \condProb{ \stateAtPos{r}{k} \land \forall i{<}j{<}k: \stackAt j > \stackAt i=\stackAt k } {\stateAtPos{q}{i} }
    \tag{by \eqref{eq:stepProbIsNonTermProb} and noticing that $\condProb{\stepAtPos{k}} {\stateAtPos{q}{i} \land \stateAtPos{r}{k}} = \condProb{\stepAtPos{k}} {\stateAtPos{r}{k}}$} \\
    = \quad & \frac{ \nonTermProb{r} } {\nonTermProb{q}} \sum_{r', Z} \ppdaTransFun_{\callSubscript}(q, r'Z)\termProbFromTo{r'Z}{r} 
    \tag{by \eqref{eq:returnProbsStackHeights}}
\end{align*}}This concludes the proof of \Cref{thm:transProbs}.
\qed

\subsubsection{The step chain.}

\begin{figure}[t]
    \centering
    \begin{adjustbox}{max height=50mm}
        \begin{tikzpicture}[myautomaton,bend angle=10,node distance=20mm,on grid]
            \node[state,initial] (tau) {$\tau$};
            \node[retState,below left=of tau] (r) {$r$};
            \node[callState,below right=of tau] (c) {$c$};
            
            \draw[->] (tau) -- node[above left] {$\nicefrac 1 3$} (r);
            \draw[->] (tau) -- node {$\nicefrac 2 3$} (c);
            \draw[->] (r) edge[bend right] node[below,align=left] {$Z, \nicefrac 2 3$ \\[-2pt] $\botStack, \nicefrac 2 3$} (c);
            \draw[->] (c) edge[bend right] node[above] {$\nicefrac 1 3, Z$} (r);
            \draw[->] (r) edge[loop below] node[align=left,left=1mm] {$Z, \nicefrac 1 3$ \\[-2pt] $\botStack, \nicefrac 1 3$} (r);
            \draw[->] (c) edge[loop below] node[align=left] {$\nicefrac 2 3, Z$} (c);
        \end{tikzpicture}
        \hspace{12mm}
        \begin{tikzpicture}[myautomaton,bend angle=10,node distance=20mm,on grid]
            \node[state,initial] (taubot) {$\tau\botStack$};
            \node[state,below left=of taubot] (rbot) {$r\botStack$};
            \node[state,below right=of taubot] (cbot) {$c\botStack$};
            \node[state,right=of cbot] (c) {$c$};
            \node[state,right=of taubot] (tau) {$\tau$};
            
            \draw[->] (taubot) edge node[above left] {$\nicefrac 1 3$} (rbot);
            \draw[->] (taubot) edge node[above right] {$\nicefrac 2 3$} (cbot);
            \draw[->] (rbot) edge[loop below] node {$\nicefrac 1 3$} (rbot);
            \draw[->] (rbot) edge[bend right] node[below] {$\nicefrac 2 3$} (cbot);
            \draw[->] (cbot) edge[loop below] node {$\nicefrac 1 3$} (cbot);
            \draw[->] (cbot) edge[bend right] node[above] {$ \nicefrac 1 6$} (rbot);
            \draw[->] (cbot) edge node[below] {$\nicefrac 1 2$} (c);
            \draw[->] (tau) edge node[above right] {$1$} (c);
            \draw[->] (c) edge[loop below] node {$1$} (c);
        \end{tikzpicture}
    \end{adjustbox}
    \caption{Left: Example pVPA $\ppda$ with the following return and diverge probabilities:
        $\termProbFromTo{cZ}{c} = \nicefrac 1 6$,
        $\termProbFromTo{cZ}{r} = \nicefrac{1}{12}$,
        $\termProbFromTo{rZ}{r} = \nicefrac 1 3$,
        $\termProbFromTo{rZ}{c} = \nicefrac 2 3$,
        and $\nonTermProb{c} = \nicefrac 3 4$,
        $\nonTermProb{\tau} = \nicefrac 1 2$,
        $\nonTermProb{r} = 0$.
        In general, these probabilities may be irrational numbers~\cite{DBLP:journals/jacm/EtessamiY09}.
        Right: The step chain $\stepMC{\ppda}$ according to \Cref{def:stepMC}.
        The transition probabilities can be computed using the return and diverge probabilities and \Cref{fig:transProbs}.
    }
    \label{fig:stepMcExampleAdvanced}
\end{figure}

Recall from \eqref{eq:markov_prop} that the stochastic process $\V{}{0},\V{}{1} \ldots$, where $\V{}{i} \in \ppdaStates \cup \ppdaStates\bot$ is the extended  state at the $i$th step, has the Markov property.
Since $\ppdaStates \cup \ppdaStates\bot$ is a finite set, we can now use \Cref{thm:transProbs} to construct the underlying finite Markov chain explicitly.

\begin{definition}[{The Step Chain $\stepMC{\ppda}$}]
    \label{def:stepMC}
    $\stepMC{\ppda}$ is the Markov chain with states
    \[
        M \qeq \Set{q \in \ppdaStatesCall \cup \ppdaStatesInt }{ \nonTermProb{q} > 0 } ~\cup~ \ppdaStates\botStack ~,
    \]
    initial state $\sInitPpda\botStack$, and for all $v, v' \in M$, the probability of transition $v \to v'$ is defined according to \Cref{fig:transProbs}.
    \qeddef
\end{definition}

\Cref{fig:stepMcExampleAdvanced} depicts a non-trivial pVPA and its step chain.
In this example, all return and diverge probabilities are rational.
In general, however, the return and diverge probabilities (\Cref{def:returnDiverge}) are algebraic numbers that are not always rational or even expressible by radicals~\cite{DBLP:journals/jacm/EtessamiY09} (cf.\ \Cref{ex:nonRationalProbs}).
As a consequence, one cannot easily perform numerical computations on the step chain.
However, the probabilities can be encoded implicitly as the unique solution of an \emph{existential theory of the reals} (ETR) formula, i.e., an existentially quantified FO-formula over $(\mathbb{R}, +, \cdot, \leq)$~\cite{esparzaMCPPDA-lics}.
Since the ETR is decidable in $\PSPACE$, many questions about the step chain are in $\PSPACE$ as well.
We will make use of this in \Cref{thm:resStPaDVPA} below.

The property of $\stepMC{\ppda}$ that is most relevant to us is given by the following \Cref{thm:stepChainSound}.
For $\run \in \Runs_{\ppda}$, we let $\footprint{\run} = \V{}{0}(\run)\V{}{1}(\run)\ldots \in (\ppdaStates \cup \ppdaStates\botStack)^\omega$ (note that this is a slightly different ``footprint'' than the one introduced in \Cref{sec:stair-parity}).

\begin{restatable}[Soundness of $\stepMC{\ppda}$]{lemma}{stepChainSound}
    \label{thm:stepChainSound}
    Let $\ppda$ be a pVPA with step chain $\stepMC{\ppda}$.
    Let $M$ be the states of the step chain and let $R \subseteq M^\omega$ be measurable.
    Then
    \[
        \PP_{\ppda} ( \Set{\run \in \Runs_{\ppda}}{\footprint{\run} \in R} ) \qeq \PP_{\stepMC{\ppda}}(R)
    \]
    where $\PP_{\ppda}$ and $\PP_{\stepMC{\ppda}}$ are the probability measures associated with $\ppda$ and $\stepMC{\ppda}$, respectively.
\end{restatable}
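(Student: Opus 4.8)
The plan is to show that the pushforward of $\PP_\ppda$ along the footprint map $\footprint{\cdot} \colon \Runs_\ppda \to M^\omega$, $\run \mapsto \V{}{0}(\run)\V{}{1}(\run)\ldots$, coincides with the Markov-chain measure $\PP_{\stepMC\ppda}$; since $\Set{\run \in \Runs_\ppda}{\footprint\run \in R}$ is exactly the preimage of $R$ under $\footprint{\cdot}$, this gives the claimed identity. First I would verify that $\footprint{\cdot}$ is measurable and takes values in $M^\omega$ almost surely. Measurability is immediate from \Cref{thm:VwellDef}, since the preimage of a cylinder $\langConcat{v_0 \ldots v_n}{M^\omega}$ is the finite intersection $\bigcap_{i=0}^{n}\{\V{}{i} = v_i\}$ of measurable events, and cylinders generate the product $\sigma$-algebra on $M^\omega$. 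For the ``almost surely'' part, observe that if the $n$-th step of a run sits at a non-bottom configuration $q\stack$ then $q \notin \ppdaStatesRet$ (a return transition at a non-$\botStack$ top would strictly decrease the stack height one position later, contradicting step-ness), so $q \in \ppdaStatesCall \cup \ppdaStatesInt$; and by \eqref{eq:stepProbIsNonTermProb} we have $\PP_\ppda(\stepAtPos{n} \land \stateAtPos{q}{n}) = \nonTermProb{q} \cdot \PP_\ppda(\stateAtPos{q}{n})$, which vanishes whenever $\nonTermProb{q} = 0$. Taking the union over $n \in \Nats$ shows that the set of runs possessing a non-bottom step whose state has zero diverge probability is $\PP_\ppda$-null; hence $\footprint\run \in M^\omega$ for $\PP_\ppda$-almost every $\run$ (define $\footprint{\cdot}$ arbitrarily on the exceptional null set). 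Both sides of the lemma are therefore genuine probability measures on $M^\omega$.

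Since the cylinder sets $\langConcat{v_0 \ldots v_n}{M^\omega}$ (for $n \in \Nats$ and $v_0, \ldots, v_n \in M$) form a $\pi$-system generating the product $\sigma$-algebra, by the standard uniqueness result for probability measures agreeing on a generating $\pi$-system it suffices to prove
\[
    \PP_\ppda(\V{}{0} = v_0 \lland \ldots \lland \V{}{n} = v_n) \eeq \PP_{\stepMC\ppda}(\langConcat{v_0 \ldots v_n}{M^\omega})
\]
for all such tuples, which I would do by induction on $n$. For $n = 0$ both sides equal $1$ if $v_0 = \sInitPpda\botStack$ and $0$ otherwise, since $\V{}{0} = \sInitPpda\botStack$ surely and $\sInitPpda\botStack$ is the initial state of $\stepMC\ppda$ (\Cref{def:stepMC}). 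For the induction step, fix $v_0, \ldots, v_{n+1}$. If $\PP_\ppda(\V{}{0} = v_0 \lland \ldots \lland \V{}{n} = v_n) = 0$, then the induction hypothesis yields $\PP_{\stepMC\ppda}(\langConcat{v_0 \ldots v_n}{M^\omega}) = 0$, so both sides for $n{+}1$ are $0$ (the left one by monotonicity, the right one because $\langConcat{v_0 \ldots v_{n+1}}{M^\omega} \subseteq \langConcat{v_0 \ldots v_n}{M^\omega}$). Otherwise the conditioning event $\V{}{0} = v_0 \lland \ldots \lland \V{}{n} = v_n$ has positive probability, and hence so does $\V{}{n} = v_n$; the chain rule for conditional probabilities followed by the Markov property \eqref{eq:markov_prop} then gives
\[
    \PP_\ppda(\V{}{0} {=} v_0 \lland \ldots \lland \V{}{n+1} {=} v_{n+1}) \eeq \PP_\ppda(\V{}{0} {=} v_0 \lland \ldots \lland \V{}{n} {=} v_n) \cdot \condProb{\V{}{n+1} {=} v_{n+1}}{\V{}{n} {=} v_n} ~.
\]
By \Cref{thm:transProbs} the last factor is the transition probability $v_n \to v_{n+1}$ of $\stepMC\ppda$, and by the induction hypothesis the first factor is $\PP_{\stepMC\ppda}(\langConcat{v_0 \ldots v_n}{M^\omega})$; their product equals $\PP_{\stepMC\ppda}(\langConcat{v_0 \ldots v_{n+1}}{M^\omega})$ by definition of the measure of a Markov chain, closing the induction.

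The step I expect to require the most care — and the only genuine subtlety — is the handling of ill-defined conditional probabilities: both \eqref{eq:markov_prop} and \Cref{thm:transProbs} may be applied only once the relevant conditioning events are known to have positive probability, which is precisely why the case split on whether $\PP_\ppda(\V{}{0} {=} v_0 \lland \ldots \lland \V{}{n} {=} v_n)$ is zero or positive in the induction step is essential rather than cosmetic. A minor additional point worth making explicit is that $\PP_{\stepMC\ppda}$ is indeed a probability measure, i.e., the outgoing probabilities of \Cref{fig:transProbs} sum to $1$ at every state of $\stepMC\ppda$ reachable from $\sInitPpda\botStack$: for such a state $v$ there is some $n$ with $\PP_\ppda(\V{}{n} = v) > 0$, and then $\Set{\condProb{\V{}{n+1} = v'}{\V{}{n} = v}}{v' \in M}$ is, by \Cref{thm:transProbs}, a probability distribution on $M$; states not reachable from $\sInitPpda\botStack$ carry no mass and may be ignored.
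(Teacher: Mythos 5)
Your proposal is correct and follows essentially the same route as the paper's proof: both establish the identity by showing the pushforward of $\PP_{\ppda}$ under the footprint map agrees with $\PP_{\stepMC{\ppda}}$ on the generating $\pi$-system of cylinder sets (using \Cref{thm:VwellDef} for measurability, the Markov property \eqref{eq:markov_prop} and \Cref{thm:transProbs} for the cylinder probabilities), and then invoke the uniqueness of measures agreeing on a $\pi$-system ($\pi$-$\lambda$ theorem). Your explicit treatment of the degenerate cases---zero-probability prefixes and the fact that footprints lie in $M^\omega$ only almost surely---is a welcome refinement that the paper passes over silently, but it does not change the substance of the argument.
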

\begin{proof}
    The formal proof requires some basic notions from measure theory.
    In fact, \Cref{thm:stepChainSound} is actually an instance of the following more general statement:
    
    \begin{enumerate}
        \item[($\star$)] \emph{Let $(X,\mathcal A, \mu)$ and $(Y,\mathcal B, \nu)$ be probability spaces such that $\mathcal{B} = \sigma(\mathcal G)$, where $\mathcal{G} \subseteq 2^Y$, i.e., $\mathcal{B}$ is the $\sigma$-algebra generated by the sets in $\mathcal{G}$.
        Assume furthermore that $\mathcal G$ is a $\pi$-system, i.e., $\mathcal G$ is non-empty and closed under finite intersections.
        Let $f \colon X \to Y$ be such that for all $G \in \mathcal G$, $f^{-1}(G) \in \mathcal A$ and $\mu(f^{-1}(G)) = \nu(G)$.
        Then $f$ is a measurable function and the pushforward measure $\nu' = \mu \circ f^{-1}$ coincides with $\nu$.}
    \end{enumerate}

    We now explain how to prove ($\star$) using fundamental measure theory.
    The fact that $f$ is measurable follows because the inverse image $f^{-1}$ preserves set operations (see, e.g., \cite[below~Definition~1.5.1]{ash2000probability}).
    For the claim that $\nu' = \nu$ it suffices to note that by assumption we have for all $G \in \mathcal{G}$ that $\nu'(G) = \nu(G)$, and since $\mathcal G$ is a $\pi$-system, an application of the $\pi$-$\lambda$ theorem (see, e.g.,~\cite[Proposition~2.10]{panangaden2009labelled}) implies that $\nu' = \nu$.
    
    We instantiate ($\star$) as follows to prove \Cref{thm:stepChainSound}:
    The probability spaces are the ones associated with the measures $\PP_{\ppda}$ and $\PP_{\stepMC{\ppda}}$.
    In particular, the $\sigma$-algebra on which $\PP_{\stepMC{\ppda}}$ is defined is the one generated by the cylinder sets $\mathcal C = \{w.M^\omega \mid w \in M^*\} \subseteq 2^{M^\omega}$.
    It is easy to see that $\mathcal G = \mathcal{C} \cup \{\emptyset\}$ is a $\pi$-system, and $\sigma(\mathcal C) = \sigma(\mathcal G)$.
    We define $f \colon (\ppdaStates \times \abStackNoBot^*.\botStack)^\omega \to M^\omega, f(\run) = \footprint{\run}$, i.e., $f$ projects a run from $\ppda$ to its footprint of steps (which is a run in $\stepMC{\ppda}$).
    To apply $(\star)$ it remains to show that for all cylinder sets $w.M^\omega$, $w \in M^*$, we have that
    (i) $f^{-1}(w.M^\omega)$ is measurable, and
    (ii) $\PP_{\ppda}(f^{-1}(w.M^\omega)) = \PP_{\stepMC{\ppda}}(w.M^\omega)$.
    For (i) notice that $f^{-1}(w.M^\omega) = \bigwedge_{i=0}^{|w|-1}(\V{}{i} = w(i))$ is indeed measurable because it is a finite intersection of measurable events by \Cref{thm:VwellDef}; recall that $\V{}{i}$ denotes the (extended) state at the $i$th step.
    (ii) is trivial in the case where $|w| = 0$, so we let $|w| = n+1$, $n \geq 0$, and exploit the properties of the step chain $\stepMC{\ppda}$.
    If $w(0) \neq q_0\botStack$ (the initial state of $\stepMC{\ppda}$) then $\PP_{\ppda}(\V{}{0} = w(0)) = \PP_{\stepMC{\ppda}}(w.M^\omega) = 0$.
    Otherwise $w(0) = q_0\botStack$.
    In this case, if $n = 0$ (i.e., $|w| = 1$), then  $\PP_{\ppda}(\V{}{0} = w(0)) = \PP_{\stepMC{\ppda}}(w.M^\omega) = 1$.
    Else, if $n > 0$ ($|w| > 1$), by the Markov property from equation \eqref{eq:markov_prop}, we have
    \begin{align*}
        &\PP_{\ppda}(\V{}{n} = w(n) \land  \ldots \land \V{}{0} = w(0)) \\
        \qeq &  \PP_{\ppda}(\V{}{n} = w(n) \mid \V{}{n-1} = w(n-1) ) \,\cdot \,\ldots\, \cdot \,\PP_{\ppda}(\V{}{1} = w(1) \mid \V{}{0} = w(0)) \\
        \qeq & P(w(n-1), w(n)) \, \cdot \, \ldots \, \cdot \, P(w(0), w(1)) \tag{By \Cref{thm:transProbs}}\\
        \qeq & \PP_{\stepMC{\ppda}}(w.M^\omega)
    \end{align*}
    where $P$ is the transition probability function in the Markov chain $\stepMC{\ppda}$ and the last equality holds by definition of the probability measure $\PP_{\stepMC{\ppda}}$.
\end{proof}

\subsection{Main Result of \texorpdfstring{\Cref{sec:modelCheckingDVPA}}{Section 4}}
\label{sec:proofMainRes}
The following is the main result of \Cref{sec:modelCheckingDVPA}:

\begin{restatable}{theorem}{restateMainResDet}
    \label{thm:resStPaDVPA}
    Let $\ppda$ be a pVPA and $\dvpa$ be a stair-parity DVPA, both over the same pushdown alphabet $\ab$.
    Then for all $\theta \in [0,1] \cap \mathbb{Q}$, the following is decidable in $\PSPACE$:
    \[
        \PP(\{\run \in \Runs_\ppda \mid \labeling(\run) \in \mathcal{L}(\dvpa)\}) \geq_{?} \theta
        ~.
    \]
\end{restatable}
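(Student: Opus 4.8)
\emph{Overview of the plan.} The proof carries out the right half of the reduction chain in \Cref{fig:reductionsOverview}: first reduce model checking $\ppda$ against the stair-parity DVPA $\dvpa$ to a \emph{standard} parity question on the finite step chain of the product pVPA, and then decide that question by an encoding into the existential theory of the reals.

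\emph{Reduction to the step chain.} By \Cref{thm:product} I would construct, in polynomial time, the unlabeled product pVPA $\mathcal B \eqdef \automProd{\ppda}{\dvpa}$ together with the lifted priority function $\prioFun'$; after this it suffices to decide $\PP(\Set{\run \in \Runs_{\mathcal B}}{\stepRestr{\run} \in \accSet{\prioFun'}}) \geq_? \theta$. Extending $\prioFun'$ to extended states by $\prioFun'(v\botStack) \eqdef \prioFun'(v)$, a run $\run$ of $\mathcal B$ satisfies $\stepRestr{\run} \in \accSet{\prioFun'}$ iff $\footprint{\run} \in R$, where $R$ is the standard parity condition induced by $\prioFun'$ on the state set $M$ of the step chain $\stepMC{\mathcal B}$: the footprint $\footprint{\run}$ refines $\stepRestr{\run}$ only by recording whether a step sits at a bottom configuration, and this does not change priorities. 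Since $R$ is $\omega$-regular it is Borel, hence measurable, so \Cref{thm:stepChainSound} yields $\PP(\Set{\run \in \Runs_{\mathcal B}}{\stepRestr{\run} \in \accSet{\prioFun'}}) = \PP_{\stepMC{\mathcal B}}(R)$. The remaining task is to decide whether the probability of a parity condition in the \emph{finite} Markov chain $\stepMC{\mathcal B}$ is at least $\theta$.

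\emph{Solving the finite chain.} First I would recover the underlying \emph{graph} of $\stepMC{\mathcal B}$: by inspection of \Cref{fig:transProbs}, which states belong to $M$ and which transitions have positive probability depends only on the zero/non-zero pattern of the return probabilities $\termProbFromTo{qZ}{r}$ and diverge probabilities $\nonTermProb{q}$ of $\mathcal B$ (each table entry is a sum/ratio of these and of given rational transition probabilities). These are qualitative reachability questions about the pPDA $\mathcal B$ and are decidable in polynomial time~\cite{esparzaMCPPDA-lics,DBLP:journals/jacm/EtessamiY09}. With the graph in hand I would compute the BSCCs of $\stepMC{\mathcal B}$, call a BSCC \emph{accepting} iff its minimum $\prioFun'$-priority is even, and partition $M$ into $S_{=1}$ (states almost surely reaching an accepting BSCC), $S_{=0}$ (states that cannot), and $S_? = M \setminus (S_{=1}\cup S_{=0})$ by standard graph algorithms. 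Since a finite Markov chain almost surely enters a BSCC, $\PP_{\stepMC{\mathcal B}}(R)$ equals $1$, $0$, or $x^{\star}_{v_0}$ according to whether the initial state $v_0$ of $\stepMC{\mathcal B}$ lies in $S_{=1}$, $S_{=0}$, or $S_?$, where $x^{\star}$ is the unique non-negative solution of the usual linear reachability system on $S_?$ (uniqueness because from each state of $S_?$ the set $S_{=0}\cup S_{=1}$ is reached with positive probability, so the relevant substochastic submatrix has spectral radius below $1$). Only the last case needs further work, and there $\PP_{\stepMC{\mathcal B}}(R)$ is a fixed algebraic number, a rational function (via Cramer's rule) of the $\termProbFromTo{qZ}{r}$ and $\nonTermProb{q}$, which are in turn the component-wise least non-negative solution of a fixed quadratic polynomial system (cf.\ \Cref{ex:nonRationalProbs}). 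Hence $\PP_{\stepMC{\mathcal B}}(R)$ is itself definable in $\ETR$ — concretely, as a quantity obtained from the least non-negative solution of a polynomial system, where the least-solution requirement is expressible within the \emph{existential} fragment along the lines of~\cite{esparzaMCPPDA,DBLP:journals/jacm/EtessamiY09} (using that every non-negative solution dominates the least one) — so the comparison $\PP_{\stepMC{\mathcal B}}(R) \geq \theta$ reduces to deciding an $\ETR$ sentence, or its negation. Since $\ETR$ is decidable in $\PSPACE$~\cite{esparzaMCPPDA} and $\PSPACE$ is closed under complement, and the preceding product construction and graph analysis are polynomial, the whole procedure runs in $\PSPACE$.

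\emph{Main obstacle.} The delicate part is the last one: the transition probabilities of $\stepMC{\mathcal B}$ are algebraic numbers that in general are not expressible by radicals~\cite{DBLP:journals/jacm/EtessamiY09}, so $\stepMC{\mathcal B}$ cannot be analysed by naive numerics. The crux is therefore never to compute these numbers, but to capture the entire comparison by one \emph{existential} formula over $\Tarski$; this hinges on the $\ETR$-definability of the relevant reachability probabilities of $\mathcal B$ (and on expressing the least-fixed-point characterization existentially rather than with a quantifier alternation, which would leave the $\PSPACE$ bound in doubt), together with the bookkeeping needed to pin down the qualitative graph structure of $\stepMC{\mathcal B}$ beforehand. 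All the other ingredients — the polynomial-time product construction, Borel-measurability of the parity condition, and the graph analysis of the finite chain — are routine.
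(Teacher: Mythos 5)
Your proposal follows essentially the same route as the paper: product via \Cref{thm:product}, reduction of stair-parity to ordinary parity on the footprint via \Cref{thm:stepChainSound}, BSCC analysis of the finite step chain, and an existential encoding over $\Tarski$ of the (generally irrational) return/diverge probabilities together with the unique solution of the linear reachability system, decided in $\PSPACE$. One caveat: your claim that recovering the graph of $\stepMC{\mathcal B}$ only needs ``qualitative reachability questions \ldots decidable in polynomial time'' is unjustified for the diverge probabilities --- deciding $\nonTermProb{q} > 0$ amounts to deciding whether $\sum_{r}\termProbFromTo{qZ}{r} < 1$, which is not a plain reachability check and is not known to be in $\PTIME$ for general pVPA (the paper decides it via $\ETR$ in $\PSPACE$, and only for the one-counter case does it invoke a $\PTIME$ procedure); only the positivity of the return probabilities $\termProbFromTo{qZ}{r}$ admits the polynomial-time graph-reachability argument. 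Since a $\PSPACE$ subroutine suffices at that point, this slip does not affect the claimed bound, and the rest of your argument is sound.
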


The rest of \Cref{sec:proofMainRes} is devoted to the proof of \Cref{thm:resStPaDVPA}.
We first provide a brief overview.
The first step is to construct the product $\automProd{\ppda}{\dvpa}$ according to \Cref{def:prod}.
By \Cref{thm:product} we need to compute the stair-parity acceptance probability of $\automProd{\ppda}{\dvpa}$.
\Cref{thm:stepChainSound} reduces this to computing a usual parity acceptance probability in the step chain $\stepMC{\automProd{\ppda}{\dvpa}}$.
This can be achieved through finding the bottom strongly connected components (BSCC) of $\stepMC{\automProd{\ppda}{\dvpa}}$, classifying them as \emph{good} (the minimum priority of a BSCC state is even) or otherwise as \emph{bad}, and running a standard reachability analysis w.r.t.\ the good BSCCs (see \Cref{fig:productAndStepChain} for an example).
The remaining technical difficulty is that the transition probabilities of $\stepMC{\automProd{\ppda}{\dvpa}}$ are not rational in general.
We handle this using the fact that these probabilities are expressible in the ETR~\cite{esparzaMCPPDA-lics}.

\newcommand{\termProbFromToVar}[2]{\langle #1 {\downarrow} #2\rangle}
\newcommand{\prodAbbr}{\tilde{\ppda}}

We now present the formal proof.
We use the following result about return probabilities of pPDA, which is originally due to~\cite{esparzaMCPPDA-lics}:
\begin{lemma}[{as stated in~\cite[Theorem 2]{brazdilSurvey}}]
    \label{thm:retProbsFormula}
    The return probabilities $\termProbFromTo{pZ}{q}$ are expressible in ETR.
    More specifically, there exists an FO-formula $\Phi$ over $(\mathbb{R}, +, \cdot, \leq)$ which uses just existential quantifiers and free variables $\termProbFromToVar{pZ}{q}_{p,q \in\ppdaStates, Z\in\abStack}$ such that $\Phi$ becomes a true FO-sentence iff each free variable $\termProbFromToVar{pZ}{q}$ is substituted by the actual return probability $\termProbFromTo{pZ}{q}$.
    Moreover, $\Phi$ can be effectively constructed in polynomial space.
\end{lemma}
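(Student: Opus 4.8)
The proof essentially consists of assembling two classical ingredients: the fixed-point characterisation of return probabilities and its encoding in the first-order theory of $\Tarski$. First I would write down the defining polynomial system. For each triple $(p,Z,q)$ with $p,q\in\ppdaStates$ and $Z\in\abStackNoBot$ I introduce an unknown $\termProbFromToVar{pZ}{q}$ and obtain an equation by a one-step case distinction on the type of $p$: if $p$ is internal, $\termProbFromToVar{pZ}{q}=\sum_{p'}\Pint(p,p')\,\termProbFromToVar{p'Z}{q}$; if $p$ is a return state, $\termProbFromToVar{pZ}{q}=\Pret(p,Z)(q)$, a rational constant; and if $p$ is a call state, $\termProbFromToVar{pZ}{q}=\sum_{p',Y}\Pcall(p,p'Y)\sum_{p''}\termProbFromToVar{p'Y}{p''}\,\termProbFromToVar{p''Z}{q}$, using the compositional fact that an excursion above a freshly pushed symbol behaves exactly like a sub-run relative to a fresh bottom symbol. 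Let $\vec f$ denote the resulting polynomial map; it has non-negative rational coefficients, degree at most two, and $\O(|\ppdaStates|^2\,|\abStack|)$ unknowns, and is clearly constructible in polynomial time, hence in polynomial space.

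Next I would invoke the classical result of \cite{esparzaMCPPDA-lics} (see also \cite[Thm.~2]{brazdilSurvey} and \cite{DBLP:journals/jacm/EtessamiY09}): the vector of true return probabilities is the \emph{least non-negative} solution of $\vec x=\vec f(\vec x)$. This holds because the obvious under-approximations of $\termProbFromTo{pZ}{q}$ that cap the recursion depth of the returning computation coincide with the Kleene iterates $\vec f^{\,k}(\vec 0)$; these increase monotonically, so their supremum is the least fixed point of the monotone map $\vec f$, and since every non-negative solution of $\vec x=\vec f(\vec x)$ is a pre-fixed point of $\vec f$, the least fixed point is also the least non-negative solution.

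The formula $\Phi$ then asserts precisely ``$\vec x$ is the least non-negative solution'': it conjoins the equations $\termProbFromToVar{pZ}{q}=f_{pZq}(\cdot)$, the bounds $0\leq\termProbFromToVar{pZ}{q}\leq 1$, and a leastness clause. Taking the leastness clause to be $\forall\vec y\,\bigl(\vec y\geq\vec 0\,\wedge\,\vec y=\vec f(\vec y)\,\rightarrow\,\vec x\leq\vec y\bigr)$ already yields a correct formula, but one of shape $\exists\forall$ rather than existential. Following \cite{esparzaMCPPDA-lics,brazdilSurvey}, I would replace the universal clause by a Perron--Frobenius-style witness: after splitting off the ``degenerate'' coordinates whose least value is $0$ or $1$ (these are fixed in $\Phi$ and determined by a qualitative analysis), the least fixed point is characterised among the fixed points in $[0,1]^n$ by the existence of a strictly positive vector $\vec v$ with $J_{\vec f}(\vec x)\,\vec v\leq\vec v$, where $J_{\vec f}$ denotes the Jacobian; this says exactly that the spectral radius of $J_{\vec f}(\vec x)$ on the non-degenerate block is at most one. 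Since the entries of $J_{\vec f}$ and this inequality are polynomial in $\vec x$ and $\vec v$, the clause is purely existential, the extra variables $\vec v$ are polynomially many, and $\Phi$ stays of polynomial size; by construction it holds exactly when each $\termProbFromToVar{pZ}{q}$ equals $\termProbFromTo{pZ}{q}$.

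The hard part is the last step --- eliminating the universal quantifier --- because a monotone polynomial system genuinely may have several solutions in $[0,1]^n$: already the cubic $x=\tfrac12 x^3+\tfrac12$ of \Cref{ex:nonRationalProbs} has the two solutions $x=1$ and $x=(\sqrt 5-1)/2$, only the latter being the return probability, so leastness cannot be dropped. Verifying that the spectral-radius witness indeed singles out the least solution, including the bookkeeping for the degenerate coordinates where $\rho(J_{\vec f})$ may equal $1$, is the technical heart; I would import this verbatim from \cite{esparzaMCPPDA-lics,brazdilSurvey} rather than reprove it.
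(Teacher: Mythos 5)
The first thing to be aware of is that the paper contains no proof of \Cref{thm:retProbsFormula} at all: the lemma is imported wholesale from~\cite[Thm.~2]{brazdilSurvey} (originally due to~\cite{esparzaMCPPDA-lics}), so your sketch necessarily goes beyond what the paper does. The parts you work out yourself are correct: the polynomial system is the right specialization to pVPA (internal, return, and call cases, with the call case decomposed compositionally, which is legitimate here precisely because only pop transitions read the stack), the identification of $\termProbFromTo{pZ}{q}$ with the least non-negative solution via the Kleene iterates $\vec f^{\,k}(\vec 0)$ is the standard argument, and you correctly observe that leastness cannot be dropped (the paper's \Cref{ex:nonRationalProbs} is exactly such a witness). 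Since the paper itself leans entirely on the citation for this lemma, deferring the same kernel to the same sources is in the spirit of the paper.

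The caution concerns the step you yourself call the technical heart. The entire content of the lemma is that minimality can be captured by a \emph{polynomial-size, purely existential} formula; the $\exists\forall$ formula is easy and not what is claimed. Your proposed mechanism --- fix the coordinates whose least value is $0$ or $1$ beforehand and certify minimality on the remaining block by a positive vector $\vec v$ with $J_{\vec f}(\vec x)\,\vec v\leq\vec v$ --- is a plausible route, but in your write-up it is asserted rather than proved, and you should not assume it can be taken ``verbatim'' from~\cite{esparzaMCPPDA-lics,brazdilSurvey}: the published arguments proceed by cleaning and decomposing the system, and the PSPACE comparison results in that line of work can also be obtained without any unique existential definition (deciding $\termProbFromTo{pZ}{q}\geq c$ as the negation of an existential sentence), so the specific certificate you describe is not obviously what the citation contains. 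Two concrete points you would have to settle if you wanted this to stand on its own: (i) identifying the coordinates with value $1$ is \emph{not} a simple qualitative graph analysis (unlike the value-$0$ coordinates); it is itself only known to be decidable via $\ETR$ in polynomial space --- harmless for the ``constructible in polynomial space'' claim, but it should be said; (ii) the soundness of the Jacobian certificate at critical components (spectral radius exactly $1$, e.g.\ coordinates whose equation degenerates to $x=x$, or critical SCCs where the least value is $1$) and its propagation through a non-strongly-connected dependency structure is exactly where the real work lies~\cite{DBLP:journals/jacm/EtessamiY09}. As a reconstruction of the cited result your proposal is acceptable; as a self-contained proof it has a gap at precisely that point.
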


\begin{lemma}
    \label{thm:transProbETR}
    The next-step probabilities (i.e., the transition probabilities of the step chain) in \Cref{fig:transProbs} are expressible in ETR.
\end{lemma}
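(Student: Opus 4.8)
The plan is to bootstrap from \Cref{thm:retProbsFormula}. Let $\Phi$ be the existential $\ETR$-formula furnished there, with free variables $\termProbFromToVar{pZ}{q}$ for $p,q \in \ppdaStates$, $Z \in \abStack$, which is uniquely satisfied precisely when each $\termProbFromToVar{pZ}{q}$ is assigned the actual return probability $\termProbFromTo{pZ}{q}$. I will enlarge $\Phi$ by conjoining further polynomial constraints that first pin down the diverge probabilities and then, one by one, every entry of \Cref{fig:transProbs}. Concretely, fix any $Z_0 \in \abStackNoBot$ (if $\abStackNoBot = \emptyset$ then $\ppda$ has no call states and all step-chain transition probabilities are rational, so this case is immediate). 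For each $q \in \ppdaStates$ introduce a fresh free variable $d_q$ together with the linear constraint $d_q = 1 - \sum_{t \in \ppdaStates} \termProbFromToVar{q Z_0}{t}$. Since $\nonTermProb{q}$ does not depend on the topmost stack symbol (\Cref{def:returnDiverge}), in the unique solution of the accumulated formula $d_q$ equals $\nonTermProb{q}$.

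Next, for each ordered pair $v, v' \in M$ of step-chain states (\Cref{def:stepMC}) introduce a fresh free variable $x_{v \to v'}$, intended to carry the $(v \to v')$-entry of \Cref{fig:transProbs}, and add one defining constraint for it. For the entries that are plain polynomials in the $\termProbFromToVar{\cdot}{\cdot}$ and the rational transition probabilities $\Pcall, \Pint, \Pret$ of $\ppda$ --- this covers the columns headed $q\bot \to r$, $q\bot \to r\bot$ and $q \to r\bot$ in full, and all zero entries --- we simply equate $x_{v \to v'}$ with that polynomial. The only entries featuring a division are the two in the column headed $q \to r$, namely the rows $q \in \ppdaStatesCall$ and $q \in \ppdaStatesInt$, each carrying a factor $\nicefrac{1}{\nonTermProb{q}}$; there we instead clear the denominator, adding for $q \in \ppdaStatesInt$ the constraint $x_{q \to r} \cdot d_q = \Pint(q,r) \cdot d_r$, and analogously for $q \in \ppdaStatesCall$ with the parenthesized sum from the table as the right-hand factor. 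Because $q \in M$ implies $\nonTermProb{q} > 0$ (\Cref{def:stepMC}), i.e.\ $d_q > 0$ in the unique solution, such an equation still determines $x_{q \to r}$ uniquely and to the correct value. Entries marked ``undef.'' correspond to pairs that never occur as transitions of $\stepMC{\ppda}$ and are omitted.

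The desired $\ETR$-formula is then $\Phi \;\wedge\; \bigwedge_{q}\bigl(d_q = 1 - \textstyle\sum_{t} \termProbFromToVar{q Z_0}{t}\bigr) \;\wedge\; \bigwedge_{v,v'}\bigl(\text{defining constraint for }x_{v \to v'}\bigr)$, with all of $\termProbFromToVar{pZ}{q}$, $d_q$ and $x_{v \to v'}$ left as free variables. Each added conjunct is an atomic polynomial equation and $\Phi$ is already existential, so the whole formula lies in $\ETR$; its size is polynomial in $|\ppda|$ plus the size of $\Phi$, hence it is constructible in polynomial space by \Cref{thm:retProbsFormula}. It is uniquely satisfied: $\Phi$ fixes the return variables, the linear constraints fix the $d_q$, and then each $x$-constraint fixes $x_{v\to v'}$ (using $d_q > 0$ in the division cases); under this assignment $x_{v \to v'}$ equals the respective entry of \Cref{fig:transProbs}, which by \Cref{thm:transProbs} is the true next-step probability. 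The only genuine obstacle is the elimination of the divisions by $\nonTermProb{q}$: since $\ETR$ is built over $(\mathbb R, +, \cdot, \leq)$ and has no division, one must multiply through and separately verify that a unique solution survives --- this is exactly where the condition $\nonTermProb{q} > 0$ baked into \Cref{def:stepMC} is used. Everything else is routine bookkeeping.
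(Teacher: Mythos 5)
Your proof is correct and follows essentially the same route as the paper: both bootstrap from \Cref{thm:retProbsFormula} and observe that the table entries are obtained from the return (and hence diverge) probabilities by addition, multiplication, and division, with division handled by clearing the denominator and using that $\nonTermProb{q} > 0$ for the relevant step-chain states. The paper phrases this more abstractly as closure of ETR-expressible numbers under these arithmetic operations, while you spell out the single combined formula explicitly, but the underlying argument is the same.
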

\begin{proof}
    With \Cref{thm:retProbsFormula} it suffices to note that ETR expressible numbers are closed under addition, multiplication and division.
    Let $x,y \in \mathbb{R}$ be expressed by ETR formulae $\Phi(x)$ and $\Phi'(y)$, respectively.
    Then the formula $\Phi''(z) \coloneqq \exists x,y \colon \Phi(x) \land \Phi'(y) \land z = x+y$ where $z$ is a fresh variable expresses the sum of $x$ and $y$, and similar for multiplication.
    For division, we have that $\Phi''(z) \coloneqq \exists x,y \colon \Phi(x) \land \Phi'(y) \land z \cdot y  = x$ expresses $x/y$ provided that $y \neq 0$ (if $y = 0$ then $\Phi''$ does not express a unique real number as it is then either unsatisfiable or trivial).
\end{proof}

    \begin{table}[t]
    \caption{
        The underlying graph of the step chain.
        The condition in each cell is true iff the corresponding transition probability in \Cref{fig:transProbs} is non-zero.
    }
    \label{fig:constructG}
    \centering
    {\renewcommand{\arraystretch}{1.1}
        \begin{adjustbox}{max width=\textwidth}
            \begin{tabular}{l  c  c  c  c }
                \toprule
                & $q \to r$& $q \bot \to r $ & $q \bot \to r \bot$ & $q \to r \bot$\\
                \midrule
                $q \in \ppdaStatesCall ~$
                & \makecell[c]{$ \nonTermProb{r} > 0 \lland ($ \\ $(\exists r',Z \colon \ppdaTransFun_{\callSubscript}(q,r'Z) > 0 \land \termProbFromTo{r'Z}{r}) > 0) $ \\  $\llor \exists Z \colon \ppdaTransFun_{\callSubscript}(q,rZ) > 0) $} \qquad
                & \makecell[c]{$\exists Z \colon \ppdaTransFun_{\callSubscript}(q,rZ) > 0 $ \\ $\lland \nonTermProb{r} > 0$} \qquad
                & \makecell[c]{$\exists r', Z \colon \ppdaTransFun_{\callSubscript}(q,rZ) > 0 $ \\ $\lland \termProbFromTo{r'Z}{r} > 0$} 
                & false\\
                \midrule
                $q \in \ppdaStatesInt$
                & $\nonTermProb{r} > 0 \lland \ppdaTransFun_{\intSubscript}(q, r) >0 $
                & false
                & $\ppdaTransFun_{\intSubscript}(q,r) > 0$
                & false\\
                \midrule
                $q \in \ppdaStatesRet$
                & false
                & false
                & $\ppdaTransFun_{\retSubscript}(q\botStack, r) > 0$
                & false\\\bottomrule
            \end{tabular}
        \end{adjustbox}
    }
\end{table}

We now describe our $\PSPACE$ algorithm to prove \Cref{thm:resStPaDVPA}.

\textbf{Step 1.}
We first construct the product pVPA $\prodAbbr = \automProd{\ppda}{\dvpa}$ with priority function $\prioFun' \colon \ppdaStates \to \Nats$ where $\ppdaStates$ are the states of $\prodAbbr$ as in \Cref{def:prod}.
By \Cref{thm:product} it suffices to compute the probability
\begin{align}
\label{eq:probInProduct}
\PP(\Set{\pi \in \Runs_{\prodAbbr}}{\stepRestr{\run} \in \accSet{\prioFun'}})
\end{align}
that the footprint of a run in the product $\prodAbbr$ satisfies the parity condition induced by $\prioFun'$.
$\prodAbbr$ can be constructed in polynomial time.

\textbf{Step 2.}
We express \eqref{eq:probInProduct} using the step chain $\stepMC{\prodAbbr}$.
Let $M \subseteq \ppdaStates \cup \ppdaStates\bot$ be the states of the step chain $\stepMC{\prodAbbr}$.
Let $\prioFun'' \colon M \to \Nats$ be the extension of $\prioFun'$ to the states of $M$ via $\prioFun''(q) = \prioFun''(q\bot) = \prioFun'(q)$ for all $q \in \ppdaStates$.
That is, $\prioFun''$ induces a parity acceptance set $\accSet{\prioFun''} \subseteq M^\omega$ which is $\omega$-regular and thus measurable.
Let $\run \in \Runs_{\prodAbbr}$.
Clearly, $\stepRestr{\run} \in \accSet{\prioFun'}$ iff $\footprint{\run} \in \accSet{\prioFun''}$.
Thus by \Cref{thm:stepChainSound}, \eqref{eq:probInProduct} is equal to
\[
\PP(\Set{\pi \in \Runs_{\prodAbbr}}{\stepRestr{\run} \in \accSet{\prioFun'}})
\qeq
\PP(\accSet{\prioFun''})
\]
where the right hand side is the probability that a run of the step chain $\stepMC{\prodAbbr}$ satisfies the parity condition induced by $\prioFun''$.
We have thus reduced the problem of computing a stair-parity acceptance probability in the product pPDA $\prodAbbr$ to computing a standard parity acceptance probability in the finite Markov chain $\stepMC{\tilde{\ppda}}$.

The rest of the proof uses standard techniques and is similar to the proof of~\cite[Theorem 5.15]{esparzaMCPPDA-lics}.
The main technical difficulty is that the transition probabilities of $\stepMC{\prodAbbr}$ cannot be written in an explicit numerical form since they are in general algebraic numbers.

\textbf{Step 3.}
We construct the \emph{underlying graph} $G_{\tilde{\ppda}}$ of the step chain $\stepMC{\tilde{\ppda}}$, i.e., we determine the set $M$ of states and include a directed edge between states $v, v' \in M$ iff the probability of transition $v \to v'$ is positive.
\Cref{fig:constructG} gives sufficient and necessary conditions for this in all cases (\Cref{fig:constructG} can be seen as the ``qualitative'' version of \Cref{fig:transProbs}).
The conditions in \Cref{fig:constructG} (as well as constructing the state space of $\stepMC{\tilde{\ppda}}$) require checking if $\nonTermProb{q} > 0$ for states $q \in \ppdaStates$ of $\tilde{\ppda}$.
The latter is equivalent to checking if $\sum_{r \in \ppdaStates} \termProbFromTo{qZ}{r} < 1$, which is reducible to ETR by \Cref{thm:retProbsFormula} and hence decidable in $\PSPACE$.

\textbf{Step 4.}
We determine the bottom strongly connected components (BSCC) of $G_{\tilde{\ppda}}$ from the previous step by a standard (efficient) graph analysis.
We mark the BSCCs $B \subseteq M$ such that $\min_{v \in B} \prioFun''(v)$ is even as ``good'', the others as ``bad''.
It is well-known that in the finite Markov chain $\stepMC{\prodAbbr}$ it holds that
\[
    \condProb{\run \in \accSet{\prioFun''}}{\run \text{ reaches a good BSCC}} =1
\]
due to the Long-Run Theorem of finite Markov chains: Each state of a BSCC is visited infinitely often almost-surely, provided that this BSCC is reached at all.
Moreover, if a run $\run$ reaches a bad BSCC, than the probability to satisfy the parity condition is zero and thus
\begin{align*}
    \PP(\run \in \accSet{\prioFun''})
    ~=~
    & \condProb{\run \in \accSet{\prioFun''}}{\run \text{ reaches a good BSCC}} \cdot \PP(\run \text{ reaches a good BSCC}) \\
    ~=~
    & \PP(\run \text{ reaches a good BSCC})
\end{align*}
Thus it only remains to compute the probability to reach a good BSCC in $\stepMC{\prodAbbr}$.

\textbf{Step 5.}
We use the previous step to classify the states $M$ of the step chain $\stepMC{\prodAbbr}$ into three categories: $M_{=0}$ contains all states from which no good BSCC is reachable in the graph $G_{\tilde{\ppda}}$, $M_{=1}$ contains all good BSCCs, and $M_{?}$ contains all other states.
We recall that the probabilities to reach $M_{=1}$ are the \emph{unique} solution of the following linear equation system (see, e.g., \cite[Ch.\ 10]{MCbible}):
\begin{align*}
    x_v &= 0 & \text{ if } v \in M_{=0} \\
    \land \quad x_v &= 1 & \text{ if } v \in M_{=1} \\
    \land \quad x_v &= \sum_{v \xrightarrow{p} v'} px_{v'}  & \text{ if } v \in M_{?}~.
\end{align*}
We can treat the vectors of probabilities $\vec{p}$ and the variables $\vec{x}$ in this equation system as free variables of an ETR formula $\mathcal{R}(\vec{p},\vec{x})$.
By \Cref{thm:transProbETR}, there is an ETR formula $\Phi(\vec{p})$ expressing $\vec{p}$.
The ETR formula
\[
    \exists \vec{p}, \vec{x} \colon \quad \Phi(\vec{p}) \qland \mathcal{R}(\vec{p},\vec{x}) \qland x_{v_0} \geq \theta
\]
is thus true iff the probability to reach a good BSCC from initial state $v_0 \in M$ is at least $\theta$.
Truth of this formula can be decided in $\PSPACE$, which concludes the proof of \Cref{thm:resStPaDVPA}.
\qed

\begin{figure}[t]
    \centering
    \begin{adjustbox}{max height=100mm}
        \begin{tikzpicture}[myautomaton,bend angle=10,node distance=18mm and 30mm,on grid]
            \node[state, initial,inner sep=1pt,minimum size=9mm] (taus0) {$\tau s_0$};
            \node[retState, right=of taus0,inner sep=3pt] (rs1) {$r s_1$};
            \node[callState, below=of taus0,minimum size=7.5mm] (cs1) {$c s_1$};
            \node[callState, below=of cs1,minimum size=7.5mm] (cs0) {$c s_0$};
            \node[retState, right=of cs0,inner sep=3pt] (rs0) {$r s_0$};
            
            \draw[->] (taus0) -- node {$\nicefrac 1 3$} (rs1);
            \draw[->] (taus0) -- node[left] {$\nicefrac 2 3$} (cs1);
            \draw[->] (rs1) edge[loop right] node[below=2mm] {$\langle *, *\rangle, \nicefrac 1 3$} (rs1);
            \draw[->] (rs1) -- node[sloped] {$\langle *, *\rangle, \nicefrac 2 3$} (cs1);
            \draw[->] (cs1) -- node[left] {$\nicefrac 2 3, \langle Z, Z\rangle$} (cs0);
            \draw[->] (cs1) edge[bend left] node[sloped] {$\nicefrac 1 3, \langle Z, Z\rangle$} (rs0);
            \draw[->] (cs0) edge[loop left] node[below=2mm,xshift=-3mm] {$\nicefrac 2 3, \langle Z, Z\rangle$} (cs0);
            \draw[->] (cs0) -- node[below] {$\nicefrac 1 3, \langle Z, Z\rangle$} (rs0);
            \draw[->] (rs0) edge[bend left] node[below,sloped,xshift=-2mm] {$\langle *, *\rangle, \nicefrac 2 3$} (cs1);
            \draw[->] (rs0) -- node[right] {$\langle *, *\rangle, \nicefrac 1 3$} (rs1);
        \end{tikzpicture}
        \hspace{1cm}
        \begin{tikzpicture}[myautomaton,bend angle=10,node distance=18mm and 30mm,on grid]
            \tikzstyle{myState} = [state,inner sep=1pt,minimum size=9mm]
            
            \node[myState, initial] (taus0bot) {$\tau s_0\bot$};
            \node[myState, right=of taus0bot] (rs1bot) {$r s_1\bot$};
            \node[myState, below=of taus0bot] (cs1bot) {$c s_1\bot$};
            \node[myState, right=of cs1bot] (cs1) {$c s_1$};
            \node[myState, below=of cs1] (cs0) {$c s_0$};
            
            \draw[dashed] (2.25,-1) -- (3.75,-1) -- (3.75, -4.25) -- (2.25,-4.25) -- cycle;
            
            \draw[->] (taus0bot) -- node {$\nicefrac 1 3$} (rs1bot);
            \draw[->] (taus0bot) -- node[left] {$\nicefrac 2 3$} (cs1bot);
            \draw[->] (rs1bot) edge[loop right, looseness=5] node[below=2pt] {$\nicefrac 1 3$} (rs1bot);
            \draw[->] (rs1bot) edge[bend left] node[below] {$\nicefrac 2 3$} (cs1bot);
            \draw[->] (cs1bot) edge[loop left, looseness=5] node[below=2pt] {$\nicefrac 1 3$} (cs1bot);
            \draw[->] (cs1bot) edge[bend left] node[above,near start] {$\nicefrac 1 6$} (rs1bot);
            \draw[->] (cs1bot) -- node[above, near start] {$\nicefrac 1 2$} (cs0);
            \draw[->] (cs1) edge[loop right] node {$\nicefrac 1 3$} (cs1);
            \draw[->] (cs1) edge[bend left] node {$\nicefrac 2 3$} (cs0);
            \draw[->] (cs0) edge[loop right] node {$\nicefrac 2 3$} (cs0);
            \draw[->] (cs0) edge[bend left] node {$\nicefrac 1 3$} (cs1);
        \end{tikzpicture}
    \end{adjustbox}
    \caption{
        Left: The product of the pVPA from \Cref{fig:stepMcExampleAdvanced} (left) and the DVPA from \Cref{fig:stepsExample} on page \pageref{fig:stepsExample}.
        Right: Its step chain according to \Cref{def:stepMC}.
        The dashed region is the only BSCC.
        It violates the parity condition $\prioFun(s_0) = 1$ and $\prioFun(s_1) = 2$ inherited from the DVPA (see \Cref{ex:repbdd} on page \pageref{ex:repbdd}) since every run reaching the BSCC visits $cs_0$ infinitely often with probability $1$.
        Only reachable states are depicted.
        $*$ is a placeholder that stands for an arbitrary stack symbol.
    }
    \label{fig:productAndStepChain}
\end{figure}

\subsection{Implications for Probabilistic One-counter Automata}

A probabilistic visibly \emph{one-counter automaton} (pVOC) is the special case of a pVPA with unary stack alphabet, i.e., $|\abStackNoBot| = 1$.
For example, the pVPA in \Cref{fig:stepMcExampleAdvanced} (left) is a pVOC.
For many problems, better complexity bounds are known for pVOC than for the general case.
In particular, $\nonTermProb{p} >_{?} 0$, i.e., the question whether a pVOC started in state $p$ never reaches counter value (or stack height) zero with positive probability, can be decided in $\PTIME$~\cite[Theorem~4]{brazdilSurvey}.
We can exploit this to improve \Cref{thm:resStPaDVPA} in the pVOC case:

\begin{corollary}
    \label{thm:resStPaDVPAqual}
    Let $\ppda$ be a pVOC and $\dvpa$ be a stair-parity DVPA over pushdown alphabet $\ab$.
    The problem $\PP(\{\run \in \Runs_\ppda \mid \labeling(\run) \in \mathcal{L}(\dvpa)\}) =_? 1$ is decidable in $\PTIME$.
\end{corollary}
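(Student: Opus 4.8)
The plan is to re-run the five-step procedure from the proof of \Cref{thm:resStPaDVPA}, but to observe that for the almost-sure ($=_{?} 1$) question and a pVOC $\ppda$ every step runs in polynomial time. First I would form the product $\automProd{\ppda}{\dvpa}$ as in \Cref{def:prod}, which takes polynomial time. The key observation is that, because $\dvpa$ is \emph{deterministic} and its stack operations are synchronised with those of $\ppda$, the map $\pi$ sending a product configuration $\pstate{q}{s}\psymb{Z_1}{Y_1}\cdots\psymb{\botStack}{\botStack}$ to the $\ppda$-configuration $q Z_1 \cdots \botStack$ is a transition-preserving bijection: out of any product configuration, every transition projects onto exactly one transition of $\ppda$ carrying the same probability (the $\dvpa$-component is merely transported deterministically). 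Hence $\pi$ is measure preserving, and every event that depends only on the sequence of stack heights of a run has the same probability in $\automProd{\ppda}{\dvpa}$ as in $\ppda$. ``Never popping the topmost symbol'' is such an event, so $\nonTermProb{\pstate{q}{s}} = \nonTermProb{q}$, where the left-hand side refers to $\automProd{\ppda}{\dvpa}$ and the right-hand side to the pVOC $\ppda$. Thus, even though $\automProd{\ppda}{\dvpa}$ is in general \emph{not} a pVOC (its stack alphabet is $\abStack \times \abStack'$), deciding $\nonTermProb{v} >_{?} 0$ for a state $v$ of $\automProd{\ppda}{\dvpa}$ reduces to deciding $\nonTermProb{q} >_{?} 0$ for a state $q$ of the pVOC $\ppda$, which is in $\PTIME$ by \cite[Thm.~4]{brazdilSurvey}.

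Next I would construct the underlying graph $G$ of the step chain $\stepMC{\automProd{\ppda}{\dvpa}}$ via the conditions of \Cref{fig:constructG}. Besides the positivity tests $\nonTermProb{\cdot} >_{?} 0$ just discussed, these conditions only require (i) checking whether individual transition probabilities of $\automProd{\ppda}{\dvpa}$ are positive, which is immediate, and (ii) checking whether return probabilities $\termProbFromTo{r'Z}{r}$ of $\automProd{\ppda}{\dvpa}$ are positive, which is an ordinary reachability question in the configuration graph of the pushdown system $\automProd{\ppda}{\dvpa}$ and hence solvable in polynomial time by the standard saturation method. Since the vertex set of $\stepMC{\automProd{\ppda}{\dvpa}}$ is a polynomial-size subset of the product's (extended) state space, $G$ can be built in $\PTIME$.

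It then remains to evaluate the parity condition \emph{qualitatively} on the finite Markov chain $\stepMC{\automProd{\ppda}{\dvpa}}$, and here the point is that for the almost-sure question one never needs the (possibly irrational) transition probabilities themselves, only the graph $G$. Using a standard algorithm I would compute the BSCCs of $G$, mark a BSCC $B$ as \emph{good} if $\min_{v \in B}\prioFun''(v)$ is even and as \emph{bad} otherwise (where $\prioFun''$ is the easily computable priority function on step-chain states inherited from $\dvpa$ through \Cref{def:prod} and \Cref{def:stepMC}), and test whether some bad BSCC is reachable from the initial state of $\stepMC{\automProd{\ppda}{\dvpa}}$ in $G$. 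By \Cref{thm:product} together with \Cref{thm:stepChainSound}, the value $\PP(\Set{\run \in \Runs_\ppda}{\labeling(\run) \in \lang{\dvpa}})$ equals $\PP_{\stepMC{\automProd{\ppda}{\dvpa}}}(\accSet{\prioFun''})$, and in a finite Markov chain this probability is $1$ iff a good BSCC is reached almost surely, i.e.\ iff \emph{no} bad BSCC is reachable from the initial state in $G$. As all steps run in polynomial time, this decides the problem in $\PTIME$.

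The step I expect to be the main obstacle is the first one: making the identity $\nonTermProb{\pstate{q}{s}} = \nonTermProb{q}$ precise enough to legitimately invoke the $\PTIME$ result of \cite{brazdilSurvey}, which is stated for genuine pVOC and does \emph{not} apply to $\automProd{\ppda}{\dvpa}$ directly. Everything else is a routine specialisation of the proof of \Cref{thm:resStPaDVPA}, in which each ETR query is replaced by a polynomial-time graph computation and the quantitative reachability analysis of the last step is replaced by qualitative reachability of bad BSCCs.
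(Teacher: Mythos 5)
Your proposal is correct and follows essentially the same route as the paper: build the product, use the polynomial-time pVOC test for $\nonTermProb{q} >_{?} 0$ together with qualitative pushdown reachability to construct the underlying graph of the step chain, and then decide almost-sure acceptance by checking that no bad BSCC is reachable. The identity $\nonTermProb{\pstate{q}{s}} = \nonTermProb{q}$ that you flag as the main obstacle is exactly the observation the paper itself relies on (it is made explicit in the proof of \Cref{thm:resNVPAQual}, where it is justified by noting that the deterministic $\dvpa$ merely observes the runs of $\ppda$), so your more detailed treatment of it is sound and fills in what the paper's terse corollary proof leaves implicit.
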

\begin{proof}
    The key observation is that, since we can efficiently decide $\nonTermProb{p} >_{?} 0$, we can efficiently (in polynomial time) construct the underlying graph $G_{\automProd{\ppda}{\dvpa}}$ of the step chain of $\automProd{\ppda}{\dvpa}$ (as in the proof of \Cref{thm:resStPaDVPA}), and then apply polynomial-time graph analysis algorithms to check if only good BSCCs are reachable in $G_{\automProd{\ppda}{\dvpa}}$.
\end{proof}

\Cref{thm:resStPaDVPAqual} implies that there exist efficient algorithms for many properties of pVOC-expressible random walks on $\Nats$.
In fact, almost-sure satisfaction of each \emph{fixed} visibly-pushdown property can be decided in $\PTIME$.
For instance, using the DVPA from \Cref{fig:stepsExample} we can decide if a random walk is repeatedly bounded with probability $1$.


\section{Model Checking against Büchi VPA and CaRet}
\label{sec:modelCheckingNVPACaRet}

With \Cref{thm:StPaDVPA,thm:resStPaDVPA} it follows immediately that quantitative model checking of pVPA against non-deterministic Büchi VPA is decidable in $\EXPSPACE$.
We can improve the complexity in the qualitative case:
\begin{restatable}{theorem}{thmresNVPAQual}
    \label{thm:resNVPAQual}
    Let $\ppda$ be a pVPA and $\npda$ be a (non-deterministic) Büchi VPA over the same pushdown alphabet.
    The problem $\PP(\{\run \in \Runs_\ppda \mid \labeling(\run) \in \mathcal{L}(\npda)\}) =_? 1$ is $\EXPTIME$-complete.
\end{restatable}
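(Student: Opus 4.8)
The argument splits into an $\EXPTIME$ upper bound and a matching lower bound. For the upper bound, the plan is to reuse the pipeline behind \Cref{thm:resStPaDVPA}, observing that in the \emph{qualitative} case the step chain never has to be treated numerically---only its \emph{underlying graph} is needed---which is exactly what brings the complexity down from $\EXPSPACE$ to $\EXPTIME$. Concretely, I would first invoke \Cref{thm:StPaDVPA} to turn the Büchi VPA $\npda$ into an equivalent deterministic stair-parity DVPA $\dvpa$ with $2^{\O(|\npdaStates|^2)}$ states, computed in exponential time. Next I form the product $\prodAbbr = \automProd{\ppda}{\dvpa}$ (\Cref{def:prod}); its size is polynomial in $|\ppda|\cdot|\dvpa|$, hence exponential in the input. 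By \Cref{thm:product} together with \Cref{thm:stepChainSound}, the probability $\PP(\Set{\run \in \Runs_\ppda}{\labeling(\run)\in\lang{\npda}})$ equals the probability that a run of the finite step chain $\stepMC{\prodAbbr}$ satisfies the standard parity condition inherited from $\dvpa$, and---exactly as in Steps 3--5 of the proof of \Cref{thm:resStPaDVPA} and in \Cref{thm:resStPaDVPAqual}---this probability equals $1$ if and only if every bottom strongly connected component (BSCC) reachable in the underlying graph of $\stepMC{\prodAbbr}$ is ``good'', i.e.\ has even minimal priority.

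The crucial point for the complexity is that, unlike in \Cref{thm:resStPaDVPA}, we do not need the actual (in general irrational) transition probabilities of $\stepMC{\prodAbbr}$ at all: according to \Cref{fig:constructG}, constructing the underlying graph of $\stepMC{\prodAbbr}$ only requires deciding, for states $q,r$ and stack symbols $Z$ of $\prodAbbr$, whether $\nonTermProb{q} > 0$ (equivalently $\sum_{r}\termProbFromTo{qZ}{r} < 1$) and whether individual return probabilities $\termProbFromTo{r'Z}{r}$ are strictly positive. These are \emph{purely qualitative} questions about the pPDA $\prodAbbr$, and both are decidable in time polynomial in $|\prodAbbr|$ by classical results on probabilistic pushdown automata and recursive Markov chains~\cite{esparzaMCPPDA-lics,brazdilSurvey,DBLP:journals/jacm/EtessamiY09}. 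Consequently the whole post-determinization phase---building the graph of $\stepMC{\prodAbbr}$, computing its BSCC decomposition, and a reachability analysis towards the good BSCCs---runs in time polynomial in $|\prodAbbr|$, i.e.\ exponential in $|\ppda|+|\npda|$. (Arguing via $\ETR$ and $\PSPACE$ as in \Cref{thm:resStPaDVPA} would only yield $\EXPSPACE$ here, since $\prodAbbr$ is already of exponential size.)

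For the $\EXPTIME$ lower bound, the plan is a polynomial-time reduction from a standard $\EXPTIME$-complete problem about \emph{alternating} (visibly) pushdown automata---for instance the acceptance problem of an alternating nested-word automaton, equivalently the winner problem of a pushdown safety game; note that universality of non-deterministic Büchi VPA is already $\EXPTIME$-complete~\cite{vpl}, reflecting the same exponential blow-up we pay in the determinization step above. The idea is to let the two sources of ``branching'' in our setting play the roles of the two players: the \emph{probabilistic} choices of a pVPA $\ppda$ encode the universal ($\forall$) player's moves, while the \emph{non-determinism} of a Büchi VPA $\npda$---which reads the trace of $\ppda$, thereby seeing the $\forall$-moves, and resolves the $\exists$-moves---encodes the existential ($\exists$) player, with the game's winning condition captured by the acceptance condition of $\npda$. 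Under such an encoding, ``$\exists$ wins'' should correspond exactly to $\PP(\Set{\run\in\Runs_\ppda}{\labeling(\run)\in\lang{\npda}}) = 1$. I expect the main obstacle to be twofold. First, bridging \emph{almost-surely} and \emph{surely}: this should work because the game can be taken to have a safety-type winning condition, so that any losing $\forall$-strategy is already exposed on a \emph{finite} prefix of a run, which is an event of strictly positive probability. Second, realising the shared pushdown stack of the alternating automaton within the visibility discipline of \Cref{def:pVPA,def:vpa}, where the \emph{type} of each stack operation is forced to be visible in the trace; this requires a careful encoding (e.g.\ splitting each move into visible micro-steps) so that both ``players'' agree on the stack action.
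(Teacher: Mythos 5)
Your upper-bound pipeline (determinize via \Cref{thm:StPaDVPA}, take the product, pass to the step chain, and decide almost-sure acceptance by a purely graph-theoretic BSCC analysis) is the paper's, but it contains one genuine gap at exactly the point where the paper has to work hardest: you assert that deciding $\nonTermProb{q} > 0$ for states of the exponential-size product $\automProd{\ppda}{\dvpa}$ is possible in time polynomial in the product ``by classical results''. No such result is available for general pPDA. Positivity of the diverge probability is \emph{not} a qualitative/graph question: it is equivalent to $\sum_{r}\termProbFromTo{qZ}{r} < 1$, and whether this holds depends on the numerical transition probabilities (a biased one-counter random walk has positive diverge probability iff the upward bias exceeds $\nicefrac{1}{2}$, on an unchanged underlying graph). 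The best bound used in this paper, and in the cited literature~\cite{esparzaMCPPDA-lics,brazdilSurvey,DBLP:journals/jacm/EtessamiY09}, is $\PSPACE$ via the $\ETR$ encoding of \Cref{thm:retProbsFormula}; polynomial time is known only for special cases such as one-counter automata, which is precisely why \Cref{thm:resStPaDVPAqual} is stated separately for pVOC. (Your parallel claim for $\termProbFromTo{r'Z}{r} >_{?} 0$ is fine: that \emph{is} qualitative reachability in the underlying non-probabilistic PDA and is polynomial in the product.) Since you yourself note that running the $\ETR$/$\PSPACE$ procedure on the exponential-size product only yields $\EXPSPACE$, your argument as written does not establish the $\EXPTIME$ upper bound. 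The missing idea is the paper's key observation that $\nonTermProb{(q,s)} = \nonTermProb{q}$ for every product state $(q,s)$, because the deterministic automaton $\dvpa$ merely observes the run of $\ppda$ and cannot change its diverge behaviour; hence the set $\{q \mid \nonTermProb{q} > 0\}$ can be computed on $\ppda$ alone, in $\PSPACE \subseteq \EXPTIME$ in $|\ppda|$ (polynomial in the input), and then lifted to the product, after which the graph of the step chain, the BSCC decomposition, and the reachability check are indeed polynomial in $|\automProd{\ppda}{\dvpa}|$.

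For the lower bound, the paper does not construct a new reduction: $\EXPTIME$-hardness is inherited from qualitative model checking of pPDA against ordinary, non-pushdown Büchi automata~\cite[Thm.~8]{DBLP:conf/tacas/EtessamiY05}, which is a special case of the present problem. Your sketched reduction from alternating pushdown/nested-word acceptance (probabilistic choice playing $\forall$, the VPA's non-determinism playing $\exists$) is a reasonable strategy in spirit, but as you acknowledge it leaves the two delicate points---bridging almost-sure with sure winning and making both ``players'' respect the visibility discipline---unresolved, so it is a plan rather than a proof; citing the known hardness result is both sufficient and simpler.
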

\begin{proof}
    The lower bound is due to \cite[Theorem~8]{DBLP:conf/tacas/EtessamiY05} and already holds for non-pushdown Büchi automata.
    We now describe an $\EXPTIME$ decision procedure:
    \begin{itemize}
        \item We first determinize $\npda$ using \Cref{thm:StPaDVPA} which is possible in time exponential in $|\npda|$.
        Let $\dvpa$ be the resulting stair-parity DVPA and consider the product $\automProd{\ppda}{\dvpa}$ (\Cref{def:prod}).
        Note that the product can be constructed in polynomial time in $|\dvpa|$ and $|\ppda|$, and thus in exponential time in the overall size of the input.
        \item The crucial observation for the next step is that $\nonTermProb{q} = \nonTermProb{(q,s)}$ for all states $(q,s)$ of $\automProd{\ppda}{\dvpa}$.
        This holds because by definition of the product, $\dvpa$ merely \emph{observes} the runs of $\ppda$, and thus the diverge probabilities of $\automProd{\ppda}{\dvpa}$ and $\ppda$ are essentially the same.
        We compute the set $Div_{\ppda} = \{q \mid \nonTermProb{q} > 0\} \subseteq  Q$, where $Q$ are the states of $\ppda$, in \emph{exponential time in $|\ppda|$} using a $\PSPACE$ decision procedure for the ETR~\cite{esparzaMCPPDA-lics}.
        Note that computing $Div_{\automProd{\ppda}{\dvpa}} = \{(q,s) \mid \nonTermProb{(q,s)} > 0\}$ directly would take \emph{doubly-exponential} time in $|\npda|$; the proposed ``optimization'' is thus essential for obtaining the $\EXPTIME$ upper bound.
        \item We now determine the set of triples $Ret_{\automProd{\ppda}{\dvpa}} = \{(q,Z,p) \mid \termProbFromTo{qZ}{p} > 0 \}$ in $\automProd{\ppda}{\dvpa}$.
        Unlike the diverge probabilities, this set can be computed in \emph{polynomial} time in the size of $\automProd{\ppda}{\dvpa}$ (hence exponential in the size of the input) because we may disregard the exact transition probabilities and conduct a standard reachability analysis in a non-deterministic pushdown automaton~\cite{DBLP:reference/mc/AlurBE18}, also see \cite[p.~136]{brazdilSurvey}.
        \item The next step is to construct the underlying \emph{graph} $G_{\automProd{\ppda}{\dvpa}}$ of the step chain $\stepMC{\automProd{\ppda}{\dvpa}}$, i.e., the directed graph that has the same vertices as $\stepMC{\automProd{\ppda}{\dvpa}}$ and includes an edge $(u,v)$ iff the 1-step transition probability from $u$ to $v$ is positive in the Markov chain.
        This can be done in polynomial time in $|\ppda \times \dvpa|$ using the sets $Div_{\ppda}$ and $Ret_{\automProd{\ppda}{\dvpa}}$ defined above and \Cref{fig:constructG}.
        \item The final step is, as in \Cref{thm:resStPaDVPA}, to determine the BSCCs of $G_{\automProd{\ppda}{\dvpa}}$, classify them as good or bad according to whether they satisfy the (standard) parity condition inherited from $\dvpa$, and then check if there is a bad BSCC reachable from the initial state.
        All these steps can be done in polynomial time in $|\automProd{\ppda}{\dvpa}|$.
        \qedhere        
    \end{itemize}
\end{proof}    

In the above result, membership in $\EXPTIME$ relies on the fact that one can construct the underlying \emph{graph} of a step chain $\stepMC{\automProd{\ppda}{\dvpa}}$ in time exponential in the size of $\ppda$ but \emph{polynomial} in the size of $\dvpa$.
$\EXPTIME$-hardness follows from~\cite[Theorem~8]{DBLP:conf/tacas/EtessamiY05}.
In fact, qualitative model checking of pPDA against \emph{non-pushdown} Büchi automata is also $\EXPTIME$-complete~\cite{DBLP:conf/tacas/EtessamiY05}.
With \Cref{thm:caretToNBVPA,thm:StPaDVPA,thm:resStPaDVPA,thm:resNVPAQual} we immediately obtain the following complexity results for CaRet model checking:
\begin{theorem}
    \label{thm:resCaRet}
    The quantitative and qualitative probabilistic CaRet model checking problems (\Cref{def:caretDecisionProblems}) are decidable in $\TwoEXPSPACE$ and $\TwoEXPTIME$, respectively.
\end{theorem}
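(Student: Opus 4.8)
The plan is to chain together the reductions from \Cref{fig:reductionsOverview} and keep careful track of the size blow-up incurred at each step. Fix a CaRet formula $\varphi$ over $\AP$, a pVPA $\ppda$ with labels from $\ab = 2^\AP \times \typeset$, and a threshold $\theta \in [0,1] \cap \Rats$. By \Cref{thm:caretToNBVPA} we may first construct a non-deterministic Büchi VPA $\npda$ with $\lang{\npda} = \lang{\varphi}$ whose size is $2^{\O(|\varphi|)}$, i.e., \emph{singly} exponential in the length of the formula. This already reduces both probabilistic CaRet model checking problems (\Cref{def:caretDecisionProblems}) to the corresponding model checking problems against the Büchi VPA $\npda$, for the input pair $(\ppda, \npda)$.

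For the \emph{qualitative} case I would then invoke \Cref{thm:resNVPAQual} directly on $(\ppda, \npda)$: it decides $\PP(\{\run \in \Runs_\ppda \mid \labeling(\run) \in \lang{\npda}\}) =_? 1$ in time exponential in $|\ppda| + |\npda|$. Since $|\npda| = 2^{\O(|\varphi|)}$, the total running time is $2^{2^{\O(|\varphi|)}} \cdot \mathrm{poly}(|\ppda|)$, so the qualitative problem is in $\TwoEXPTIME$. Here it is essential \emph{not} to first determinize $\npda$ via \Cref{thm:StPaDVPA} and then apply \Cref{thm:resStPaDVPA}: this route would produce a DVPA of \emph{doubly} exponential size, and the subsequent $\PSPACE$ procedure would then run in triply-exponential time. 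The internal structure of the proof of \Cref{thm:resNVPAQual}---in particular the observation $\nonTermProb{q} = \nonTermProb{(q,s)}$, which confines the costly $\ETR$ invocation to the size of $\ppda$ alone---is precisely what keeps this bound at $\TwoEXPTIME$.

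For the \emph{quantitative} case I would combine \Cref{thm:caretToNBVPA,thm:StPaDVPA,thm:resStPaDVPA}. Applying \Cref{thm:StPaDVPA} to $\npda$ yields a deterministic stair-parity DVPA $\dvpa$ with $2^{\O(|\npda|^2)} = 2^{2^{\O(|\varphi|)}}$ states, i.e., doubly exponential in $|\varphi|$, and with $\lang{\dvpa} = \lang{\npda} = \lang{\varphi}$. Hence quantitative CaRet model checking of $(\ppda, \varphi, \theta)$ coincides with deciding $\PP(\{\run \in \Runs_\ppda \mid \labeling(\run) \in \lang{\dvpa}\}) \geq_{?} \theta$, which by \Cref{thm:resStPaDVPA} is solvable in space polynomial in $|\ppda| + |\dvpa|$, hence in space $2^{2^{\O(|\varphi|)}} \cdot \mathrm{poly}(|\ppda|)$. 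Thus the quantitative problem is in $\TwoEXPSPACE$. (Equivalently, one may apply the $\EXPSPACE$ bound for quantitative Büchi-VPA model checking noted at the beginning of \Cref{sec:modelCheckingNVPACaRet} to the singly-exponential automaton $\npda$.)

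No genuine mathematical obstacle arises: the argument is a composition of previously established constructions together with bookkeeping of the nested exponentials. The only point that requires care---and the one I would emphasize---is choosing the right decomposition per case: route the quantitative bound through the determinized stair-parity DVPA and \Cref{thm:resStPaDVPA}, but keep the qualitative argument at the level of the non-deterministic $\npda$ and \Cref{thm:resNVPAQual}, so as to avoid paying for a superfluous third exponential.
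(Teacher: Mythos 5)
Your proposal is correct and follows exactly the paper's route: translate the CaRet formula to a singly-exponential Büchi VPA via \Cref{thm:caretToNBVPA}, then apply \Cref{thm:resNVPAQual} directly for the qualitative $\TwoEXPTIME$ bound and \Cref{thm:StPaDVPA} plus \Cref{thm:resStPaDVPA} (equivalently, the $\EXPSPACE$ bound for Büchi VPA) for the quantitative $\TwoEXPSPACE$ bound. The paper states this as an immediate consequence of the same theorems; your only addition is to spell out the exponential bookkeeping and the (correct) observation that the qualitative case must avoid determinization, which is indeed why \Cref{thm:resNVPAQual} is invoked there.
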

Both problems are known to be $\EXPTIME$-hard~\cite{DBLP:conf/qest/YannakakisE05}.


\section{Conclusion}
\label{sec:conclusion}

We have presented the first decidability result for model checking probabilistic pushdown automata---an operational model of recursive discrete probabilistic programs---against CaRet, or more generally, against the class of $\omega$-VPL.
We heavily rely on the determinization procedure from~\cite[Theorem~1]{vpGames} and the notion of a step chain used in previous works~\cite{esparzaMCPPDA-lics, esparzaMCPPDA}.
These two constructions turn out to be a natural match.

We conjecture that the upper bounds from \Cref{thm:resCaRet} are not tight due to the exponential blow up incurred by applying the VPA determinization from~\cite[Theorem~1]{vpGames}.
Future work is thus to investigate whether the doubly-exponential complexity can be lowered to singly-exponential, e.g., by generalizing the automata-less algorithm from~\cite{DBLP:conf/qest/YannakakisE05}.
Another open question is whether existing results~\cite{DBLP:journals/jacm/EtessamiY09,DBLP:journals/siamcomp/EsparzaKL10,DBLP:journals/jacm/StewartEY15} for \emph{approximately} computing the probabilities $\termProbFromTo{qZ}{r}$ can be used for approximate quantitative CaRet and $\omega$-VPL model checking.
We also plan to extend our recent work on \emph{certificates}~\cite{DBLP:conf/tacas/WinklerK23,DBLP:conf/lics/WinklerK23} to temporal and other logical properties.
Such certificates can be approximate as well.

Other future work includes exploring to what extent algorithms for probabilistic CTL can be generalized to the branching-time variant of CaReT~\cite{DBLP:conf/spin/GutsfeldMN18},
considering more expressive logics such as visibly LTL~\cite{DBLP:journals/jar/BozzelliS18} or OPTL~\cite{DBLP:journals/tcs/ChiariMP20},
and studying the interplay of \emph{conditioning} and recursion~\cite{DBLP:conf/starai/StuhlmullerG12} through the lens of pPDA.

\paragraph{Acknowledgement.}
The authors thank Christof Löding for his pointer to stair-parity VPA, and the anonymous reviewers for their useful suggestions and feedback.
We also thank Darion Haase for helpful discussions regarding the proof of \Cref{thm:stepChainSound}.

%
%
\bibliographystyle{alphaurl}
\bibliography{references-lmcs22}

\end{document}